\let\newfloat\newfloat@ltx
\crefname{section}{Sec.}{Secs.}
\Crefname{section}{Section}{Sections}
\crefname{figure}{Fig.}{Figs.}
\Crefname{figure}{Figure}{Figures}
\crefname{appendix}{App.}{Apps.}
\Crefname{appendix}{Appendix}{Appendices}
\crefname{equation}{Eq.}{Eqs.}
\Crefname{equation}{Equation}{Equations}
\crefname{definition}{Def.}{Defs.}
\Crefname{definition}{Definition}{Definitions}
\crefname{theorem}{Thm}{Thms.}
\Crefname{theorem}{Theorem}{Theorems}
\crefname{corollary}{Corollary}{Cors.}
\crefname{remark}{Remark}{Remarks}
\crefname{proposition}{Prop.}{Props.}
\newcolumntype{L}[1]{>{\raggedright\let\newline\\\arraybackslash\hspace{0pt}}m{#1}}
\newcolumntype{C}[1]{>{\centering\let\newline\\\arraybackslash\hspace{0pt}}m{#1}}
\newcolumntype{R}[1]{>{\raggedleft\let\newline\\\arraybackslash\hspace{0pt}}m{#1}}
\let\scshape\relax 
\DeclareRobustCommand\scshape{%
  \not@math@alphabet\scshape\relax
  \ifnum\pdf@strcmp{\f@family}{\familydefault}=\z@
    \fontfamily{qbk}%
  \fi
  \fontshape\scdefault\selectfont}
\DeclareMathAlphabet{\dutchcal}{U}{dutchcal}{m}{n}
\SetMathAlphabet{\dutchcal}{bold}{U}{dutchcal}{b}{n}
\DeclareMathAlphabet{\dutchbcal} {U}{dutchcal}{b}{n}
\newenvironment{sloppypar*}{\sloppy\ignorespaces}{\par}
\newcommand {\sbra} [1] {\langle #1 |}
\newcommand {\sket} [1] {| #1 \rangle}
\newcommand{\me}{\mathrm{e}}
\newcommand {\unit} {\mathds{1}}
\newcommand{\fex}{f}
\newcommand{\Fex}{F}
\newcommand{\nq}{\mathsf{n}_{\mathsf{q}}}
\newcommand{\NLat}{{\mathsf{N_\Lambda}}}
\newcommand{\OO}{\mathcal{O}}
\newcommand{\dims}{\mathsf{d}}
\newcommand{\FT}{\widehat{\mathcal{F}}}
\newcommand{\Hgate}{\mathrm{H}}
\newcommand{\Xgate}{\mathrm{X}}
\newcommand{\Zgate}{\mathrm{Z}}
\newtheorem{theorem}{Theorem}
\newtheorem{lemma}[theorem]{Lemma}
\theoremstyle{definition}
\newcommand{\PREPARE}{{\textsc{prep}\xspace}}
\newcommand{\SELECT}{{\textsc{sel}\xspace}}
\def\hlinewd#1{%
\noalign{\ifnum0=`}\fi\hrule \@height #1 \futurelet
\reserved@a\@xhline}
\DeclareRobustCommand{\capitalhyphen}{\raisebox{0.24ex}{\resizebox{0.4em}{\height}{-}}\kern-0.07em}
  \def\capitalhyphen{-}
\DeclareRobustCommand{\LOVELCU}{{LOVE{\capitalhyphen}LCU}\xspace}
\newcommand{\pd}{{\vphantom{\dagger}}}
\begin{document}
    
\title{Block encoding bosons by signal processing}

\author{Christopher F. Kane}
\affiliation{Department of Physics, University of Arizona, Tucson, Arizona 85719, USA}

\author{Siddharth Hariprakash}
\affiliation{Center for Theoretical Physics and Department of Physics, University of California, Berkeley, California 94720, USA}
\affiliation{Physics Division, Lawrence Berkeley National Laboratory, Berkeley, California 94720, USA}
\affiliation{National Energy Research Scientific Computing Center (NERSC), Lawrence Berkeley National Laboratory,
Berkeley, CA 94720, USA,
}

\author{Neel S. Modi}
\affiliation{Center for Theoretical Physics and Department of Physics, University of California, Berkeley, California 94720, USA}
\affiliation{Physics Division, Lawrence Berkeley National Laboratory, Berkeley, California 94720, USA}

\author{Michael Kreshchuk}
\affiliation{Physics Division, Lawrence Berkeley National Laboratory, Berkeley, California 94720, USA}
\affiliation{Phasecraft Inc., Washington, DC 20001, USA}

\author{Christian W Bauer}
\affiliation{Physics Division, Lawrence Berkeley National Laboratory, Berkeley, California 94720, USA} 

\begin{abstract}

Block Encoding (BE) is a crucial subroutine in many modern quantum algorithms, including those with near-optimal scaling for simulating quantum many-body systems, which often rely on Quantum Signal Processing (QSP).
Currently, the primary methods for constructing BEs are the Linear Combination of Unitaries (LCU) and the sparse oracle approach.
In this work, we demonstrate that QSP-based techniques, such as Quantum Singular Value Transformation (QSVT) and Quantum Eigenvalue Transformation for Unitary Matrices (QETU), can themselves be efficiently utilized for BE implementation.
Specifically, we present several examples of using QSVT and QETU algorithms, along with their combinations, to block encode Hamiltonians for lattice bosons, an essential ingredient in simulations of high-energy physics.
We also introduce a straightforward approach to BE based on the exact implementation of Linear Operators Via Exponentiation and LCU (\LOVELCU).
We find that, while using QSVT for BE results in the best asymptotic gate count scaling 
with the number of qubits per site, \LOVELCU outperforms all other methods for operators acting on up to $\lesssim11$ qubits, highlighting the importance of concrete circuit constructions over mere comparisons of asymptotic scalings.
Using \LOVELCU to implement the BE, we simulate the time evolution of single-site and two-site systems in the lattice $\varphi^4$ theory using the Generalized QSP algorithm and compare the gate counts to those required for Trotter simulation.

\end{abstract}

\maketitle

\tableofcontents


\section{Introduction\label{sec:introduction}}

Quantum simulation of quantum many-body physics was the original motivation for the development of quantum computing~\cite{Manin1980,Benioff:1980fes,Feynman:1981tf} and continues to be considered one of its most compelling applications.
For many areas of interest, such as quantum chemistry~\cite{Goings:2022jfz}, nuclear~\cite{Savage:2023qop}, or high-energy physics~\cite{Bauer:2022hpo}, fault-tolerant devices will be necessary to perform calculations of practical value.

Most quantum simulation proposals involve the stages of state preparation and/or time evolution.
Both of these tasks are commonly accomplished with the aid of algorithms based on Product Formulas (PF)~\cite{Lloyd:1996aai,zalka1998simulating}, an approach often referred to as \emph{Trotterization}.
Recent years, however, have witnessed a surge in the development of algorithms based on the so-called \emph{post-Trotter} methods such as Linear Combination of Unitaries (LCU)~\cite{Childs:2012gwh} or Quantum Signal Processing (QSP)~\cite{Low:2016sck,Low:2016znh}.
Although the asymptotic cost of these algorithms can achieve optimal or near-optimal dependence on problem parameters, they often come with large constant prefactors compared to asymptotically sub-optimal methods based on Product Formulas (PF)~\cite{childs2019nearly,Morales:2022ipc,Ikeda:2022tlb,Watson:2024dvw,Watson:2024yqs,Bosse:2024maw}.
Consequently, their usage is sometimes limited to highly specific parameter ranges, such as long evolution times or small errors of the final state~\cite{Hariprakash:2023tla}.

An important ingredient of post-Trotter methods is to encode the Hamiltonian directly into a quantum circuit.
Given that quantum circuits can only implement unitary, not Hermitian, operators, the Hamiltonian is typically encoded as a sub-block of a larger unitary matrix.
This process is called Block Encoding (BE), and is the source for most of the prefactors in the algorithmic cost of post-Trotter algorithms.
The commonly used approaches to constructing block encodings include the LCU algorithm and the sparse oracle approach~\cite{Kreshchuk:2020dla,Kirby:2021ajp,Lin:2022vrd,Anderson:2024kfj,Du:2024ixj}.
Highly optimized constructions of BEs based on both approaches have been developed for various types of physical systems~\cite{Babbush:2017oum,Babbush:2018ywg,Rhodes:2024zbr}.

In this work, we present several novel approaches to constructing BEs for a certain class of Hamiltonians. 
We demonstrate that, while BEs are commonly seen as building blocks for QSP-based algorithms, \emph{QSP-based algorithms themselves can be utilized for the highly efficient construction of BEs}. 
Our approaches apply to Hamiltonians which can be written as 
\begin{align}
\label{eq:GeneralH}
    \hat{H} = \sum_{{\ell}=0}^{N-1} \hat{H}_{\ell}
    \,,
\end{align}
where each $\hat{H}_{\ell}$ acts on at most $n$ qubits.
Many Hamiltonians have this structure, and we shall term them as $n$-\emph{site-local} Hamiltonians.
We will focus on the situation where each term in the Hamiltonian can be diagonalized via a known and efficiently implementable transformation
\begin{align}
\label{eq:diagonal}
    \hat{H}_{\ell}^\pd = U_{\ell}^\dagger \hat{H}_\ell^{(D)} U_\ell^\pd
    \,.
\end{align}

For some of our methods, we will make a further assumption about the structure of the Hamiltonian, which is satisfied for certain bosonic lattice field theories, which have recently garnered attention for their crucial role in high-energy and low-energy nuclear physics~\cite{Jordan:2017lea,Lu:2018pjk,Bauer:2019qxa,Bauer:2023ujy,Chigusa:2022act,Klco:2018kyo,Raychowdhury:2018osk,Raychowdhury:2019iki,Klco:2019evd,Barata:2020jtq,Ciavarella:2021nmj,Bauer:2021gek,Grabowska:2022uos,Farrell:2022wyt,Farrell:2022vyh,Bauer:2022hpo,Watson:2023oov,DAndrea:2023qnr,Farrell:2023fgd,Peng:2023bzl,Hariprakash:2023tla,Nagano:2023uaq,Farrell:2024fit,Rhodes:2024zbr,Ciavarella:2024fzw,Du:2024ixj,Gomes:2024tup}.
Specifically, we consider theories for which the Hamiltonians are formulated in terms of conjugate operators $[\hat \varphi_i, \hat \pi_j] = i \delta_{ij}$ at lattice sites $i, j$.
Using $\nq$ qubits, the operators $\hat \varphi_i$ are digitized by sampling at equally spaced values.
Since the operator $\hat \pi_i$ is conjugate to $\hat \varphi_i$, it can be diagonalized using a Fourier transform $\hat \pi_i = \FT^\dagger \hat \pi_i^{(D)} \FT$.
The Hamiltonian can be written as functions of simple operators $\hat\varphi_i$ and $\hat\pi_i$, and possibly combinations like $\hat\varphi_i-\hat\varphi_j$ needed for gradient terms.
We will use the general notation $\hat \xi_k$ ($k \in \{0,1,2\}$) to denote any of the operators
\begin{align}
    \hat{\xi}_0 \in \{\hat\varphi_i\}\,,\quad 
    \hat{\xi}_1 \in \{\hat\pi^{(D)}_i\}\,, \quad 
    \hat{\xi}_2 \in \{\hat{\varphi}_i-\hat{\varphi}_j\}
    \,,
\end{align}
such that each $\hat{H}_{\ell}^{(D)}$ is then given by a function of one such operator 
\begin{align}
    \hat{H}_{\ell}^{(D)} = f_{k_{\ell}}(\hat \xi_{k_{\ell}})
    \,.
\end{align}
A key property of the operators $\hat\xi_{k_\ell}$ we exploit in this work is that each of them can be written as a linear combination of $\mathcal{O}(\nq)$ many Pauli $Z$ gates.

Based on this set up and these assumptions, we present 3 different methods for the BE of the individual Hamiltonian terms $\hat{H}_{\ell}$, which are then combined via LCU, see~\cref{fig:alg_outline}.
The first method first obtains a block encoding for the operator $\hat \xi_{k_{\ell}}$ and then uses QSVT to construct the BE for $\hat{H}_{\ell}^{(D)}$. 
The second method constructs the unitary operator $\me^{-i \tau \hat{\xi}_{k_{\ell}}}$ (or $e^{-2 i\arccos(\hat{\xi}_k/\alpha)}$) and uses it as a building block to construct the BE for $\hat{H}_{\ell}^{(D)}$ using QETU. 
The final method, which we call Linear Operators Via Exponentiation and Linear Combination of Unitaries (\LOVELCU) constructs the unitary operators $\me^{\pm i \arccos(f_{k_\ell}(\hat{\xi}_{k_{\ell}})/\alpha)}$, from which $\hat{H}_{\ell}^{(D)}$ can easily be constructed by adding the two terms via LCU. Following the construction of the BE for $\hat{H}_{\ell}^{(D)}$ we conjugate it by $U_{\ell}$ resulting in a BE for $\hat{H}_{\ell}$.

In~\cref{sec:review_alg}, we briefly review the notion of BE as well as the GQSP, QSVT, and QETU algorithms.
In~\cref{sec:be}, we first review the general formulation of bosonic field theories considered in this work and then discuss the 3 methods in more detail.
We also demonstrate that, due to the highly symmetric structure of QSVT, QETU, and \LOVELCU circuits, the qubitized walk operator can be constructed without ancillary qubits in a straightforward way.
In~\cref{sec:scalar_field_theory_numerics} we use the proposed methods to construct BEs for several models of interest, construct explicit circuits, and provide metrics such as gate count and the number of ancillary qubits.
For a single site system, we find that using our methods allows to implement time evolution with less gates than PFs for errors as small as $\epsilon \sim 10^{-2}$.
Our conclusions are presented in~\cref{sec:discussion}.

\begin{figure}
    \centering
    \includegraphics[width=\linewidth]{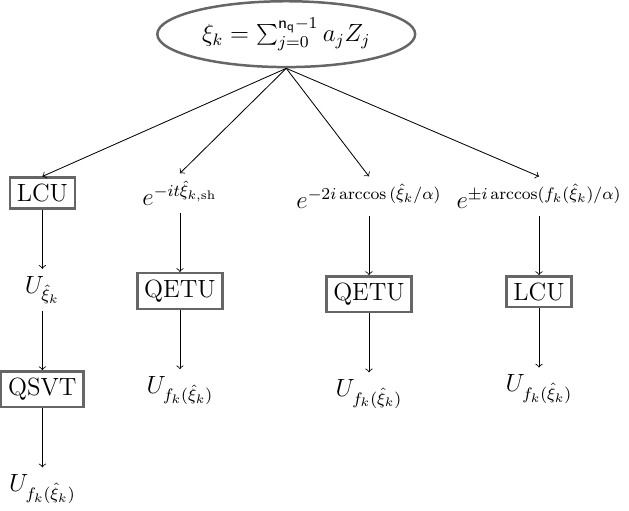}
    \caption{
Various methods for block encoding operators in Hamiltonians of scalar field theories (see~\cref{sec:be}). In the first two approaches, the cost of constructing the building block is polynomial in $\nq$, while in the latter two, the cost is exponential in $\nq$. In the first three approaches, the query depth is determined by the convergence of the polynomial approximation to the desired function. In the last approach, the query depth is constant (two).
    }
    \label{fig:alg_outline}
\end{figure}

\section{Algorithms}
\label{sec:review_alg}

In this section, we provide brief reviews of the simulation algorithms and techniques we consider in this work, along with associated circuit diagrams.
Readers familiar with these concepts are encouraged to skip directly to~\cref{sec:be}.

\subsection{Block encoding and walk operator\label{sec:review_be}}

The key ingredient of most near-optimal simulation algorithms is
the construction of
the so-called \emph{Block Encoding} (BE) of a given operator.
BE typically refers to the embedding of an $n$-qubit operator $A \in \mathcal{C}^{2^n \times 2^n}$, acting on an $n$-qubit Hilbert space $\mathcal{H}_s$, in the (without loss of generality) principle block of a larger $(m+n)$-qubit unitary operator $U_A \in \mathcal{C}^{2^{n+m} \times 2^{n+m}}$, acting on a larger Hilbert space $\mathcal{H}_a \otimes \mathcal{H}_s$, where $\mathcal{H}_a$ is the $m$-qubit ancilla space.
The BE $U_A$ has the following form:
\begin{equation}
\label{eq:bematr}
    U_A = \begin{pmatrix}
        A/\alpha & * \\
        * & *
    \end{pmatrix},
\end{equation}
where each $*$ represents some block to be chosen such that $U_A$ is unitary and the quantity $\alpha$, known as the scale factor, is chosen such that $\lvert\lvert A / \alpha \rvert\rvert_2 \leq 1$ since any sub-block of a unitary matrix must have its singular values upper bounded by 1. In this case, we call $U_A$ the $(\alpha,m)$-block-encoding of $A$.
Schematically, we see that for $\ket{0}_a \equiv \sket{0^{\otimes m}}_a \in \mathcal{H}_a$ and $\ket{\psi}_s \in \mathcal{H}_s$ the action of $U_A$ can be written as
\begin{equation}
    U_A \ket{0}_a\ket{\psi}_s = \frac{1}{\alpha}\begin{pmatrix}
        A \ket{\psi}_s \\
        *
    \end{pmatrix},
\end{equation}
and, thus, we recover the action of the operator $A$ on $\ket{\psi}_s$ only if the ancillary register is $\ket{0}_a$, see~\cref{fig:BE}.\footnote{Note that while this is a common choice, in general one could use a different state to project out the sub-block containing the rescaled matrix of interest~\cite{Low:2016znh}.}
The probability that this occurs, known as the \emph{success probability}, is given by $(\lvert\lvert A\ket{\psi}_s\rvert\rvert / \alpha)^2$.
This can also be written as
\begin{equation}
    \label{eq:beproj}
    A/\alpha = (\sbra{0}_a\otimes \unit_s) U_A (\sket{0}_a\otimes \unit_s)\,,
\end{equation}
which implies that
\begin{equation}
    \label{eq:beact}
    U_A \sket{0}_a\sket{\psi}_s = \sket{0}_a (A/\alpha \sket{\psi}_s) + \sket{\bot}_{as}\,,
\end{equation}
where $\sket{\bot}_{as}$ is a vector perpendicular to $\sket{0}_a$ in the sense that $(\sbra{0}_a\otimes\unit_s)\sket{\bot}_{as}=0$.
For a more formal treatment of block encodings, see Refs.~\cite{Lin:2022vrd,Camps:2022jnx,camps2022fable,Gilyen:2018khw,Hariprakash:2023tla}.

\begin{figure}[h]
\centering
\begin{quantikz}
\lstick{$\sket{0}_a$}& \gate[2]{U_A} \qwbundle{}& \meter{} & \setwiretype{c} \rstick{$\sket{0}_a$}\\
\lstick{$\sket{\psi}_s$} & \qwbundle{} \qw & \qw \rstick{$\dfrac{A\sket{\psi}_s}{\|A\sket{\psi}_s\|}$}
\end{quantikz}
\caption{Circuit for implementing the block encoding $U_A$ of an operator $A$, as defined in~\cref{eq:beproj}. Upon measuring the ancillary register in the state $\sket{0}_a$, the state of the system $\sket{\psi}_s$ is mapped to ${A\sket{\psi}_s}/{\|A\sket{\psi}_s\|}$.}
\label{fig:BE}
\end{figure}
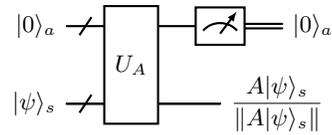

The most common QSP-based approach to simulating dynamics governed by time-independent Hamiltonians~\cite{Low:2016znh} approximates the operator $\me^{-i{H}t}$\footnote{Without loss of generality, we rescale the Hamiltonian and the evolution time as ${H}\to \alpha {H}$ and $t\to t/\alpha$ to ensure that $||{H}||_2 \leq 1$.} using a special form of block of encoding of ${H}$  (also known as the \emph{Szegedy quantum walk operator} or the \emph{iterate}) denoted by $W_H$.
This is known as \emph{qubitization}.\footnote{Confusingly, \emph{qubitization} often refers to the entire process of simulating time evolution based on QSP and repeated calls to $W_H$~\cite{Low:2016znh}. Moreover, sometimes \emph{qubitization} is used as an umbrella term for all quantum simulation algorithms based on the usage of BE. In the present paper, \emph{qubitization} refers solely to the process of constructing $W_H$.}
The operator $W_H$ is designed in such a way that upon its repeated application, one recovers block encodings for \emph{higher powers of} ${H}$ (which does not occur for a general block encoding $U_H$).
This can be achieved if for all Hamiltonian eigenstates $\sket{\lambda}$, the applications of $W_H$ to states of the form $\sket{0}_a\sket{\lambda}_s$ produce a state belonging to the subspace $\operatorname{span}\{\sket{0}_a \sket{\lambda}_s, \sket{\bot^\lambda}_{as}\}$:
\begin{equation}
\begin{alignedat}{9}
\label{eq:Whdef}
&W_H \sket{0}_a\sket{\lambda}_s &&= &&\lambda\, \sket{0}_a\sket{\lambda}_s + \sqrt{1-\lambda^2} &&\sket{\bot^\lambda}_{as} \,,
\\
&W_H \sket{\bot^\lambda}_{as} &&= -&&\sqrt{1-\lambda^2} \sket{0}_a\sket{\lambda}_s + \lambda &&\sket{\bot^\lambda}_{as} \,.
\end{alignedat}    
\end{equation}

The walk operator $W_H$ can be written as
\begin{equation}
    \label{eq:whrsu}
    W_H = \left( R_0 \otimes \unit_s \right) \left(S \otimes \unit_s \right) U_H\,,
\end{equation}
where $R_0 = \bigl(2\ket{0}_a\bra{0}_a - \unit_a\bigr)$ implements reflection with respect to the state $\sket{0}_a$, if there exists an operator $S$ satisfying~\cite{Low:2016znh}
\begin{subequations}
\label{eq:S_cond}
\begin{align}
    \label{eq:S_cond1}
    &\left(\sbra{0}_a \otimes \unit_s \right) \left(S \otimes \unit_s\right) U_H \left(\sket{0}_a \otimes \unit_s \right) = H/\alpha\,, 
    \\
    \label{eq:S_cond2}
    &\left(\sbra{0}_a \otimes \unit_s \right) \left[ \left(S \otimes \unit_s\right) U_H \right]^2 \left(\sket{0}_a \otimes \unit_s \right) = \mathrlap{\unit_s.}
\end{align}
\end{subequations}
Finding an operator $S$ that satisfies the above relations for a given $U_H$ is generally a difficult task.
In situations where $S$ is unknown, one can always construct $W_H$ by adding an additional ancillary qubit and performing two controlled calls to $U_H$~\cite{Low:2016sck}.
While this procedure is fully general, it introduces a large overall coefficient in the cost to construct $W_H$.
Importantly, for all BE methods considered in this work, the operator $S$ was determined and the costly general procedure was avoided.
For a more careful treatment of this subject, including the motivation for and construction of the iterate, we refer the reader to Refs.~\cite{Low:2016znh, Hariprakash:2023tla}.

\subsection{Linear Combination of Unitaries \label{ssec:lcu}}

The Linear Combination of Unitaries (LCU) method enables the construction of BEs of operators by breaking an operator $T$ acting on some system Hilbert space $\mathcal{H}_s$ into a sum of into unitary operators $U_i$, each with known implementation:
\begin{equation}
    T = \sum_{i = 0}^{K-1} c_iU_i\,,
\end{equation}
where the coefficients $c_i$ are chosen to be real and positive.
To implement LCU, one adds an ancillary qubit register with $\lceil \log K \rceil$ qubits and defines a \emph{select} oracle (denoted as $\SELECT$) as follows:
\begin{equation}
    \label{eq:select}
    \SELECT \equiv \sum_{i=0}^{K-1} \ket{i}\bra{i} \otimes U_i\,.  
\end{equation}
$\SELECT$ implements each unitary $U_i$ conditioned on the state $\ket{i}$ encoded as a binary string on the $a$ ancillary qubits. 

One also defines a \emph{prepare} oracle (denoted as $\PREPARE$) as follows:
\begin{equation}
    \label{eq:prepare}
    \PREPARE \ket{0^{\otimes a}} = \frac{1}{\sqrt{c}}\sum_{i=0}^{K-1} \sqrt{c_i}\ket{i},
\end{equation}
where $c \equiv \sum_{i=1}^{K-1}|c_i|$. 
Given the definitions of these oracles, one can verify that the quantum circuit given in \cref{fig:LCU} implements the action of the operator $T$ on some input state $\ket{\psi} \in \mathcal{H}_s$ and thus prepares a BE of $T$. 
This result can be summarized with the following lemma (see for example \cite{Childs:2012gwh},\cite{Lin:2022vrd} for a proof of this result): 
\begin{lemma}
[LCU block encoding]
\label{LCU}
    Let $U_0 \ldots, U_{K-1}$ be $K$ unitary operators acting on a system Hilbert space ${\cal H}_s$ and let $T = \sum_{i=0}^{K-1} c_i U_i$, with $(c_0, \dots, c_{K-1}) \in \mathbb{C}^K$, be an operator also acting on $\mathcal{H}_s$. Then, the unitary given by $\left(\PREPARE^{\dagger} \otimes \unit_s\right)\SELECT\left(\PREPARE \otimes \unit_s\right)$, where $\unit_s \in \mathcal{H}_s$ is the identity operator, is a $(c, \log K)$ BE of the operator $T$, where $c \equiv \sum_{i=0}^{K-1} | c_i |$. The elementary gate complexity of constructing this BE asymptotically scales as $\OO(K \log K)$.
\end{lemma}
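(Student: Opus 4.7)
The plan is to verify by direct computation that the action of $(\PREPARE^\dagger \otimes \unit_s)\,\SELECT\,(\PREPARE \otimes \unit_s)$ on $\sket{0^{\otimes a}}\sket{\psi}$ produces $T/c$ in the zero-ancilla block, in the sense of \cref{eq:beproj}. Before doing so, I would reduce to the case of non-negative real coefficients: if $c_i = |c_i|\me^{i\theta_i}$ with $\theta_i \in \mathbb{R}$, absorb the phase into the unitary by setting $\tilde U_i \equiv \me^{i\theta_i} U_i$ (still unitary) and $\tilde c_i \equiv |c_i| \geq 0$, so that $T = \sum_i \tilde c_i \tilde U_i$ and $c = \sum_i |c_i| = \sum_i \tilde c_i$. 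Then $\PREPARE$ as defined in \cref{eq:prepare} is a well-defined (unitary) state-preparation operator, e.g., extended to an arbitrary unitary by specifying its action on $\sket{0^{\otimes a}}$ alone.

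With this reduction in place, the core calculation proceeds in three steps by tracking the state explicitly. First, $(\PREPARE \otimes \unit_s)\sket{0^{\otimes a}}\sket{\psi} = \tfrac{1}{\sqrt c}\sum_i \sqrt{\tilde c_i}\,\sket{i}\sket{\psi}$. Second, applying $\SELECT$ from \cref{eq:select} yields $\tfrac{1}{\sqrt c}\sum_i \sqrt{\tilde c_i}\,\sket{i} \tilde U_i\sket{\psi}$. Third, projecting on the ancilla onto $\sbra{0^{\otimes a}}$ after applying $\PREPARE^\dagger$ gives
\begin{equation}
(\sbra{0^{\otimes a}} \otimes \unit_s)(\PREPARE^\dagger \otimes \unit_s)\sum_i \tfrac{\sqrt{\tilde c_i}}{\sqrt c}\sket{i} \tilde U_i\sket{\psi} = \tfrac{1}{c}\sum_i \tilde c_i \tilde U_i \sket{\psi} = T\sket{\psi}/c,
\end{equation}
where the first equality uses $\sbra{0^{\otimes a}}\PREPARE^\dagger\sket{i} = \overline{\sbra{i}\PREPARE\sket{0^{\otimes a}}} = \sqrt{\tilde c_i}/\sqrt c$. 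This is exactly the defining relation \cref{eq:beproj} with scale factor $c = \sum_i |c_i|$ on $a = \lceil \log K \rceil$ ancilla qubits, establishing the $(c, \log K)$ block-encoding property.

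For the gate-count claim, I would treat the cost of the $U_i$ as oracular and count only the overhead introduced by the LCU machinery. The oracle $\PREPARE$ is a single-register state preparation on $\log K$ qubits with $K$ prescribed amplitudes and can be realized in $\OO(K)$ elementary gates by standard state-preparation techniques. The oracle $\SELECT$ is a multiplexed sum of $K$ terms, each activated by a distinct $\log K$-bit basis state of the ancilla; a textbook binary-tree / multiplexor decomposition implements these conditionals using $\OO(\log K)$ elementary gates per term, for a total of $\OO(K \log K)$. Combining the three stages gives the stated $\OO(K \log K)$ asymptotic elementary gate complexity.

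The only real subtlety, and the step I would flag as the main place to be careful, is the handling of complex phases in the $c_i$: the $\PREPARE$ in \cref{eq:prepare} was written for positive coefficients, so one must either absorb the phases into $\tilde U_i$ as above or generalize $\PREPARE$ to prepare $\tfrac{1}{\sqrt c}\sum_i \sqrt{c_i}\sket{i}$ with a branch choice of the complex square root; both choices are equivalent and do not affect the block structure or gate count. Everything else is a direct bookkeeping computation and a standard compilation estimate.
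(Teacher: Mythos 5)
Your proof is correct and is essentially the standard argument: the paper does not prove \cref{LCU} itself but defers to Refs.~\cite{Childs:2012gwh,Lin:2022vrd}, whose proofs are exactly the prepare--select--unprepare computation you carry out, including the reduction to non-negative coefficients by absorbing phases into the $U_i$ and the $\OO(K)$ state-preparation plus $\OO(K\log K)$ multiplexed-select gate count. No gaps.
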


\begin{figure*}[t]
\centering
\begin{quantikz}
\lstick[4]{$\sket{0^{\otimes m}}_a$}
& \gate[4]{\PREPARE} & \octrl{1} \gategroup[5,steps=4,style={dashed,rounded corners, inner
sep=6pt}]{$\SELECT$} & \ctrl{1} & \qw \ \cdots \ & \ctrl{1} & \gate[4]{\PREPARE^\dagger} &
\\
& & \octrl{1} & \octrl{1} & \ \cdots \ & \ctrl{1} & &
\\
\vdots \setwiretype{n}& & \ \vdots \  & \vdots & \ \ddots \  \setwiretype{n} & \ \vdots \ & & \ \vdots \
\\
&  & \octrl{1} \wire[u][1]{q} & \octrl{1} \wire[u][1]{q} & \ \cdots \ & \ctrl{1} \wire[u][1]{q} & &
\\
\lstick{$\sket{\psi}_s$} & \qw \qwbundle{} & \gate{U_0} & \gate{U_1} & \ \cdots \ & 
\gate{U_{K-1}} & &
\end{quantikz}
\caption{The Linear Combination of Unitaries (LCU) circuit~\cite{Childs:2012gwh} to implement the BE of an operator ${T = \sum_{i = 0}^{K-1} c_iU_i}$, a linear combination of unitary operators with known implementations. Operators, $\PREPARE$ and $\SELECT$ are defined in \cref{eq:prepare} and \cref{eq:select}, correspondingly.
}
\label{fig:LCU}
\end{figure*}
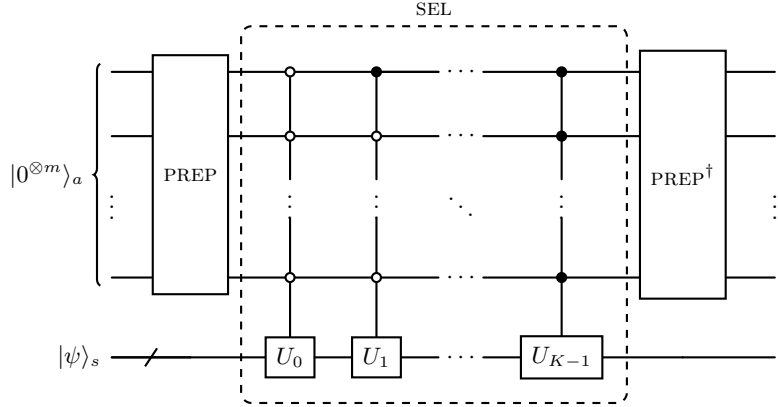

\subsection{Simulating time evolution using Generalized QSP \label{sec:gqsp}}

In this section, we review an improved version of the original QSP method, known as Generalized QSP (GQSP).
Provided access to some unitary $U$, QSP and GQSP allow one to construct circuits implementing the action of polynomials of $U$.
In the context of simulating time evolution, one can choose $U$ to be the Hamiltonian walk operator and use GQSP to implement polynomial approximations of the time evolution operator.

The main result of Ref.~\cite{Motlagh:2023oqc}, relevant to the problem of simulating time evolution, is given by:
\begin{theorem}[Corollary 5 from Ref.~\cite{Motlagh:2023oqc}]
    $\forall P \in \mathbb{C}[x]$ with deg(P) = d, if
    \begin{equation}
        \forall x \in \mathbb{T}, 
        \quad
        |P(x)|^2 \leq 1\,,
    \end{equation}
    where $\mathbb{T} = \{ x \in \mathbb{C}: |x| = 1 \}$ is the unit circle in the complex plane,
    then $\exists \hspace{1mm} \vec{\theta}, \vec{\phi} \in \mathbb{R}^{d+1}$ and $\gamma \in \mathbb{R}$ such that:
    \begin{equation}
        \left(\prod_{j=1}^{d} R(\theta_j, \phi_j, 0) \, C_U \right)\!R(\theta_0,\phi_0,\gamma) = \begin{pmatrix}
            P(U) & * \\
            * & *
        \end{pmatrix},
    \end{equation}
    where $R(\theta,\phi,\gamma)$ represents an SU(2) rotation and $C_U=(\ket{0}\bra{0} \otimes U) + (\ket{1}\bra{1} \otimes \unit)$ is a 0-controlled application of $U$ (the \emph{signal operator}),
\end{theorem}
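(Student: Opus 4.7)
The plan is to prove this by induction on the polynomial degree $d$, following a peeling strategy standard for QSP-type theorems.

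First, I would establish the forward direction. Working in the eigenbasis of $U$ (with $U\sket{\lambda} = \lambda \sket{\lambda}$, $\lambda \in \mathbb{T}$), the action of $V = \left(\prod_{j=1}^d R(\theta_j,\phi_j,0)\,C_U\right)R(\theta_0,\phi_0,\gamma)$ on $\sket{0}\sket{\lambda}$ evolves the ancilla to $P(\lambda)\sket{0}+Q(\lambda)\sket{1}$ for polynomials $P,Q$ of degree at most $d$, since each $C_U$ contributes at most one extra power of $\lambda$ to the top entry. Unitarity then forces the invariant $|P(\lambda)|^2 + |Q(\lambda)|^2 = 1$ on $\mathbb{T}$; expanding this as a Laurent polynomial and matching the coefficient of $\lambda^{\pm d}$ yields the orthogonality relation $p_d\bar{p}_0 + q_d\bar{q}_0 = 0$ between the leading and trailing coefficients, which is the key structural identity for the induction.

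For the converse, given $P$ of degree $d$ with $|P(\lambda)|^2 \leq 1$ on $\mathbb{T}$, I would first construct a complementary polynomial $Q$ of degree at most $d$ with $|Q(\lambda)|^2 = 1-|P(\lambda)|^2$ on $\mathbb{T}$, whose existence follows from the Fejer-Riesz factorization applied to the non-negative Laurent polynomial on the right. The base case $d=0$ is immediate: a general $\mathrm{U}(2)$ element $R(\theta_0,\phi_0,\gamma)$ can realize any unit vector $(P,Q)^{\top}\in\mathbb{C}^2$ as its first column. For the inductive step, I would peel off the outermost $R_d\, C_U$: since $(p_d,q_d)\perp(p_0,q_0)$ in $\mathbb{C}^2$ by the orthogonality above, one can choose $R_d\in\mathrm{SU}(2)$ with first column proportional to $(p_d,q_d)$ and second column proportional to $(p_0,q_0)$. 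Then $R_d^{-1}$ zeros the constant term of the top entry and the $\lambda^d$ coefficient of the bottom entry, and $C_U^{-1}$ divides the top by $\lambda$, producing a new pair $(P',Q')$ each of degree at most $d-1$ that still satisfies $|P'|^2+|Q'|^2=1$ on $\mathbb{T}$. The inductive hypothesis then supplies the remaining angles.

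The main obstacle is the combined Fejer-Riesz step and phase-counting: one must produce $Q$ of the correct degree and verify that the Fejer-Riesz freedom (choice of inner versus outer roots, plus an overall phase on $Q$) together with the single extra parameter $\gamma$ in $R_0$ is exactly sufficient to match the restricted two-parameter form of the rotations $R(\theta_j,\phi_j,0)$ used for $j\geq 1$. Once this compatibility is pinned down, the remaining verifications — degree tracking, handling edge cases where $(p_0,q_0)$ or $(p_d,q_d)$ vanish, and confirming the unitarity invariant is preserved under the peeling — reduce to mechanical algebra.
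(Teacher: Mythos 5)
The paper does not prove this statement itself; it is imported verbatim as Corollary~5 of the cited GQSP reference, and your sketch reproduces the standard argument from that source: the induction/peeling on the pair $(P,Q)$ with the invariant $|P|^2+|Q|^2=1$ on $\mathbb{T}$, the orthogonality $p_d\bar{p}_0+q_d\bar{q}_0=0$ extracted from the $\lambda^{\pm d}$ coefficients, and the Fej\'er--Riesz construction of the complementary $Q$. The one step you defer --- reconciling the two-parameter rotations $R(\theta_j,\phi_j,0)$ for $j\geq 1$ with the unitary whose columns are proportional to $(p_d,q_d)$ and $(p_0,q_0)$ --- is indeed the only remaining verification, and it closes exactly as you anticipate: the residual global phase at each peeling step is pushed down the recursion and absorbed by the single extra parameter $\gamma$ in $R_0$, so your proposal is essentially the same proof as in the cited work.
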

Furthermore, Ref.~\cite{Motlagh:2023oqc} also provides an efficient classical algorithm to compute the required rotation angles $\vec{\theta}, \vec{\phi}, \gamma$ for a desired polynomial $P(x)$. 

\begin{figure*}[t]
\centering
\begin{quantikz}[row sep={1cm,between origins}, column sep=0.25cm]
 \lstick{$\sket{0}$}& \gate{R(\theta_0,\phi_0,\gamma)} & \octrl{1} & \gate{R(\theta_1,\phi_1,0)} & \octrl{1} & \ \ldots \ & \gate{R(\theta_{d-1},\phi_{d-1},0)} & \octrl{1} & \gate{R(\theta_d,\phi_d,0)} & \\
 \lstick{$\sket{0^{\otimes m}}_a$}& \phantomgate{R(\theta_0,\phi_0,\lambda)} & \gate[2]{W_H} & \phantomgate{R(\theta_1,\phi_1,0)} & \gate[2]{W_H} & \ \ldots \ & \phantomgate{R(\theta_{d-1},\phi_{d-1},0)} & \gate[2]{W_H} & \phantomgate{R(\theta_d,\phi_d,0)} & \\
 \lstick{$\sket{\psi}_s$} & & & & & \ \ldots \ & & & &
\end{quantikz}
\caption{Circuit diagram for simulating time evolution using GQSP, adapted from Ref.~\cite{Motlagh:2023oqc}. Each $R(\theta_0,\phi_0,\gamma)$ is an SU(2) rotation, with the individual angle parameters shown in the diagram chosen to construct a polynomial of $W_H$ in the context of this work.
}
\label{fig:GQSP}
\end{figure*}
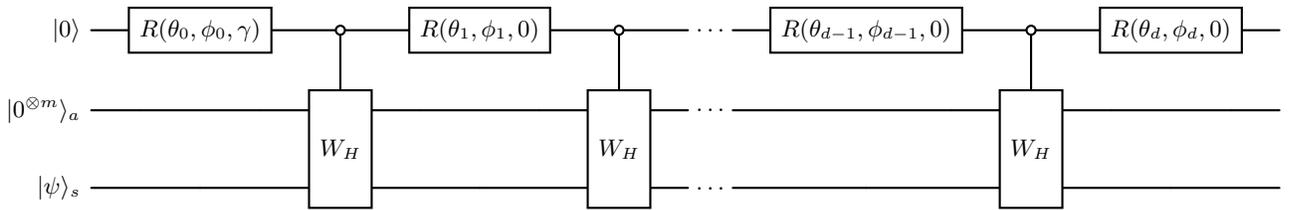

For the purpose of simulating time evolution, we note that the eigenvalues of the operator $W_H^{(\lambda)}$, defined by restricting the action of $W_H$ to the subspace spanned by $\operatorname{span}\{\sket{0}_a \sket{\lambda}_s, \sket{\bot^\lambda}_{as}\}$, are given by $\me^{\pm i\arccos(\lambda)}$. Thus we set the unitary to which controlled calls are made in the GQSP circuit to be $U = W_H$ (see~\cref{fig:GQSP} adapted from Ref.~\cite{Motlagh:2023oqc}) and seek an implementation of the function given by $\me^{- it\cos(x)}$ to achieve the desired polynomial transformation given by $P(W_H^{(\lambda)}) = \me^{-it\lambda}$. This is achieved through the Jacobi-Anger expansion (see for example Refs. \cite{Low:2016znh,Motlagh:2023oqc}), given by:

\begin{equation}
    \me^{-it\cos(x)} = \sum_{k = -\infty}^{\infty}(-i)^{k}J_k(t)\me^{ikx},
\end{equation}
where $J_k$ is the $k^{th}$ Bessel function of the first kind. It has been previously established the coefficients in this Fourier series decay exponentially and in particular, the number of SU(2) rotations required to achieve an $\epsilon$-close approximation to this polynomial transformation is asymptotically bounded by $\mathcal{O}(t + \log(1/\epsilon)/\log\log(1/\epsilon))$ (see Refs. \cite{Motlagh:2023oqc,Low:2016znh}).
We conclude by noting that, while using QSP for Hamiltonian simulation achieves the same asymptotically optimal scaling, using GQSP requires half as many controlled calls to $W_H$~\cite{Motlagh:2023oqc, Berry:2024ghc}.

\subsection{QSVT \label{ssec:qsvt}} 
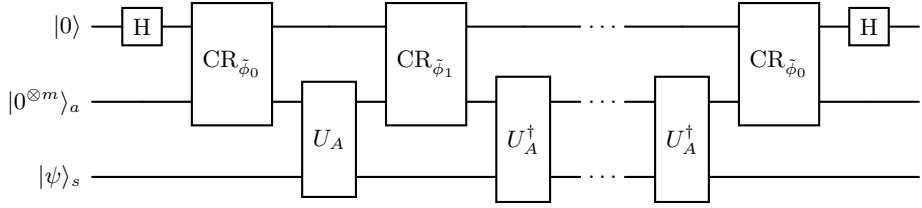
\begin{figure*}[t]
    \centering
    \begin{quantikz}[row sep={1cm,between origins}, column sep=0.4cm]
        \lstick{$\sket{0}$} & \gate{\Hgate} & \gate[2]{{\rm CR}_{\tilde{\phi}_0}} & & \gate[2]{{\rm CR}_{\tilde{\phi}_1}} & & \ \ldots\ & & \gate[2]{{\rm CR}_{\tilde{\phi}_0}} & \gate{\Hgate} &
        \\
        \lstick{$\sket{0^{\otimes m}}_a$} & & & \gate[2]{U_A} & & \gate[2]{U_A^\dagger} & \ \ldots\ & \gate[2]{U_A^\dagger} & & &
        \\
        \lstick{$\sket{\psi}_s$} & & & & & & \ \ldots\ & & & &
    \end{quantikz}
    \caption{Circuit diagram for QSVT, reproduced from Ref.~\cite{Lin:2022vrd}, using symmetric phases $(\tilde{\phi}_0, \tilde{\phi_1}, \dots, \tilde{\phi_1}, \tilde{\phi_0})$. The Hadamard gates serve to isolate the real part of the matrix function being implemented. The circuit for $\text{CR}_{\phi}$ is shown in~\cref{fig:crphi_circ}.}
    \label{fig:qsvt_circ}
\end{figure*}

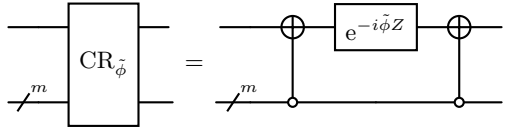
\begin{figure*}[t]
    \centering
    \begin{quantikz}[row sep={1cm,between origins}, column sep=0.4cm]
        && \gate[2]{{\rm CR}_{\tilde{\phi}}} &
        \midstick[2,brackets=none]{=}
    &
    &  \targ{} & \gate{\me^{-i \tilde{\phi} Z}} & \targ{} &
        \\
        & \qwbundle{m}
        & & & \qwbundle{m}
        & \octrl{-1} & & \octrl{-1} &
    \end{quantikz}
    \caption{Diagram for the $\text{CR}_{\phi}$ circuit appearing in the QSVT circuit in~\cref{fig:qsvt_circ}.}
    \label{fig:crphi_circ}
\end{figure*}

While QSP and GQSP are powerful algorithmic tools, in certain cases, the required polynomial transformations can be achieved using the simpler Quantum Singular Value Transformation (QSVT)~\cite{Gilyen:2018khw}.
The QSVT method is very similar to the QSP and GQSP methods in that it allows one to implement general matrix functions of some Hermitian operator $A$ by performing repeated calls to its BE $U_A$.
The major difference of QSVT has to do with the classes of matrix functions that can be implemented.
In particular, QSVT implements real functions with definite parity, whereas QSP implements complex functions with the real and imaginary parts having opposite parity.
Only requiring the real part of the matrix polynomial leads to several advantages.
First, the fundamental building block is now the BE $U_A$, rather than a controlled call to the walk operator $W_A$, as in QSP or GQSP.
Second, the conditions on implementable real polynomial are relaxed relative to the complex polynomials implemented using QSP.
Third, the phases that parameterize the circuit can be solved using a straightforward optimization procedure for large polynomials using e.g. QSPPACK~\cite{qsppack}.
The QSVT theorem for Hermitian matrices is as follows \cite{Lin:2022vrd}:

\begin{theorem}[QSVT for Hermitian matrices with real polynomials]
\label{thm:qsvt}
    Let $A \in \mathbb{C}^{2^n \times 2^n}$ be a normalized Hermitian operator with a $(1,m)$ block encoding $U_A$. Given a degree $d$ polynomial $P_{\mathrm{Re}}(x) \in \mathbb{R}[x]$ such that
    \begin{enumerate}
        \item  $P_{\mathrm{Re}}$ has parity $d \pmod{2}$,
        \item  $|P_{\mathrm{Re}}(x)| \leq 1 \hspace{2mm} \forall x \in [-1,1]$,
    \end{enumerate}
    then there exist symmetric phase factors $(\phi_0,\dots,\phi_{d}) \in \mathbb{R}^{d+1}$ such that the circuit presented in \cref{fig:qsvt_circ} implements a $(1,m+1)$ block-encoding of $P_{\mathrm{Re}}\left(A\right)$. 
\end{theorem}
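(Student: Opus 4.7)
The plan is to reduce the circuit to an SU(2) computation on the two-dimensional invariant subspaces of the block encoding, invoke the QSP phase-factor existence theorem to match $P_{\mathrm{Re}}$, and then use the Hadamard sandwich on the extra top ancilla to isolate the real part of the resulting complex polynomial. I would begin by observing that, since $A$ is Hermitian with eigendecomposition $A=\sum_{\lambda}\lambda\,\saxe{\lambda}{\lambda}$ and $U_A$ is its $(1,m)$ block encoding, the standard qubitization/SVD analysis (or equivalently the walk operator $W_A$ of \cref{eq:Whdef}) shows that $U_A$ decomposes as a direct sum over $\lambda$ of SU(2) rotations on the invariant subspaces $\Span\{\sket{0}_a\sket{\lambda}_s,\,\sket{\bot^\lambda}_{as}\}$, with rotation angle set by $\arccos(\lambda)$. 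The $\mathrm{CR}_{\phi}$ gate of \cref{fig:crphi_circ} then acts on the effective qubit in each block as a $Z$-rotation, while simultaneously applying opposite-sign $Z$-rotations on the two branches of the extra top ancilla via its CNOT sandwich.

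Within each invariant block, the ``bare'' QSVT circuit---\cref{fig:qsvt_circ} with the top ancilla and its Hadamards removed---becomes an alternating product of $Z$-rotations by $\phi_j$ and off-diagonal rotations coming from $U_A$ and $U_A^{\dagger}$. A direct SU(2) calculation in the QSP formalism~\cite{Low:2016znh,Gilyen:2018khw} shows that its $\sbra{0}\cdot\sket{0}$ matrix element is a complex polynomial $Q(\lambda)$ of degree at most $d$ and parity $d \bmod 2$ with $|Q(\lambda)|\leq 1$ on $[-1,1]$. The QSP phase-factor existence theorem then guarantees that any real polynomial $P_{\mathrm{Re}}$ obeying the two hypotheses of the theorem can be realized as $\mathrm{Re}(Q)$ for an appropriate symmetric phase choice $(\phi_0,\dots,\phi_d)$. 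Because the $\mathrm{CR}_{\phi_j}$ gates apply phase sequences of opposite sign on the $\sket{0}$ and $\sket{1}$ branches of the top ancilla, the Hadamard sandwich runs the bare circuit with phases $+\phi_j$ on the $\sket{0}$ branch and $-\phi_j$ on the $\sket{1}$ branch, producing $Q(A)$ and $Q^{*}(A)$ respectively in the $\sket{0^{\otimes m}}_a$ block; the final Hadamard on the top qubit averages the branches, so projecting the full $(m+1)$-qubit ancillary register onto $\sket{0}$ yields $\tfrac{1}{2}\bigl(Q(A)+Q^{*}(A)\bigr)=P_{\mathrm{Re}}(A)$, which is exactly a $(1,m+1)$ block encoding of $P_{\mathrm{Re}}(A)$.

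The main obstacle is the existence claim in the middle step---that symmetric phase factors realizing $P_{\mathrm{Re}}$ as $\mathrm{Re}(Q)$ actually exist. The standard argument completes $P_{\mathrm{Re}}$ to an admissible complex QSP polynomial pair $(Q,Q')$ obeying $|Q(x)|^2+(1-x^2)|Q'(x)|^2=1$ on $[-1,1]$ via a Fej\'er-type root-manipulation construction and then extracts the phases by recursively reducing the polynomial degree. I would invoke this machinery from~\cite{Gilyen:2018khw,qsppack} rather than reproduce it; once it is in hand, the qubitization reduction and the Hadamard-branch accounting are essentially mechanical linear algebra, and the parity/normalization hypotheses on $P_{\mathrm{Re}}$ enter only through this completion step.
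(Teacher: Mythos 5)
The paper does not prove this statement: \cref{thm:qsvt} is quoted verbatim as an imported result from Ref.~\cite{Lin:2022vrd}, so there is no in-paper proof to compare against. Your sketch reconstructs the standard literature argument (cosine--sine/Jordan-lemma reduction to single-qubit QSP, phase-factor completion, Hadamard sandwich for the real part), and its overall architecture is correct; it is essentially the proof found in Refs.~\cite{Gilyen:2018khw,Lin:2022vrd}, which is presumably why the authors felt free to cite rather than reprove it.

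Two points in your write-up deserve tightening. First, a general $(1,m)$ block encoding $U_A$ of a Hermitian $A$ does \emph{not} decompose as a direct sum of SU(2) rotations on subspaces $\Span\{\sket{0}_a\sket{\lambda}_s,\sket{\bot^\lambda}_{as}\}$ --- that invariance property is what distinguishes the qubitized walk operator $W_A$ of \cref{eq:Whdef} from an arbitrary $U_A$, and conflating the two ("or equivalently the walk operator") obscures why QSVT must alternate $U_A$ and $U_A^\dagger$. The correct statement is Jordan's lemma applied to the projector pair $\Pi=\saxe{0}{0}_a\otimes\unit_s$ and $U_A^\dagger\Pi U_A$: the alternating product interleaved with the projector-controlled phases is what acts as SU(2) on each two-dimensional block, not $U_A$ alone. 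Second, the "QSP phase-factor existence theorem" you invoke produces, in its generic form, a \emph{non-symmetric} phase sequence for a completed complex polynomial pair; the existence of \emph{symmetric} phase factors realizing a given real target polynomial (which is what the theorem statement asserts and what QSPPACK~\cite{qsppack} computes) is a separate result and should be cited as such rather than folded into the generic Fej\'er completion. Neither issue breaks the proof, but both are the places where a referee of a self-contained write-up would push back.
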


One important property of QSVT circuits is that their controlled versions can be implemented with a small overhead~\cite{Gilyen:2018khw}.
For even polynomials, only the single-qubit $R_z$ gates in the $\text{CR}_{\tilde{\phi}_j}$ circuits (see Fig.~\ref{fig:crphi_circ}) acting on the signal qubit need to be controlled.
For odd polynomials, in addition to controlling the single-qubit $R_z$ gates, one must also control a single call to the BE $U_A$.
This property can be used to combine different QSVT circuits using LCU without introducing a large overall constant prefactor, which is important when using LCU to combine BEs of different local terms in the Hamiltonian constructed using QSVT, as explained in~\cref{ssec:qsvtbe}.

\subsection{QETU~\label{ssec:qetu}}

\begin{figure*}[t]
    \centering
    \begin{quantikz}[row sep={1cm,between origins}, column sep=0.25cm]
        \lstick{$\sket{0}$} & \qw & \gate{\me^{i \tilde{\varphi}_{0} X}} & \ctrl{1} & \gate{\me^{i \tilde{\varphi}_{1} X}} & \ctrl{1} & \qw & \ \ldots\ & \qw &  \ctrl{1} & \gate{\me^{i \tilde{\varphi}_{1} X}} & \ctrl{1} & \gate{\me^{i \tilde{\varphi}_{0} X}} &\qw &
        \\
        \lstick{$\sket{\psi}_s$} & \qw & \qw & \gate{U} & \qw & \gate{U^\dagger} & \qw & \ \ldots\ & \qw & \gate{U} & \qw & \gate{U^\dagger} & \qw & \qw & 
    \end{quantikz}
    \caption{Circuit diagram for QETU. The top qubit is the control qubit and the bottom register is the state that the matrix function is applied to.
    Here $U = \me^{-i \tau A}$ is the time evolution circuit.
    If the control qubit is measured to in the zero state, one prepares the normalized quantum state $F(\cos(\tau A/2))\sket{\psi}/||F(\cos(\tau A/2)) \sket{\psi}||$ for some even polynomial $F(x)$.
    For symmetric phase factors $(\tilde{\varphi}_0, \tilde{\varphi}_1, \dots, \tilde{\varphi}_1, \tilde{\varphi}_0) \in \mathds{R}^{d+1}$, the function $F(x)$ is a real even polynomial of degree $d$. 
    The probability of measuring the control qubit in the zero state is $p = ||F(\cos(\tau A/2)) \sket{\psi}||^2$.}
    \label{fig:qetu_circ}
\end{figure*}
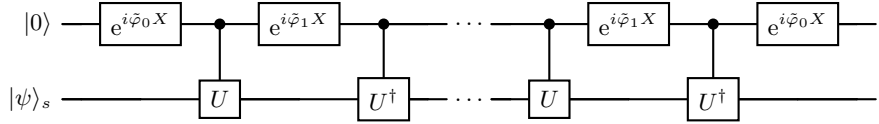

QETU is similar to the QSVT method described in~\cref{ssec:qsvt} in that it allows one to implement a broad class of matrix functions of some Hermitian operator $A$ by performing repeated calls to some fundamental building block. 
The major difference of QETU is that, instead of performing repeated calls to a BE $U_A$ as with QSVT, the building block for QETU is the time-evolution operator $\me^{-i \tau A}$ for some value of~$\tau$.
In a broad sense, QETU replaces the BE $U_A$ in the QSVT circuit in \cref{fig:qsvt_circ} with the controlled time-evolution operator $\me^{-i \tau A}$. 
The QETU circuit for implementing even functions is shown in \cref{fig:qetu_circ}. 
Taken from Ref.~\cite{Dong:2022mmq}, the QETU theorem is:

\begin{theorem}[QETU]\label{thm:qetu}
Let $U = \me^{-i \tau A}$ be a unitary operator, where $A$ is an $n$-qubit Hermitian operator and $\tau$ is real parameter. 
For any real even polynomial $F(x)$ of degree $d$ satisfying $|F(x)| \leq 1, \forall x \in [-1,1]$, one can find a sequence of symmetric phase factors $(\varphi_0, \dots, \varphi_d) \in \mathbb{R}^{d+1}$ such that the circuit in~\cref{fig:qetu_circ} denoted by $\mathcal{U}$ satisfies $(\sbra{0} \otimes \unit) \mathcal{U} (\sket{0} \otimes \unit) = F(\cos(\tau A / 2))$. 
\end{theorem}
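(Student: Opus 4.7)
The plan is to reduce \cref{thm:qetu} to the symmetric-phase version of single-qubit QSP by passing to the eigenbasis of $A$. Writing $A = \sum_\lambda \lambda\sket{\lambda}\sbra{\lambda}$, every gate in the QETU circuit preserves each system eigenspace, since the single-qubit control rotations act trivially on the system register and each $C_{U^{\pm 1}}$ is diagonal in $\sket{\lambda}$. Consequently $\mathcal{U}$ decomposes as $\mathcal{U} = \sum_\lambda M_\lambda\otimes\sket{\lambda}\sbra{\lambda}$ with $M_\lambda\in U(2)$ a single-qubit unitary on the control. It therefore suffices to show that, for symmetric phase factors $(\varphi_0,\varphi_1,\ldots,\varphi_1,\varphi_0)$, the matrix element $\sbra{0}M_\lambda\sket{0}$ equals $F(\cos(\tau\lambda/2))$ for some real even polynomial $F$ of degree $d$ with $|F|\le 1$, and that every such polynomial is realizable by an appropriate phase choice.

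The heart of the argument is the explicit structure of $M_\lambda$. In the $\lambda$-eigenspace, $C_U$ restricts on the control qubit to $\mathrm{diag}(1,\me^{-i\tau\lambda}) = \me^{-i\tau\lambda/2}\me^{i(\tau\lambda/2)Z}$, and $C_{U^\dagger}$ to its inverse. Because $d$ is even and the circuit contains equally many $C_U$ and $C_{U^\dagger}$ gates, all of the scalar phases $\me^{\pm i\tau\lambda/2}$ cancel exactly, and $M_\lambda$ reduces to an alternating product of $X$-phases $\me^{i\varphi_j X}$ and $Z$-rotations $\me^{\pm i(\tau\lambda/2) Z}$, with the symmetric $\varphi_j$ pattern inherited from the QETU circuit and alternating signs inherited from the $C_U/C_{U^\dagger}$ alternation. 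This is precisely a symmetric-phase QSP sequence in the convention where phases lie along $X$ and the signal lies along $Z$, with QSP input parameter $x = \cos(\tau\lambda/2)$. The identity $\me^{-i\theta Z} = X\me^{i\theta Z}X$ together with $[X,\me^{i\varphi X}]=0$ can then be used to absorb the alternating signs into shifted phases and reduce the product to the textbook QSP form.

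Invoking the QSP characterization of symmetric-phase sequences -- the same polynomial characterization that underlies \cref{thm:qsvt} in the single-qubit specialization -- one concludes that $\sbra{0}M_\lambda\sket{0} = F(\cos(\tau\lambda/2))$ for a real even polynomial $F$ of degree $d$ with $|F|\le 1$, and that every such $F$ is obtained by some symmetric phase choice. Summing over eigenvalues gives $(\sbra{0}\otimes\unit)\mathcal{U}(\sket{0}\otimes\unit) = F(\cos(\tau A/2))$, as claimed. The main obstacle is the bookkeeping in the second step: verifying that the scalar phases $\me^{\pm i\tau\lambda/2}$ really cancel so that the argument of $F$ is $\cos(\tau A/2)$ rather than $\cos(\tau A)$, and that after sign-absorption the alternating-sign sequence with symmetric QETU phases still falls within the hypotheses of the symmetric-phase QSP theorem. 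Both are most directly verified either by a short induction on $d$ or by an explicit reduction to the standard QSP form; the rest of the argument is a direct appeal to the QSP polynomial characterization.
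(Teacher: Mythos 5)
The paper does not prove \cref{thm:qetu}; it imports the statement verbatim from Ref.~\cite{Dong:2022mmq}. Your argument correctly reconstructs the standard proof given there---block-diagonalizing in the eigenbasis of $A$, reducing each block to a single-qubit symmetric-phase QSP sequence with signal $x=\cos(\tau\lambda/2)$ after cancelling the scalar phases $\me^{\pm i\tau\lambda/2}$, and invoking the symmetric-phase QSP characterization of real even polynomials---so it is essentially the same approach as the cited source, with the sign-absorption bookkeeping you flag being precisely the step handled explicitly in Ref.~\cite{Dong:2022mmq}.
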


In practice, the QETU circuit is used for approximately implementing $\fex(A)$ by realizing a transformation $F(\cos (\tau A/2))$, where $F(x)$ is a polynomial approximation to $\Fex(x)\equiv \fex(\frac{2}{\tau}\arccos(x))$.

Similar to QSVT, the phases $\{\tilde{\varphi}_j \}$ appearing in the circuit in~\cref{fig:qetu_circ} can be calculated using QSPPACK~\cite{qsppack}.
As described in Appendix B of Ref.~\cite{Dong:2022mmq}, the phases in the QETU circuit $\{\tilde{\varphi}_j \}$ are related to the phases calculated using QSPPACK $\{\tilde{\phi}_j\}$ by $\tilde{\varphi}_j = \tilde{\phi}_j + (2-\delta_{j\, 0}) \pi/4$.

Completely analogous to the case with QSVT circuits, controlled calls to QETU circuits can also be implemented for a small additional overhead.
This property can be used to combine BEs of different local terms in the Hamiltonian that were constructed using QETU; the procedure for constructing BEs using QETU is given in~\cref{ssec:qetube}.

\section{Block encodings for a bosonic lattice field theory\label{sec:be}}

In this section we describe how to construct block encodings for certain types of bosonic lattice field theories, and how to use those to simulate time evolution using GQSP.
Our choice of GQSP is guided by the fact that, while this algorithm has the same asymptotic complexity as QSP or QSVT, but it comes with a smaller constant prefactor~\cite{Motlagh:2023oqc, Berry:2024ghc}.

The general strategy for using GQSP to simulate time evolution of the Hamiltonian in Eq.~\eqref{eq:GeneralH} is achieved with the following steps:
\begin{enumerate}
    \item Construct the BEs $U_{f_k(\hat{\xi}_k)}$ for the individual terms of the form $f_k(\hat{\xi}_{k})$.\footnote{Since the functional form of $f_{k_{\ell}}$ is the same $\forall \ell \in \{0,1,\dots,N-1\}$ and fixed $k \in \{0,1,2\}$, in what follows we will drop the $\ell$ subscript on $f_{k}$ when the statement is true / applies $\forall \ell$.}
    \item Construct $U_{f_1(\hat\pi_i)} = \FT_i^\dagger U_{f_1\big(\hat\pi^{(D)}_i\big)} \FT_i$ using the Quantum Fourier transform circuit.
    \item Use LCU to combine these site-local BEs to obtain the BE for the full Hamiltonian $U_H$.
    \item Construct the qubitized walk operator $W_H$.
    \item Simulate time evolution using the GQSP circuit in~\cref{fig:GQSP}.
\end{enumerate}

After introducing the Hamiltonian of the scalar bosonic lattice field theory and discuss the digitization of the local bosonic Hilbert space, we present three different techniques for preparing BEs for a broad class of operators appearing in site-local Hamiltonians encountered in various formulations of lattice field theories.
In all three techniques the BE of $f_k(\hat \xi_k)$ will start with an encoding of a basic building block, which is then combined to form $f_k(\hat \xi_k)$. 

The first techniques uses QSVT, and starts from the simplest building block possible, namely the operator $\hat \xi_k$. Since this operator is not unitary, it needs to be block-encoded. 
As we will see, this encoding can be done with resources that are linear in $\nq$, but comes with a large prefactor.
Constructing the final function $f_k(\hat \xi_k)$ adds a constant multiplicative factor on the resource requirement.

The second method uses QETU, and directly uses the unitary operator $\me^{-i \tau \hat \xi_k}$ as the basic building block. The building block of QETU can still be implemented with resources that are linear in $\nq$, but removes the large pre-factors arising in the previous method. 
The downside is that constructing  $f_k(\hat \xi_k)$ introduces an exponential dependence on $\nq$. As we will see, despite the exponential dependence on $\nq$, the absence of the large pre-factor make this method more efficient for $\nq \lesssim 6$.

The final method takes a different approach, and chooses a more complicated building block, which requires resources with exponential scaling in $\nq$ for its construction. 
This building block is chosen such to make the construction of $f_k(\hat \xi_k)$ as easy as possible, and is in fact trivial. This gives better scaling than the second method, and in fact outpeforms all methods $\nq \lesssim 11$. 
We call this final method Linear Operators Via Exponentiation and Linear Combination of Unitaries (\LOVELCU).

The comparison of different BEs is summarized in~\cref{table:local_BEs}.

\begin{table*}
\centering
\begin{tabular}{llccc}
Algorithm & Gate Complexity & Ancillary qubits & Indefinite parity & Multivariable functions \\
\hline
Standard LCU & $\OO(\nq^d)$ & $\OO(d \log \nq)$ & Yes & Yes \\
QSVT & $\OO(d\cdot\nq\log\nq)$ & $\OO(\log \nq)$ & No & No \\
QETU with $\me^{i \tau \hat\xi}$ & $\OO(2^{\nq}\cdot\nq)$ & 1 & No & No \\
QETU with $\me^{i 2 \arccos \hat\xi}$ & $\OO(d\cdot 2^{\nq})$ & 1 & No & No \\
\LOVELCU  & $\OO(2^{\nq})$ & 1 & Yes & Yes \\
\end{tabular}
\caption{Properties of BE methods considered in this work. Gate count and ancillary qubit count are for constructing a BE of a degree $d$ function $f(\hat{\xi})$ of an $n$-qubit diagonal Hermitian operator $\hat{\xi}$ that is a sum of $\OO(n)$ single $\Zgate$-gates (for operators considered in this work, $\hat{\xi} \in \{\hat\varphi_i, \hat\pi^{(D)}, \hat\varphi_i-\hat\varphi_j \}$). The QSVT- and QETU-based methods require $f$ to have definite parity and be a function of a single variable.
Standard LCU and \LOVELCU can be applied to functions with indefinite parity.
}
\label{table:local_BEs}
\end{table*}

\subsection{General Form of Hamiltonians of Interest \label{ssec:general_H}}

A common approach to formulating quantum field theories in a way suitable for solving those on quantum computers amounts to 1) discretizing the real space so that the fields were defined on a finite subset $\mathcal{V}$ of sites of some lattice, e.g., $\mathcal{V}\subset a\mathbb{Z}^d\subset\mathbb{R}^d$, subject to an appropriate boundary condition; 2) digitizing quantum degrees of freedom residing on lattice sites, i.e., truncating the infinite-dimensional local Hilbert space to a finite-dimensional one.
While alternative approaches are also subject to active investigation~\cite{Brower:1997ha,Kreshchuk:2020dla,Buser:2020cvn}, the aforementioned paradigm has two major advantages: the locality of the discretized Hamiltonian (for local quantum field theories) and linear growth of the Hamiltonian norm with problem size.\footnote{Here we do not touch upon the issue of renormalization which may affect the Hamiltonian norm via adjustment of coupling constants.}
Both of these are of great utility from the quantum computing perspective.
While fermionic lattice systems have been subject to active investigation in the literature (for recent studies, see Ref.~\cite{Setia:2018qmu,Farrell:2024fit} and references therein), in this work we focus on lattice bosons, which are an essential ingredient in nuclear~\cite{Savage:2023qop} and high-energy particle physics~\cite{Bauer:2022hpo}.

Many physically relevant Hamiltonians, including scalar field theories~\cite{Jordan:2011ci, Klco:2018zqz}, dual basis formulations of U(1)~\cite{Bender:2020ztu, Bauer:2021gek} and SU(2)~\cite{DAndrea:2023qnr} gauge theories, are (at least partially) formulated in terms of conjugate operators defined on each lattice site.
For this work we use the notation of a scalar field theory where we denote by $\hat{\varphi}_i$ and $\hat{\pi}_i$ the ``field'' and ``momentum'' operators at site $i$, respectively, satisfying the canonical commutation relations $[\hat{\varphi}_i, \hat{\pi}_j] = i \delta_{ij}$.
The general idea is to exploit the fact that, for a particular highly efficient digitization scheme of such operators (described in detail below), the local operators $\hat{\varphi}_i$ and $\hat{\pi}_i$ take on simple forms and can therefore be used as low cost building blocks in the QSVT or QETU circuits. These can then be used to prepare low cost BEs of the individual terms appearing in the Hamiltonian; and the BE for the full Hamiltonian can be prepared by using LCU to combine the BEs of the site-local terms.

To be concrete, we consider Hamiltonians of the form
\begin{equation}
\label{eq:general_H}
    \hat{H} = \sum_{i=1}^{\NLat} \left( f_0\left(\hat{\varphi}_i\right) + f_1\left(\hat{\pi}_i\right) \right) + \sum_{\langle ij \rangle} f_2\left(\hat{\varphi}_i - \hat{\varphi}_j \right) ,
\end{equation}
where $\NLat$ is the number of lattice sites and $\sum_{\langle i j \rangle}$ is a sum over all adjacent sites $i$ and $j$.
A discussion of applying our methods to more general forms of the Hamiltonian is given in~\cref{sec:discussion}.

To \emph{digitize} the theory, the local bosonic Hilbert space at each lattice site $j$ is represented using $\nq$ qubits.
Following Refs.~\cite{Jordan:2011ci, Klco:2018zqz,Bauer:2021gup}, we work in the eigenbasis of the digitized field operators $\hat{\varphi}_j$, i.e., we identify the computational basis states $\sket{k}_j$ at site $j$ with the eigenstates $\sket{\varphi_j^{(k)}}$.
The eigenvalues of the field operator $\hat{\varphi}_j$ are chosen as
\begin{equation}
\begin{gathered}
\label{eq:phi_op}
\hat{\varphi}_j \sket{\varphi_j^{(k)}} = (-\varphi_\text{max} + k \delta \varphi) \sket{\varphi_j^{(k)}}\,, \\ k=0, \dots, 2^{\nq}-1\,,    
\end{gathered}
\end{equation}
where $\delta \varphi = 2\varphi_{\mathrm{max}}/(2^{\nq}-1)$.
The free parameter $\varphi_\text{max}$ can be chosen to minimize the digitization errors and is theory dependent. Generally, by choosing $\varphi_\text{max}$ to scale exponentially with $\nq$~\cite{Jordan:2011ci, Klco:2018zqz, Bauer:2021gup, Bauer:2021gek, Kane:2022ejm, DAndrea:2023qnr}, the digitization errors decrease exponentially with $\nq$.
This fact implies that the number of qubits per site $\nq$ can generally be kept small, which is important to keep in mind for the discussion moving forward.

Exploiting the fact that $\hat{\pi}_j$ is conjugate to $\hat{\varphi}_j$, the momentum operators are digitized as
\begin{equation}
\begin{gathered}
    \hat{\pi}^{(D)}_j \sket{\pi_j^{(k)}} = (-\pi_\text{max} + k \delta \pi) \sket{\pi_j^{(k)}}\,,
    \\
    k=0, \dots, 2^{\nq}-1\,,
\end{gathered}
\end{equation}
where
\begin{equation}
    \hat{\pi}^{(D)}_j = \FT_j \hat{\pi}^{\vphantom{\dagger}}_j \FT_j^\dagger
\end{equation}
is the conjugate momentum operator in the momentum basis, $\FT_j$ denotes the usual discrete Fourier transform (FT) at site $j$, $\pi_\text{max} = \pi/\delta \varphi$ and $\delta \pi = 2\pi_\text{max}/(2^{\nq}-1)$.
The Hamiltonian in this digitizaiton is then written through functions of the diagonal operators $\hat{\varphi}_j$ and $\hat{\pi}_j^{(D)}$ as 
\begin{equation}
\label{eq:digitized_H}
\begin{alignedat}{9}
    \hat{H} = \sum_{i=1}^{\NLat} \Bigl( f_0\left(\hat{\varphi}_i\right) &+ \FT^\dagger_i f_1(\hat{\pi}^{(D)}_i) \FT_i \Bigr) 
    \\
    &+ \sum_{\langle ij \rangle} f_2\left(\hat{\varphi}_i - \hat{\varphi}_j \right).
\end{alignedat}
\end{equation}

\subsection{Implementation of field operators and their functions}

The operators $\hat{\varphi}_j$ and $\hat{\pi}_j^{(D)}$ are $2^{\nq}$-dimensional diagonal operators with evenly spaced entries on the diagonal and hence can be expressed as a linear combination of $\nq$ single-qubit Pauli-Z gates~\cite{Welch_2014, Klco:2018zqz}.
For the field operator $\hat \varphi$ for example (dropping the site indices for brevity), one can write
\begin{equation}
\label{eq:pauli_decomp_phi}
    \hat{\varphi} = -\frac{\varphi_{\rm max}}{2^{\nq}-1} \sum_{m=0}^{\nq-1} 2^{\nq-1-m} Z_m\,,
\end{equation}
where $Z_m$ is a Pauli-Z gate acting on the $m$\textsuperscript{th} qubit of the local Hilbert space considered.
Aside from the overall prefactor, the operator $\hat{\pi}^{(D)}$ has an identical gate representation.
Given that the operators $\hat{\xi}_k$ are diagonal, one can readily determine complicated functions $f_k(\hat{\xi}_k)$ using classical resources by simply applying the function $f_k$ to the diagonal elements of $\hat{\xi}_k$. 

\subsection{LCU to Block Encode Individual Terms}
In this section, we review the cost of using standard LCU techniques to BE the individual $f_k(\hat \xi_k)$ terms.
For a more detailed analysis, see Ref.~\cite{Hariprakash:2023tla} and references therein.


Consider first using LCU to BE $f_0(\hat \varphi)$ and $f_1(\hat\pi^{(D)})$
From \cref{eq:pauli_decomp_phi}, we see that a general degree $d$ function of $\hat \varphi$ (or $\hat\pi^{(D)}$) is a sum of $\OO(\nq^d)$ Pauli strings.
It is important to note that the number of Pauli strings is upper bounded by $2^{\nq}$.
This, combined with \cref{LCU}, implies that the asymptotic gate cost to BE $f_0(\hat \varphi)$ and $f_1(\hat\pi^{(D)})$ is either $\OO(\nq^d \log \nq)$ or $\OO(\nq 2^{\nq})$, depending on the degree $d$ and value of $\nq$ used.

Turning to the final term, for a degree $d$ function, $f_2(\hat \varphi_i - \hat \varphi_j)$ is a sum of $\OO(\nq^d)$ terms.
In this case, the number of Pauli strings is upper bounded by $2^{2\nq}$. 
Using similar arguments, the asymptotic gate cost to BE $f_2(\hat \varphi_i-\hat\varphi_j)$ is either $\OO(\nq^d \log \nq)$ or $\OO(\nq 2^{2\nq})$.

Lastly, we discuss the scale factor when using LCU.
The scale factor is the sum of the absolute values of the coefficients of the unitary terms being added together.
Given a particular $f_k(\hat \varphi_k)$, the scale factor can be readily determined using using \cref{eq:pauli_decomp_phi}.
The scale factors for operators considered in Sec.~\ref{sec:scalar_field_theory_numerics} are derived in Appendix~\ref{app:lcu_scale_factor}.

\subsection{QSVT to Block Encode Individual Terms \label{ssec:qsvtbe}}

In the context of quantum simulation, the QSVT algorithm is typically utilized as a way of constructing polynomial functions of Hamiltonian operator~\cite{Gilyen:2018khw,Lin:2022vrd}.
In this section we use it to construct the functions $f_k(\hat{\xi}_k)$ and demonstrate that QSVT can also be used as a highly-efficient tool for constructing block encodings.
The high level flow of this technique is depicted in Fig.~\ref{fig:alg_outline}.
For this discussion we assume the functions $f_k$ to have even parity.

The building block in the QSVT circuit is the BE of the simple operator $\hat{\xi}_k$. 
From Eq.~\eqref{eq:pauli_decomp_phi} we see that each $\hat{\xi}_k$ is written as a sum of $\OO(\nq)$ Pauli-Z gates, each of which are unitary. 
The BE $U_{\hat{\xi}_k}$ can therefore be constructed with LCU using $\OO(\nq \log \nq)$ gates and $\OO(\log \nq)$ ancillary qubits.

If $f_k$ is a polynomial function, QSVT can prepare the BE $U_{f_k}$ using a constant number of calls to $U_{\hat{\xi}_k}$. 
For non-polynomial functions, one can express the function $f_k(\hat{\xi}_k)$ as a (potentially infinite) Chebyshev series of the operator $\hat{\xi}_k$, and then use QSVT to construct the BEs $U_{f_k}$. 

To determine the Chebyshev expansion, we write
\begin{equation}
    \sum_{j=0} c_{2j}  T_{2j}\!\left(\frac{\hat{\xi}_k}{\alpha}\right) = \frac{1}{\beta} f_k(\hat{\xi}_k)\,,
\end{equation}
where $\alpha$ is the scale factor used in the BE of $\hat{\xi}_k$ and $\beta$ normalizes the function $f_k(\hat{\xi}_k)$ which can be obtained classically.  
The upper limit of the sum has been left blank to stress the series can be a sum of a finite or infinite number of terms.
The coefficients $c_{2j}$ can be then be found using the orthogonality relations of Chebyshev polynomials.
The gate complexity depends on the convergence of the Chebyshev polynomial approximation, and for a degree-$d$ polynomial the gate complexity is therefore given by using $\OO(d \, \nq \log \nq)$.
The scale factor of the BE is $\beta = \|f_k(\hat{\xi}_k)\|$.

We conclude by pointing out that using $\hat{\varphi}_i$ and $\hat{\pi}_i$ as building blocks to construct a BE of the Hamiltonian is not a unique choice.
One could equivalently use, e.g., $\hat{\varphi}_i^2$ and $\hat{\pi}_i^2$, and modify the Chebyshev polynomial used accordingly.
However, because the number of terms, as well as the length of the largest Pauli string, in $\hat{\varphi}_i^n$ (and $\hat{\pi}_i^n$) scales as $\OO(\nq^n)$, using the simpler building blocks $\hat{\varphi}_i$ and $\hat{\pi}_i$ will result in better asymptotic scaling with $\nq$.
We leave investigations of alternative constructions for future work.

\subsection{QETU to Block Encode Individual Terms \label{ssec:qetube}}

While QETU was initially proposed to eliminate the need for block encoding in early fault-tolerant quantum algorithms~\cite{Dong:2022mmq}, Ref.~\cite{Kane:2023jdo} has investigated its utility for preparing functions of Hermitian operators, using Gaussian states as an example.
Below, we show that QETU can be utilized for constructing block encodings for Hamiltonians of the form~\eqref{eq:general_H}.

Preparing BEs using QETU is conceptually similar to using QSVT. 
Following the procedure outlined in Fig.~\ref{fig:alg_outline},
the building block used is $\me^{-i \tau \hat{\xi}_k}$, which is the simplest unitary operator containing information about $\hat \xi_k$. 
By performing repeated calls to $\me^{-i \tau \hat{\xi}_k}$, QETU can prepare a BE of $f_k(\hat{\xi}_k)$.

The remainder of this section is split into two parts.
We first describe the technical details of using QETU to prepare general matrix functions.
From there, we discuss the computational cost of preparing BEs of $f_k(\hat{\xi}_k)$.




\subsubsection{Technical details}
We begin by discussing QETU for a general function $f(\hat{A})$ of a Hermitian matrix $\hat{A}$ (not necessarily diagonal) with known eigenvalues $\hat{A} \sket{a_j} = a_j \sket{a_j}$, and will specialize to $f_k(\hat{\xi}_k)$ later.
We will call the minimum and maximum eigenvalues $a_{\rm min}$ and $a_{\rm max}$, respectively.
QETU allows to start from an implementation of the operator $\me^{-i \tau \hat A}$ and implement a BE of the function $F(\cos(\tau \hat A / 2))$. 
Since one needs to shift the spectrum of the operator 
$\tau \hat{A}$ to lie in the range $[0, 2\pi]$,
we therefore define a shifted operator
\begin{equation}
\label{eq:shifted_A_qetu}
    \hat{A}_\text{sh} = c_1 \hat{A} + c_2\unit\,,
\end{equation}
where
\begin{equation}
    c_1 = \frac{\pi}{a_\text{max} - a_\text{min}}\,, \quad c_2 = - c_1 a_\text{min}\,,
\end{equation}
and require $\tau \leq 2$.
In order to implement the desired function $f_k(\hat A) / \beta$ (where $\beta$ is required to normalize the function to allow a BE), one needs to choose
\begin{equation}
\label{eq:Fexdef}
    \Fex(x) = \frac{1}{\beta} \fex\!\left(\frac{\frac{2}{\tau}\arccos(x) - c_2}{c_1}\right)
    .
\end{equation}

Note that the direct implementation of QETU applies only in cases when $\Fex(x)$ has definite parity.
Since the function in~\cref{eq:Fexdef} does not generally have definite parity, one would need LCU to combine even and odd functions together to obtain the general function $F(\hat A)$. 
However, as shown in Ref.~\cite{Kane:2023jdo}, the need for LCU to add the even and odd pieces can, in certain cases, be avoided altogether by choosing by choosing $\tau$ in such a way as to make $\Fex(x)$ have definite parity.
To see this, note that if one chooses $\tau = \pi/c_2$, the function $\Fex(x)$ becomes
\begin{equation}
\label{eq:Fx_qetu_arcsin}
\begin{split}
    \Fex(x)\Big|_{\tau = \pi/c_2} &= \frac{1}{\beta}\fex\!\left(-\frac{2}{\pi} \frac{c_2}{c_1} \arcsin(x) \right)
    \\
    &= \frac{1}{\beta}\fex\!\left(\frac{2 a_{\rm min}}{\pi}\arcsin(x) \right)
\end{split}
\end{equation}
where we have used $\arcsin(x) = \pi/2 - \arccos(x)$.
Because $\arcsin(x)$ has odd parity, the function $\Fex(x)$ has the same parity as the function $\fex(x)$.
This means that LCU is not required if the original function $f_k(\hat A)$ one aims to implement has definite parity. 
Note if $\tau > 2$, this technique cannot be applied and one must construct the even and odd pieces separately.
Importantly, all operators $\hat{\xi}_k$ considered in this work are digitized such that their minimum and maximum eigenvalues are equal, which implies $\tau = 2$.

Next, we consider the scale factor $\beta$, which is given by $\beta = \max_{x\in [-1,1]} |F(x)|$. 
Using the fact that $\arcsin(x) \in [-\frac{\pi}{2}, \frac{\pi}{2}]$, we can rewrite the scale factor as 
\begin{equation}
\label{eq:qetu_scale_factor}
    \beta = \max_{y \in [-a_{\rm min}, a_{\rm min}]} |f(y)|  \,.  
\end{equation}
We will use this to determine the scale factor for the BEs of interest in the next section.

The final observation we make regarding $\Fex(x)$ is that, for general functions $\fex(x)$, its first derivative diverges at $x = \pm 1$ due to the presence of the $\arcsin(x)$ (or $\arccos(x)$) term.
Chebyshev approximations of functions with $m$ continuous derivatives on $x \in [-1,1]$ are known to converge polynomially to the true function with the typical rate $(1/n_\text{Ch})^m$, where $n_\text{Ch}$ is the number of Chebyshev polynomials used in the approximation~\cite{trefethen_apprx_theory}.
This is to be contrasted with the exponential convergence of Chebyshev series to infinitely differentiable functions.
If one needed to reproduce $\Fex(x)$ for all $x \in [-1,1]$, this poor convergence could lead to a large gate cost.

As was shown in Ref.~\cite{Kane:2023jdo} this issue can be overcome by exploiting the fact that the spectrum of the operator $\hat{A}_\text{sh}$ is known exactly 
and therefore one only has to reproduce $\Fex(x)$ at those finitely many points. 
If $\hat{A}_\text{sh}$ is represented using $\nq$ qubits, there are $\OO(2^{\nq})$ such values, and one can therefore reproduce the function at those points exactly by using $\OO(2^{\nq})$ Chebyshev polynomials.
In this way, one avoids the poor polynomial convergence of the Chebyshev expansion, at the cost of introducing an exponential scaling with $\nq$.
This exponential scaling, however, does not pose a problem as long we work with small values of $\nq$ and as previously discussed, this is precisely the case in lattice field theories owing to the fact that the digitization errors generally decrease exponentially with $\nq$.

\subsubsection{Cost to prepare block-encodings of bosonic operators}

We now describe the cost of using QETU to prepare BEs of $f_k(\hat{\xi}_k)$, where, again, each $f_k$ is assumed to have even parity.
After constructing the shifted operator $\hat{\xi}_{k, {\rm sh}}$ according to Eq.~\eqref{eq:shifted_A_qetu}, the building block in the QETU circuit is a simple operator $e^{-i \tau \hat{\xi}_{k,{\rm sh}}}$.
Because each $\hat{\xi}_{k, {\rm sh}}$ is still a diagonal operator with evenly spaced eigenvalues (or a Kronecker sum of two such operators in the case of $\hat{\xi}_2 \in \{\hat\varphi_i-\hat\varphi_j\}$), they can be written as a sum of $\OO(\nq)$ single-qubit Pauli-Z gates, and $e^{-i \tau \hat{\xi}_{k,{\rm sh}}}$ can be implemented using only $\OO(\nq)$ single-qubit $R_z$ gates.
A single controlled call to $e^{-i \tau \hat{\xi}_{k,{\rm sh}}}$ therefore requires $\OO(\nq)$ of $R_z$ and CNOT gates.
As discussed, QETU can prepare the BE $U_{f_k}$ exactly using a polynomial of degree equal to the number of times we sample the function $f_k$.

The operators $\hat{\varphi}_i$ and $\hat{\pi}_i^{(D)}$ are sampled $2^{\nq}$ times, implying one must use a degree-$\OO(2^{\nq})$ polynomial to reproduce $\Fex_0(\hat{\varphi}_i) \sim \fex_0(\arcsin(\hat{\varphi}_i))$ and $\Fex_1(\hat{\pi}^{(D)}_i) \sim \fex_1(\arcsin(\hat{\pi}^{(D)}_i))$ exactly.
Note that one can reduce the degree of the polynomial by a factor of two by exploiting the fact that the operators are sampled symmetrically about zero and that $f_0$ and $f_1$ are even functions.
Therefore, the asymptotic gate cost of preparing the BEs $U_{f_0}$ and $U_{f_1}$ using QETU is $\OO(\nq 2^{\nq})$, independent of the degree $d$ of the polynomial of $f_0$ and $f_1$.
To determine the scale factor, note that the minimum eigenvalue of $\hat{\varphi}$ ($\hat{\pi}$) is $-\varphi_{\rm max}$ ($-\pi_{\rm max}$).
Using Eq.~\eqref{eq:qetu_scale_factor}, the scale factor for $U_{f_0}$ ($U_{f_1}$) is therefore $\max_{x \in [-\varphi_{\rm max}, \varphi_{\rm max}]} |f_0(x)|$ ($\max_{x \in [-\pi_{\rm max}, \pi_{\rm max}]} |f_1(x)|$.
For many systems of physical interest, including scalar field theories (considered in Sec.~\ref{sec:scalar_field_theory_numerics}), the maximum value of $f_0, f_1$ occur at the largest argument they are evaluated.
In this situation, the scale factors for $U_{f_0}$ and $U_{f_1}$ are given by the smallest possible values, namely $\|f_0(\hat{\varphi})\|$ and $\|f_1(\hat{\varphi})\|$, respectively.

The final operator we need to block-encode is $\fex_2(\hat{\varphi}_i - \hat{\varphi}_j)$. 
Because $(\hat{\varphi}_j-\hat{\varphi}_i)$ is a $2^{2\nq}$-dimensional matrix, one would naively expect that $\OO(2^{2 \nq})$ Chebyshev polynomials are needed to exactly reproduce $\Fex_2(\hat{\varphi}_j-\hat{\varphi}_i)$ at all the $2^{2\nq}$ values it is sampled.
However, due to the symmetric digitization of $\hat{\varphi}$, the matrix $(\hat{\varphi}_j-\hat{\varphi}_i)$ only has $\OO(2^{\nq})$ unique values, and can be implemented exactly using $\OO(2^{\nq})$ Chebyshev polynomials. 
The asymptotic gate complexity of preparing the BE $U_{f_2}$ using QETU is therefore $\OO(\nq 2^{\nq})$, independent of the degree $d$ of the polynomial of $f_2$.
Because the minimum eigenvalue of $(\hat{\varphi}_j-\hat{\varphi}_i)$ is $-2\varphi_{\rm max}$, the scale factor is $\max_{x \in [-2\varphi_{\rm max}, 2\varphi_{\rm max}]}$.
As before, for many physical systems, the scale factor is given by the minimum value of $\|f_2(\hat \varphi_i-\hat \varphi_j)\|$.

Additional gate cost savings can be achieved if one uses the control-free version of QETU~\cite{Dong:2022mmq}.
In practice, doing so reduces the number of $R_z$ gates in the controlled call to $e^{-i \tau \hat{\xi}_{k, {\rm sh}}}$ by a factor of 2.
A detailed procedure for implementing the control-free version of QETU for diagonal operators can be found in Ref.~\cite{Kane:2023jdo}.

\subsection{\LOVELCU\label{ssec:love}}

In the previous section, we showed that, by employing $\me^{i \tau \hat{\xi}_k}$ as a building block, QETU can be used to construct BEs of $f_k(\hat{\xi}_k)$ with an asymptotic gate cost $\OO(\nq 2^{\nq})$.
In this section, we study how the cost changes if different building blocks are utilized.
First, we will argue that, in general, it is not possible to eliminate the exponential scaling with $\nq$.
With this fundamental limitation in mind, we show that using more complicated building blocks can reduce the asymptotic cost to prepare a BE.
Specifically, the cost of preparing $f(\hat{\xi}_k)$ is reduced from $\OO(\nq 2^{\nq})$ to $\OO(d2^{\nq})$, where $d$ is the degree of $f$.
Finally, extending this logic further, we develop a conceptually simple framework that does not require QETU, and can prepare BEs of $f(\hat{\xi}_k)$ using $\OO(2^{\nq})$ gates, independent of the degree $d$ of $f$.



\subsubsection{Main idea}

To begin the discussion, recall that the exponential scaling in $\nq$ arises from the need to work with functions whose derivatives diverge, such as $\arccos(x)$ or $\arcsin(x)$.
This result generalizes, and any building block that requires working with these functions will exhibit the same exponential scaling.

To understand what functions fall into this class, consider the general building block $\me^{-i \tau g(\hat{\xi}_k)}$. 
Using this building block, QETU returns $F(\cos(\frac{\tau g(\hat{\xi}_k)}{2}))$, which implies that the function we need to implement using Chebyshev polynomials is $F(x) \sim f(g^{-1}(\frac{2}{\tau} \arccos(x)))$.
Any function $g(x)$ that results in $f(g^{-1}(\frac{2}{\tau} \arccos(x)))$ having divergent derivatives for $x \in [-1, 1]$ will consequently require $\OO(2^{\nq})$-degree Chebyshev polynomials to produce $F(x)$ exactly.

Consider first choosing $g(x) = x^m$ for $m = \{1, 2, \dots\}$.
As was the case with $m=1$, the function $F(x)$ diverges as $x = \pm 1$ for, and therefore requires $\OO(2^{\nq})$ Chebyshev polynomials to reproduce $F(x)$ exactly. 
Because each call to $\me^{-i \tau g(\hat{\xi}_k)}$ requires $\OO(\nq^m)$ gates, the cost of preparing a BE of $f(\hat{\xi}_k)$ using $\me^{-i \tau g(\hat{\xi}_k)}$ as a building block is $\OO(\nq^m 2^{n_q})$.

Consider now using non-polynomial $g(x)$.
Further suppose that some $g(x)$ exists (which we discuss next) that results in $F(x) \sim f(g^{-1}(\frac{2}{\tau}\arccos(x)))$ being infinitely differentiable. 
While this choice may reduce the number of Chebyshev polynomials required, the cost of implementing $\me^{-i \tau g(\hat{\xi}_k)}$ for non-polynomial $g(x)$ is generally $\OO(2^{\nq})$.

Taken together, these arguments imply that the cost of preparing BEs of $f_k(\hat{\xi}_k)$ using QETU generally scales exponentially with $\nq$.
With this in mind, the rest of this section focuses on finding a $g(x)$ that reduces the cost from $\OO(\nq 2^{\nq})$ to $\OO(2^{\nq})$.

\subsubsection{Technical details}

Our objective is to identify $g(x)$ such that $F(x) \sim f(g^{-1}(\frac{2}{\tau}\arccos(x)))$ is infinitely differentiable.
The form of $F(x)$ suggests that $g(x) \sim \arccos(x)$ is a good starting point.
The easiest way to accomplish this is to use the building block $\me^{-i 2 \arccos(\hat{\xi}_k/\alpha)}$, where $\alpha = \|\hat{\xi}_k\|$ to ensure the $\arccos$ is well defined.
Setting $\tau=1$ and replacing $\hat{A}$ in~\cref{thm:qetu} with $2 \arccos(\hat{\xi}_k/\alpha)$, we observe that the QETU circuit now returns $F(\hat{\xi}_k/\alpha)$, which is only a function of $\hat{\xi}_k$.
The procedure for determining the Chebyshev expansion of $f_k(\hat{\xi}_k)$ is now identical to that explained in~\cref{ssec:qsvt}; 
degree-$d$ polynomial functions of $\hat{\xi}_k$ can be implemented using $\OO(d)$ controlled calls to $\me^{-i 2\arccos(\hat{\xi}_k/\alpha)}$.

Implementing $\me^{-i 2\arccos(\hat{\xi}_k/\alpha)}$, however, is more complicated than $\me^{-i \tau \hat{\xi}_k}$.
While the operator $\arccos(\hat{\xi}_k/\alpha)$ is still diagonal, and can easily be determined classically, it is an infinite degree Chebyshev polynomial in its argument $\hat{\xi}_k/\alpha$.
This implies that its Pauli decomposition consists of $\OO(2^{\nq})$ identity and $R_z$ gates~\cite{Welch_2014} and a controlled implementation of the operator $\me^{-i 2\arccos(\hat{\xi}_k/\alpha)}$ requires $\OO(2^{\nq})$ CNOT and $R_z$ gates. 
On the other hand, constructing degree-$d$ polynomial functions $f_k(\hat \xi_k)$ out of this building block only requires an additional multiplicative factor of $d$, giving an overall asymptotic scaling of $\OO(d 2^{\nq})$. 

We thus see that choosing more complicated building blocks, such as $\me^{-i 2\arccos(\hat{\xi}_k/\alpha)}$ which requires exponential (in $\nq$) resources to implement, can potentially make constructing the functions $f_k(\hat \xi_k)$ simpler. One such method that we propose in this section, which we call Linear Operators Via Exponentiation and Linear Combination of Unitaries (\LOVELCU), uses $\me^{\pm i \arccos (f_k(\hat{\xi}_k)/\beta)}$
as a building block.
Since 
\begin{align}
    \frac{f_k(\hat{\xi}_k)}{\beta} = \frac{\me^{i \arccos \left(\frac{f_k(\hat{\xi}_k)}{\beta}\right)} + \me^{-i \arccos \left(\frac{f_k(\hat{\xi}_k)}{\beta}\right)}}{2}
    \,,
\end{align}
the construction of $f_k(\hat{\xi}_k)$ only needs LCU to combine these two terms.
This can be implemented using a simple LCU circuit, shown in~\cref{fig:lcu_arccos}, with $\PREPARE = \Hgate \otimes \unit_s$ and
\begin{equation}
\label{eq:sel_love_lcu}
\begin{alignedat}{9}
    \SELECT = \sket{0}\sbra{0}\otimes\, &\me^{i \arccos \left(\frac{f_k(\hat{\xi}_k)}{\beta}\right)} 
    \\
    &+ \sket{1}\sbra{1}\otimes \me^{-i \arccos \left(\frac{f_k(\hat{\xi}_k)}{\beta}\right)}\,.
\end{alignedat}
\end{equation}
The scale factor $\beta$ can be chosen as the smallest possible value $\beta = \|f_k(\hat{\xi}_k)\|$.
    
The construction of the building block is very similar to before. The operators $\me^{\pm i \arccos \left(f_k(\hat{\xi}_k)/\beta\right)}$ are diagonal, which can be determined in a straightforward way using classical resources.
It can then be decomposed into Pauli strings containing only the identity or $\Zgate$-gates; in general, the decomposition will contain $\OO(2^{\nq})$ Pauli strings.
The circuits for the diagonal unitary matrices $\me^{\pm i \arccos(\hat{A})}$ can then be implemented using at most\footnote{If the operator $\hat{A}$ has definite parity, for example $\hat{A} = \hat\varphi^n$ for some polynomial $n$, then only half of the Pauli strings will have non-zero coefficients.
This can be understood by noting that $\arccos(x) = \pi/2 - \arcsin(x)$, which implies that $\arccos(x)$ has the same parity of the argument up to an overall constant shift.} $2^{\nq}-1$ $R_z$ gates and $2^{\nq} - 2$ CNOT gates~\cite{Welch_2014},
and the entire \LOVELCU circuit requires $\OO(2^{\nq})$ gates and a single ancillary qubit.
This gate count can be further reduced by approximating the circuits for $\me^{i \arccos \hat A}$ in a systematic way by dropping rotation gates with small angles~\cite{Welch_2014, Li:2024lrl}.
\LOVELCU can therefore prepare BEs of $f_k(\hat\xi_k)$ with resources that are independent of the degree $d$ of the function $f$.

Another advantage of \LOVELCU is that it is not restricted to BEs of single variables, but can also be applied to functions of multiple variables, \emph{i.e.} the gate cost of using \LOVELCU to construct a BE of the operator $f(\hat\varphi_1-\hat\varphi_2)$ is the same as a general function $f(\hat\varphi_1, \hat\varphi_2)$.
This implies that, instead of preparing BEs of $f_0(\hat{\varphi}_i)$, $f_0(\hat{\varphi}_j)$ and $f_2(\hat{\varphi}_i-\hat{\varphi}_j)$ separately and then adding them using another layer of LCU, one can directly use \LOVELCU to prepare a BE of the sum $f_0(\hat{\varphi}_i) + f_0(\hat{\varphi}_j) + f_2(\hat{\varphi}_i-\hat{\varphi}_j)$.

The cost of \LOVELCU can be further reduced by exploiting the structured form of the $\SELECT$ oracle.
In particular, $\SELECT$ oracles of the form in Eq.~\eqref{eq:sel_love_lcu} can be prepared with the techniques used in the control-free implementation of QETU.
Rather than performing two controlled calls to $\me^{-i \arccos(f(\hat{\xi}_k/\alpha))}$, one can construct $\SELECT$ for the cost of a single non-controlled circuit for $\me^{-i \arccos(f(\hat{\xi}_k/\alpha))}$ and an additional $2 \nq$ CNOT gates; the procedure for this control-free implementation can be found in Appendix~B of Ref.~\cite{Kane:2023jdo}.

While here we focused on using \LOVELCU for block encoding diagonal operators, a similar construction can be applied to general Hermitian operators. 
The procedure is similar to the diagonal operator case; for an $\nq$-qubit operator $\hat A$, one would evaluate $\me^{-i \arccos(\hat A)}$ classically, and then exactly decompose the unitary operator into $\OO(4^{\nq})$ gates~\cite{Shende:2006onn}.

\begin{figure}[t]
    \centering
    \begin{quantikz}[row sep={1cm,between origins}, column sep=0.25cm]
        \lstick{$\sket{0}$} & \qw & \gate{\Hgate} & \ctrl{1} & & \octrl{1} & \gate{\Hgate} &
        \\
        \lstick{$\sket{\psi}$} & \qw & \qw & \gate{\me^{i \arccos \hat{A}}} & \qw & \gate{\me^{-i \arccos \hat{A}}} & \qw &
    \end{quantikz}
    \caption{LCU circuit for block encoding of operator $\hat{A}$.}
    \label{fig:lcu_arccos}
\end{figure}
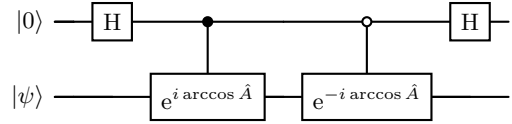

\subsection{Constructing full Hamiltonian Block Encoding and Walk Operator \label{ssec:UH_WH_general}}

The previous sections discussed methods for constructing BEs of local terms in the Hamiltonian in Eq.~\eqref{eq:general_H}.
In this section, we discuss how to construct a BE of the full Hamiltonian $U_{\hat{H}}$.
Furthermore, we show that there exists a simple operator $S$ satisfying the conditions in Eqs.~\eqref{eq:S_cond1} and~\eqref{eq:S_cond2}, which can be used to construct the walk operator $W_{\hat{H}}$ while avoiding the large overall prefactor when using the general qubitization procedure in Ref~\cite{Low:2016znh}.
Here we only provide a high-level procedure; the detailed proofs are given in~\cref{app:WH}.

For this discussion, we write the Hamiltonian in the general form $\hat{H} = \sum_{\ell} \hat{H}_{\ell}$, where $\hat{H}_{\ell}$ is a local term in the full Hamiltonian $\hat{H}$ that it is diagonalized by an efficiently implementable unitary (see \cref{eq:diagonal}). Each $\hat{H}^{(D)}_{\ell}$ is then given by
\begin{align}
    \hat{H}^{(D)}_{\ell} = f_{k_{\ell}}(\hat{\xi}_{k_{\ell}}) 
\end{align}
The BE of the full Hamiltonian $U_{\hat{H}}$ is constructed using a standard LCU procedure to add BEs of local terms $U_{\hat{H}_{\ell}}$.
We denote by $\chi_{\ell}$ the number of gates required to construct a given $U_{\hat{H}_{\ell}}$, and define $\chi_{\rm site} = \max_{\ell} \chi_{\ell}$.
Using LCU to add each of the $\OO(\NLat)$ local BEs $U_{\hat{H}_{\ell}}$ requires $\OO(\log \NLat)$ ancillary qubits, the register of which we denote by $a_\Lambda$.
Since the local terms $f_{k_{\ell}}(\hat{\xi}_{k_{\ell}})$ for a given $k_{\ell}$ all have identical coefficients, the $\PREPARE$ oracle will have a relatively simple circuit implementation and will be sub-leading in the asymptotic gate cost.
The $\SELECT$ oracle is given by $\SELECT = \sum_{\ell} \sket{\ell}\sbra{\ell} \otimes U_{\hat{H}_{\ell}}$, where each $U_{\hat{H}_{\ell}}$ is controlled on $\OO(\log \NLat)$ qubits, each requiring $\OO(\chi_{\rm site} \log \NLat)$ gates.
The gate cost of the full $\SELECT$ oracle, and therefore the gate cost of constructing $U_{\hat{H}}$, is $\OO(\chi_{\rm site} \NLat \log \NLat)$. 

We now describe how to construct the walk operator $W_{\hat{H}}$ in a way that avoids the expensive general qubitization procedure in Ref.~\cite{Low:2016znh}.
The first step is to determine operators $S$ for individual BEs $U_{\hat{H}_{\ell}}$ constructed using the methods in this work.
Using this information, we then identify an $S$ that can be used to construct $W_{\hat{H}}$ from the BE $U_{\hat{H}}$.

As shown in detail in Appendix~\ref{app:WH}, for BEs $U_{\hat{H}_{\ell}}$ constructed using QSVT or QETU for even polynomials, $W_{\hat{H}_{\ell}}$ can be constructed by choosing $S$ to be a single $\Zgate$-gate acting on the signal and control qubit, respectively; this is due to the highly symmetric structure of QSVT and QETU circuits.
Turning to \LOVELCU, choosing $S$ to be a single $\Zgate$-gate acting on the ancillary qubit in the \LOVELCU circuit also satisfies the relations in Eq.~\eqref{eq:S_cond} necessary to construct $W_{\hat{H}_j}$.
To see this, first observe that $\Zgate \sket{0} = \sket{0}$, which implies  that~\cref{eq:S_cond1} is automatically satisfied.
If we denote by $U_A$ the circuit in~\cref{fig:lcu_arccos}, setting $U = \me^{i \arccos \hat{A}}$, we see
\begin{equation}
\begin{split}
    (\Zgate &\otimes \unit_s ) U_A 
    \\
    &= \left(\Hgate \Xgate \otimes \unit_s \right)\left( \sket{0}\sbra{0}\otimes U + \sket{1}\sbra{1}\otimes U^\dagger \right) \left(\Hgate \otimes \unit_s \right)
    \\
    &= \left(\Hgate \otimes \unit_s \right)\left( \sket{1}\sbra{1}\otimes U + \sket{0}\sbra{0}\otimes U^\dagger \right) \left(X \Hgate \otimes \unit_s \right)
    \\
    &= U_A^\dagger (Z \otimes \unit_s ),
\end{split}
\end{equation}
where we have used the relations $\Zgate \Hgate = \Hgate \Xgate$, $\Xgate \sket{0} \sbra{0} = \sket{1}\sbra{1}\Xgate$ and $\Xgate \sket{1} \sbra{1} = \sket{0}\sbra{0}\Xgate$, which satisfies the relation in~\cref{eq:S_cond2}.

The BE $U_{\hat{H}}$ for the full Hamiltonian is constructed using LCU to combine local BEs $U_{\hat{H}_{\ell}}$, which, for BE methods used in this work, share a common $S$ operator.
By using a specific qubit organization of the individual $U_{\hat{H}_{\ell}}$'s, these properties imply that $W_{\hat{H}}$ for the full Hamiltonian can be constructed from $U_{\hat{H}}$ using the same common operator $S$, namely a single $\Zgate$-gate.
In particular, each $U_{\hat{H}_{\ell}}$ shares the same ancillary qubit register, denoted by $a$; if each $U_{\hat{H}_j}$ uses $n_{\rm anc}^{(j)}$ ancillary qubits, then the register $a$ contains $n_{a} = \max_j(n_{\rm anc}^{(j)})$ qubits.
Additionally, we make the choice that, if $n_{\rm anc}^{(j)} < n_{a}$ for some $j$, the ancillary qubits used to construct $U_{\hat{H}_{\ell}}$ are $a_0, a_1, \dots, a_{n_{\rm anc}^{(j)}-1}$.
Under these assumptions, it is shown in Appendix~\ref{app:WH} that $W_{\hat{H}}$ can be constructed by choosing $S$ to be a single $\Zgate$, specifically $S = \unit_{a_\Lambda} \otimes \left(\Zgate \otimes \unit \right)_a \otimes \unit_s$.

We conclude by stressing that this simple construction of $W_{\hat{H}}$ is applicable to situations where one uses different methods to prepare different local BEs $U_{\hat{H}_{\ell}}$.
This implies that one is free to choose the BE method that results in the lowest gate cost for each individual term $U_{\hat{H}_{\ell}}$.

\section{Numerical demonstration \label{sec:scalar_field_theory_numerics}}

In~\cref{sec:be} we described three main methods to block encode each local Hamiltonian $\hat H_{\ell}$, and provided the dependence of the required resources on the number of qubits $\nq$ and degree of the polynomial of $f_{k}$.  
We did mention at the beginning of that section that some methods come with large pre-factors, which are not captured by the asymptotics provided.
In this section we provide a more detailed numerical analysis of the resources required, including all required prefactors. 
Such an analysis requires making a detailed choice of the Hamiltonian, and we apply our techniques to the example of a scalar field theory Hamiltonian.
Next, we compare the explicit gate cost of preparing BEs of local terms in the Hamiltonian using the methods described in~\cref{sec:be}.
We then use the BE with lowest resource requirements and obtain explicit gate counts for using GQSP to time evolve a both a single and two-site system, and compare the cost to time evolution using 2\textsuperscript{nd} and 4\textsuperscript{th} order product formulas.

We consider a hypercubic lattice $\Lambda$ of spatial dimension $\dims$ and lattice spacing $a$. The Hamiltonians we consider can be written (using the notation from \cref{ssec:general_H}) as follows:
\begin{subequations}
\label{eq:hamiltonian}
\begin{alignat}{9}
    &\hat{H} = \hat{H}_\varphi + \hat{H}_\pi\,,
    \\
    \label{eq:Hphib}
    &\hat{H}_\varphi = a^\dims \left[ \sum_{i=1}^{\NLat} V\left(\hat\varphi_i\right) 
    +
    \sum_{\langle i j \rangle} \dfrac{(\hat\varphi_j - \hat\varphi_i)^2}{2a^2}\right],
    \\
    &\hat{H}_\pi = a^\dims \sum_{i=1}^{\NLat} \dfrac{1}{2}\hat\pi^2_i = a^\dims \sum_{i=1}^{\NLat} \dfrac{1}{2}\FT_i^\dagger \bigl(\hat{\pi}_i^{(D)}\bigr)^2 \FT_i\,,
\end{alignat}
\end{subequations}
where $V\left(\hat\varphi_i\right)$ denotes the potential function. In this work, we consider the following potentials \begin{align}
    V_1\left(\hat\varphi_i\right) &= \dfrac{m^2}{2} \hat\varphi_i^2 + \dfrac{\lambda}{4!} \hat\varphi_i^4\,, \\
    V_2\left(\hat\varphi_i\right) &= g \cos(\hat\varphi_i)\,,
\end{align}
where $V_1(\hat\varphi_i)$ is the standard $\varphi^4$ scalar field theory potential, and the periodic potential $V_2(\hat\varphi_i)$ is relevant for dual basis formulations of compact U(1) gauge theories~\cite{Haase:2020kaj,Bauer:2021gek}, and a mixed-basis formulation of SU(2) gauge theory~\cite{DAndrea:2023qnr}.
Setting the lattice spacing $a=1$, the functions $f_k(\hat\xi_k)$ for this Hamiltonian are
\begin{subequations}
\label{eq:local_f_scalar_ft}
\begin{alignat}{9}
    f^{(1)}_0(\hat\varphi_i) &= V_1(\hat\varphi_i) = \dfrac{m^2}{2} \hat\varphi_i^2 + \dfrac{\lambda}{4!} \hat\varphi_i^4\,,
    \\
    f^{(2)}_0(\hat\varphi_i) &= V_2(\hat\varphi_i) = g \cos(\hat\varphi_i)\,,
    \\
    f_1(\hat\pi_i) &= (\hat{\pi}_i^{(D)})^2\,,
    \\
    f_2(\hat\varphi_i-\hat\varphi_j) &= \frac{1}{2}(\hat\varphi_j - \hat\varphi_i)^2\,.
\end{alignat}
\end{subequations}

For all numerical results shown in this section, we set $m=1, \lambda = 32$ and $g=1$.
To obtain the gate counts we use the basis gate-set containing CNOT, $R_x$, and $R_z$ gates.
There are several options for compiling multicontrolled gates.
One choice is to use ancillary qubits as in Ref.~\cite{Nielsen:2012yss} and then compile the resulting circuit down into CNOT, $R_x$, and $R_z$ gates using, e.g., QISKIT transpilers.
Another choice is to exploit the fact that all multicontrolled gates required for this work are diagonal matrices, which implies they can be compiled down into $R_z$ and CNOT gates using the Walsh function formalism~\cite{Welch_2014}.
For the circuits studied in this work, we found that the latter choice resulted in smaller total gate counts, with significant reductions in some cases.
All $\PREPARE$ oracles are implemented using QISKIT's exact state preparation algorithm.
After compiling multicontrolled gates in this way, the final gate count in terms of CNOT, $R_x$ and $R_z$ gates is obtained using QISKIT transpiler with optimization level 1.

Following the discussion in Sec.~\ref{sec:be}, the scale factor for $f_0^{(1)}(\hat \varphi_i)$, $f_1(\hat \pi_i)$, and $f_2(\hat\varphi_i-\hat\varphi_j)$ is the same for all BE methods considered (see Appendix~\ref{app:lcu_scale_factor} for a derivation of the scale factors when using vanilla LCU).
Furthermore, the associated scale factors are given by their minimum values.
As explained in more detail in Appendix~\ref{app:lcu_scale_factor}, the same is not true for $f_0^{(2)} = g \cos(\hat \varphi)$; using standard LCU techniques results in a value of the scale factor $\sim 30\%$ larger than optimal.
However, as we will show, \LOVELCU has exponentially better complexity in $\nq$ compared to standard LCU, and this slight difference in scale factor does not change our qualitative findings.

\Cref{fig:be_cost_indiv_terms} shows the number of rotation gates and ancillary qubits required to construct BEs of the functions in~\cref{eq:local_f_scalar_ft}.
Plots of CNOT gate counts are shown in~\cref{app:cnot_count}.

Considering first constructing BEs of $f_0^{(1)}(\hat{\varphi}_i)$ and $f_1(\hat\pi_i^{(D)})$, we find that, despite the asymptotically inefficient scaling with $\nq$, using \LOVELCU requires the fewest rotation gates for $\nq \lesssim 11$; for $\nq=4$ qubits, \LOVELCU requires only 13 rotation gates.
For $\nq \gtrsim 11$, due to the linear scaling with $\nq$, using QSVT requires the fewest rotation gates. 
However, because digitization errors typically decrease exponentially with $\nq$~\cite{Klco:2018zqz, Bauer:2021gek}, realistic quantum simulations will not require so large a value of $\nq$, implying that in some cases of practical interest \LOVELCU will require the fewest rotation gates.
Furthermore, QSVT requires $\OO(\log \nq)$ ancillary qubits, while \LOVELCU requires only a single ancillary qubit.

\begin{figure*}[htp]
    \centering
    \includegraphics[width=0.48\textwidth]{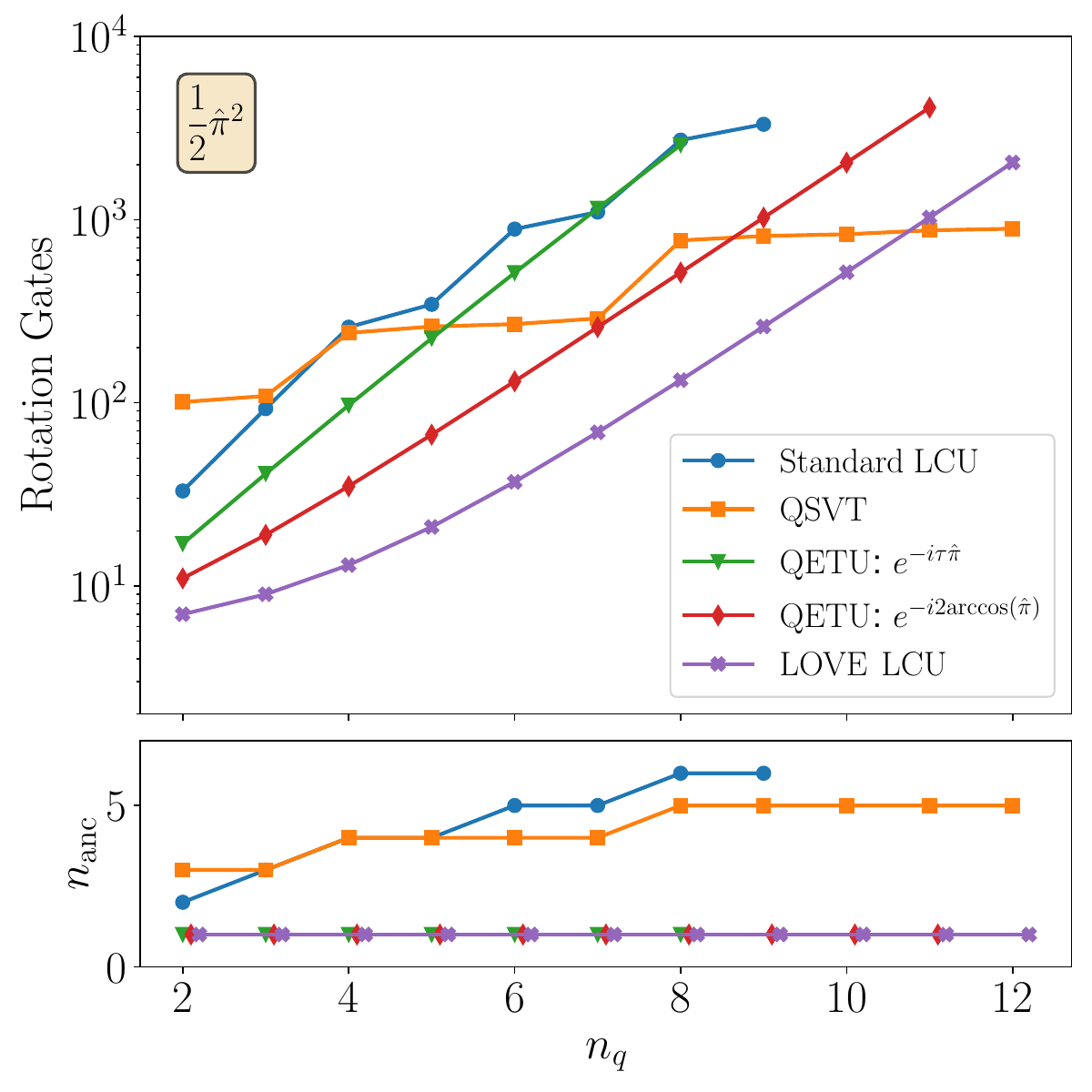}
    \includegraphics[width=0.48\textwidth]{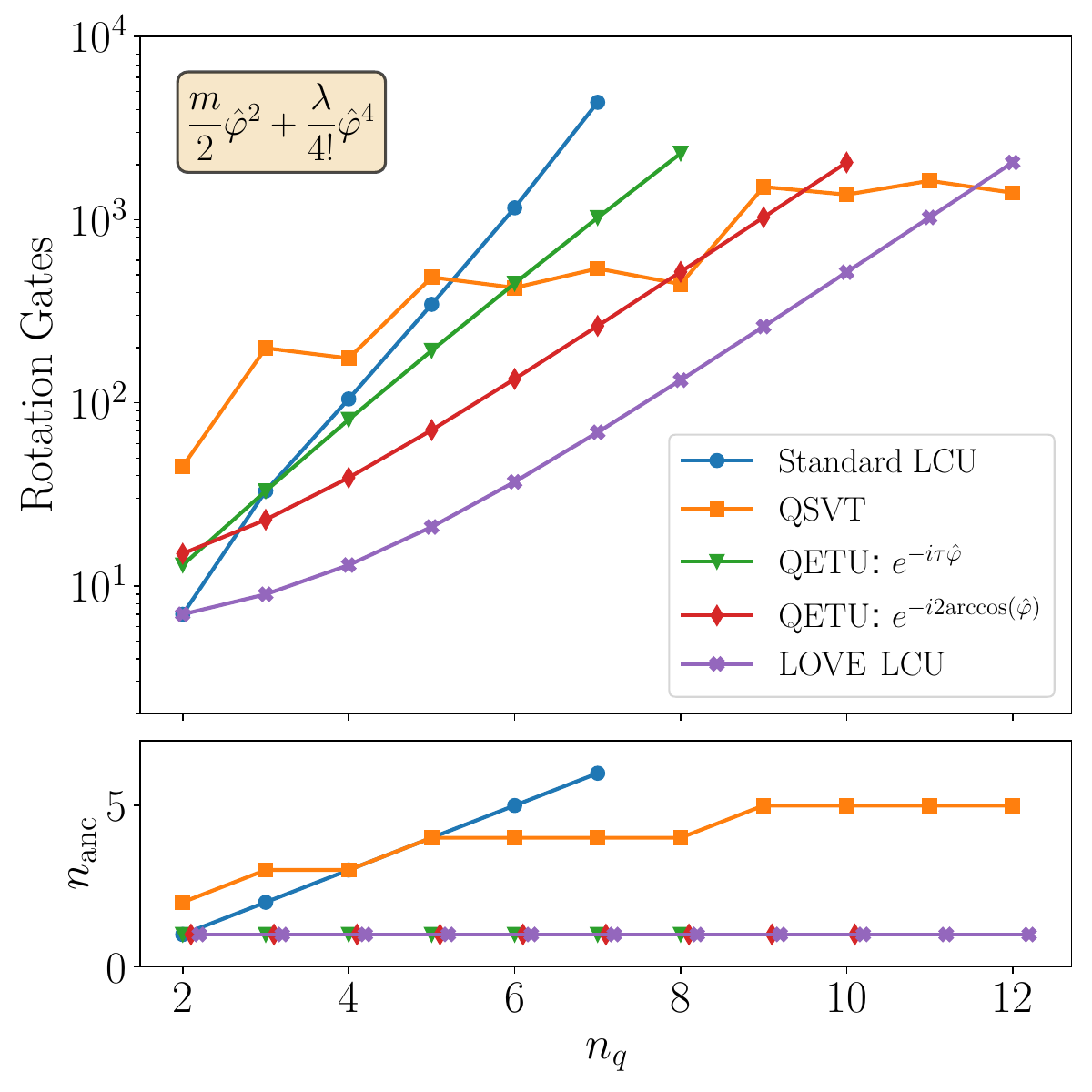}
    \includegraphics[width=0.48\textwidth]{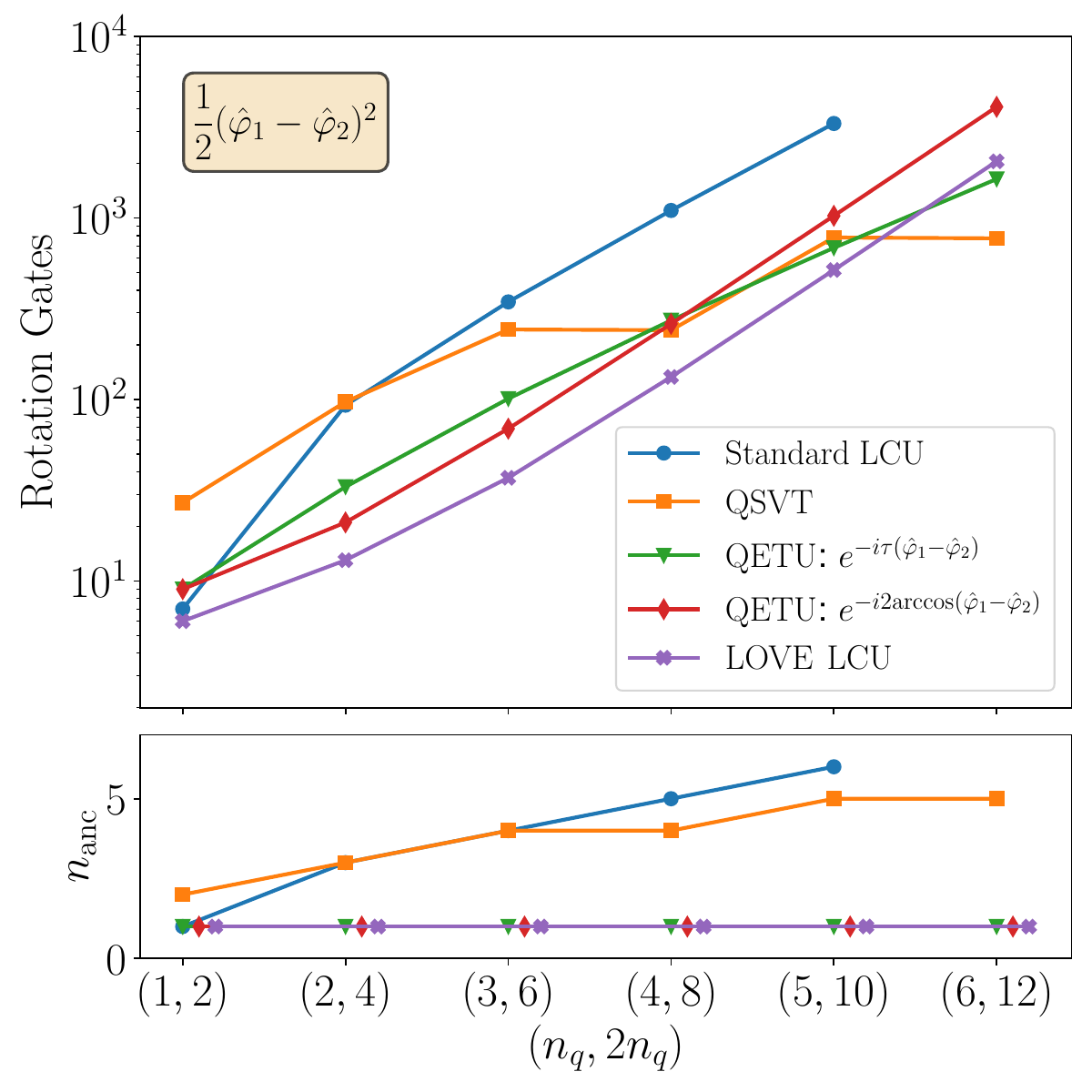}
    \includegraphics[width=0.48\textwidth]{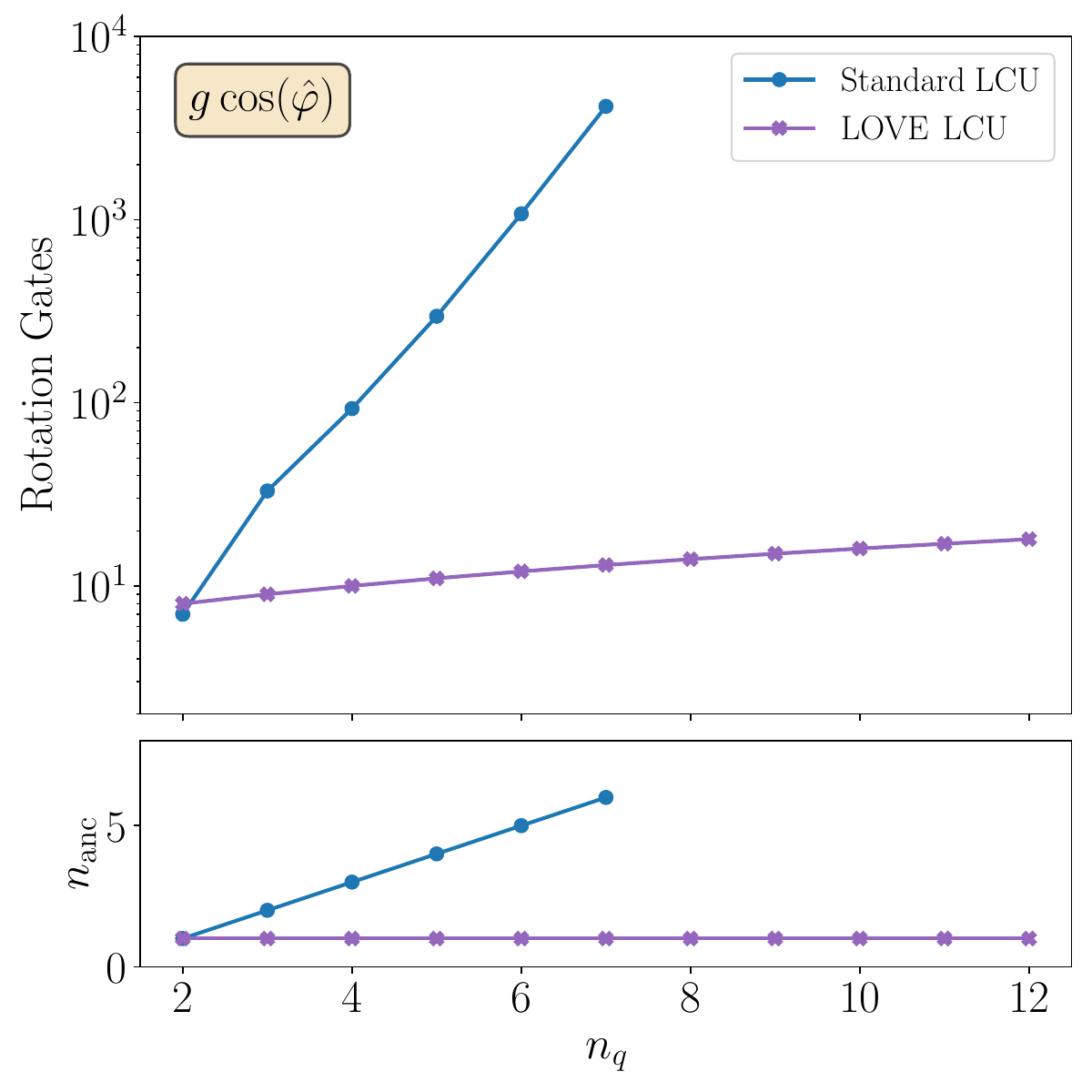}
    \caption{Rotation gate count and number of ancillary qubits required to block encode local bosonic operators. 
    The top left, top right, bottom left, and bottom right plots show resource requirements to block encode $\frac{1}{2}\hat\pi^2$, $\frac{m}{2}\hat\varphi^2+\frac{\lambda}{4!} \hat\varphi^4$, $\frac{1}{2}(\hat\varphi_1-\hat\varphi_2)^2, g\cos(\hat\varphi)$, respectively, for $m=1, \lambda=32, g=1$.
    Different colored and shaped data points correspond to different methods to prepare the BE.
    Points with the same number of ancillary qubits have been shifted slightly for clarity.
    For the single-site operators $\hat\pi^2$ and $\frac{1}{m}\hat\varphi^2+\frac{\lambda}{4!} \hat\varphi^4$, \LOVELCU requires the fewest rotation gates for $\nq \leq 11$.
    For the two-site operator $\frac{1}{2}(\hat\varphi_1-\hat\varphi_2)^2$, \LOVELCU requires the fewest rotation gates for $\nq < 6$.
    \LOVELCU constructs a BE of $\cos(\hat\varphi)$ using the fewest gates for all $\nq$.}
    \label{fig:be_cost_indiv_terms}
\end{figure*}

\begin{figure}[htp]
    \centering
    \includegraphics[width=\linewidth]{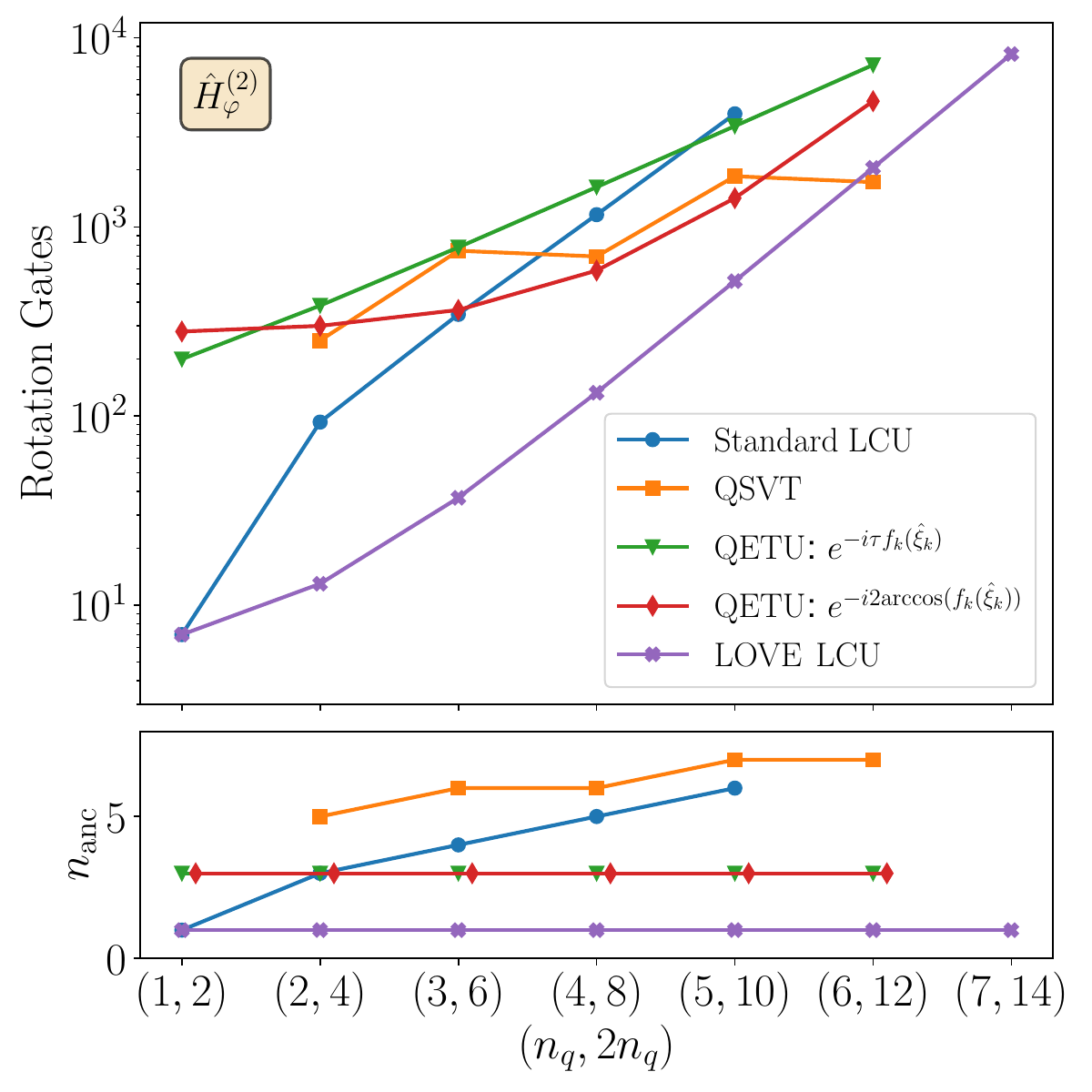}
    \caption{Rotation gate count and number of ancillary qubits required to block encode the two site Hamiltonian $\hat{H}^{(2)}_\varphi$ in Eq.~\eqref{eq:Hphi_two_site}.
    Different colored and shaped data points correspond to different methods to prepare the BE.
    Points with the same number of ancillary qubits have been shifted slightly for clarity.
    While QSVT and QETU based methods require using a final layer of LCU to add the local BEs $f_0^{(0)}(\hat\varphi_1), f_0^{(0)}(\hat\varphi_2), f_2(\hat\varphi_1-\hat\varphi_2)$, \LOVELCU does not.
    For this reason, \LOVELCU requires only a single ancillary qubit.
    This advantage also leads to \LOVELCU outperforming all other methods for $\nq \lesssim 6$; for $\nq>6$, QSVT outperforms other methods due to the relative exponentially improved asymptotic scaling.}
    \label{fig:rot_anc_be_Hphi_two_site}
\end{figure}

We now turn to preparing a BE of $f_2(\hat\varphi_i-\hat\varphi_j)$.
Because this operator acts on two lattice sites and is of dimension $2^{2\nq}$, the relative cost of the different methods are fundamentally different than the single qubit operators.
In particular, the asymptotic gate cost of \LOVELCU is now $\OO(2^{2\nq})$, which is the worst of all methods.
Despite this fact, \LOVELCU requires the fewest rotation gates for $\nq \leq 5$.
For $\nq > 5$, using either QSVT or QETU with $\me^{-i \tau \hat\varphi}$ as a building block requires fewer gates than \LOVELCU, with QSVT being the cheapest due to the $\OO(\nq \log \nq)$ asymptotic gate count scaling.

Because using $\nq \sim 5$ is a much more realistic value for quantum simulations, it appears at first glance that using QSVT is the best method to prepare a BE of $f_2(\hat\varphi_i-\hat\varphi_j)$.
However, \LOVELCU has the advantage that it can prepare BEs of arbitrary $2 \nq$-qubit Hermitian operators for the same computational cost.
This implies that one could use \LOVELCU to prepare a BE of, e.g., $f_0^{(0)}(\hat\varphi_1) + f_0^{(0)}(\hat\varphi_2) + f_2(\hat\varphi_1-\hat\varphi_2)$, for the same cost as just $f_2(\hat\varphi_1-\hat\varphi_2)$.
The same is not true for QSVT or the QETU based methods, which would require one to first prepare a BE of $f_0^{(0)}(\hat\varphi_1), f_0^{(0)}(\hat\varphi_2)$, and $f_2(\hat\varphi_1-\hat\varphi_2)$, and then add these operators using LCU.
A comparison using different methods to prepare BE of the two-site Hamiltonian
\begin{equation}
\begin{split}
\label{eq:Hphi_two_site}
    \hat H_\varphi^{(2)} &= f_0^{(0)}(\hat\varphi_1) + f_0^{(0)}(\hat\varphi_2) + f_2(\hat\varphi_1-\hat\varphi_2)
    \\
    &=\sum_{j=1}^2\left(\frac{1}{2}m^2 \hat\varphi_j^2 + \frac{\lambda}{4!}\hat\varphi_j^2\right) + \frac{1}{2}\left(\hat\varphi_1-\hat\varphi_2 \right)^2 
\end{split}
\end{equation}
is shown in Fig.~\ref{fig:rot_anc_be_Hphi_two_site}.
Due to the need for an additional layer of LCU, we see that the cost of QSVT and QETU based methods has increased relative to standard LCU and \LOVELCU methods; \LOVELCU is now the best method for $\nq \lesssim 6$, with QSVT requiring almost identical rotation gate counts for $\nq=6$.

The final operator we study is $\cos(\hat\varphi)$.
Using $\hat A = \cos(\hat\varphi)$ is a special case of \LOVELCU; the $\SELECT$ oracle reduces to 
\begin{equation}
    \SELECT = \sket{0}\sbra{0} \otimes e^{-i \hat\varphi} + \sket{1}\sbra{1} \otimes e^{i \hat\varphi},
\end{equation}
which can be implemented using the control-free procedure using $2\nq$ of CNOT gates and $\nq$ of $R_z$ gates.
While using QETU with $\me^{-i \tau \hat\varphi}$ can also achieve an asymptotic gate count of $\OO(\nq)$, it will have a larger overall prefactor in the cost.
Because $\cos(x)$ is an infinite degree polynomial, QSVT (or QETU using $\me^{-i 2 \arccos(\hat\varphi/\alpha)}$ as the building block), can only prepare approximate BEs of $\cos(\hat\varphi)$;  
while one can use the efficient Jacobi-Anger expansion to do this, one must use a polynomial of degree $d = \OO(\|\hat \varphi\| + \log(1/\epsilon)) = \OO(2^{\nq/2} + \log(1/\epsilon))$ to approximate to error $\OO(\epsilon)$.
Since $\cos(\hat\varphi)$ is a sum of $\OO(2^{\nq})$ Pauli strings, the asymptotic gate cost of using LCU to BE $\cos(\hat\varphi)$ requires $\OO(\nq 2^{\nq})$ gates.
From this discussion, we see that using \LOVELCU has both the best asymptotic gate scaling as well as the smallest overall prefactor due to the simplicity of the $\SELECT$ oracle.
Despite the clear superiority of \LOVELCU when preparing BEs of $\cos(\hat\varphi)$, a comparison the gate count using \LOVELCU and naive LCU to prepare a BE of $\cos(\hat\varphi)$ is shown in the bottom right plot in~\cref{fig:be_cost_indiv_terms}; \LOVELCU can prepare a BE of $\cos(\hat\varphi)$ for $\nq=12$ using 18 rotation gates and a single ancillary qubit.

We now turn to the problem of simulating time evolution, considering both 2nd and 4th order PFs, and GQSP where the BE was constructed using \LOVELCU.
We study the gate cost to construct the time evolution operator as a function of the evolution time $t$ and error $\epsilon$; the error is defined as $\epsilon = \| U(t) - U_\epsilon(t)\|$, where $U(t)$ is the exact time evolution operator calculated using numerical exponentiation, and $U_\epsilon(t)$ is the approximate time evolution operator calculated using GQSP or PFs.
Here we only show the rotation gate counts; CNOT gate counts are given in Appendix~\ref{app:cnot_count}.

The first system we study is the singe-site Hamiltonian
\begin{equation}
\begin{split}
\label{eq:H_single_site}
    \hat{H}^{(1)} &= \mathcal{F}^\dagger f_1(\hat\pi^{(D)}) \mathcal{F} + f_0^{(0)}(\hat\varphi)
    \\
    &=\frac{1}{2} \hat\pi^2 + \frac{1}{2}m^2 \hat\varphi^2 + \frac{\lambda}{4!} \hat\varphi^4.
\end{split}
\end{equation}
\Cref{fig:single_site_rot_vs_eps_t} shows the number of rotation gates as a function of $t$ and $\epsilon$ for $\nq=3$.
We find that GQSP requires less rotation gates compared to using PFs for errors as large as $\epsilon \lesssim 10^{-2}$, which improves upon the work in Ref.~\cite{Hariprakash:2023tla} by $\sim 5$ orders of magnitude; the majority of the savings leading to this significant improvement are a factor of $\sim 13$ gate reduction in the construction of controlled calls to $W_H$, and a factor of two gate reduction from using GQSP as opposed to standard QSP.
Another notable observation is that, in the region of error $\epsilon$ where using GQSP outperforms PFs, the 2\textsuperscript{nd} order PF outperforms the 4\textsuperscript{th} order PF.
This is due to the fact that the 4\textsuperscript{th} order PF has not yet reached a small enough value of $\epsilon$ for the expected $(1/\epsilon)^{1/4}$ scaling to take effect.
While GQSP algorithms are typically thought of as appropriate in the era of full fault-tolerance, our result for the single-site system suggests that for certain systems post-Trotter methods can outperform the PF-based ones sooner than expected.

To better understand how the relative costs change with the number of lattice sites, we now consider the two-site Hamiltonian
\begin{equation}
\begin{split}
\label{eq:H_two_site}
    \hat H^{(2)} &= \sum_{j=1}^2\left[\mathcal{F}_j^\dagger f_1(\hat\pi^{(D)}_j) \mathcal{F}_j + f_0^{(0)}(\hat\varphi_j) \right]
    + f_2(\hat\varphi_1-\hat\varphi_2)
    \\
    &=\sum_{j=1}^2\left(\frac{1}{2}\hat\pi_j^2 + \frac{1}{2}m^2 \hat\varphi_j^2 + \frac{\lambda}{4!}\hat\varphi_j^2\right) + \frac{1}{2}\left(\hat\varphi_1-\hat\varphi_2 \right)^2. 
\end{split}
\end{equation}
\Cref{fig:two_site_rot_vs_eps_t} shows the number of rotation gates as a function of $t$ and $\epsilon$ for $\nq=3$.
We see that GQSP now outperforms 2nd and 4th order PFs for $\epsilon \lesssim 5 \times 10^{-5}$ and $\epsilon \lesssim 10^{-7}$, respectively.
While this result suggests one should use PFs for larger lattice sizes, it is likely that significant reductions in the GQSP gate count can be achieved by applying system specific optimizations for compiling the final LCU $\SELECT$ and $\PREPARE$ oracles.
Such gate reductions could make GQSP competitive with PFs for larger errors, and is an interesting direction for future work.

\begin{figure*}
    \centering
    \includegraphics[width=0.6\textwidth]{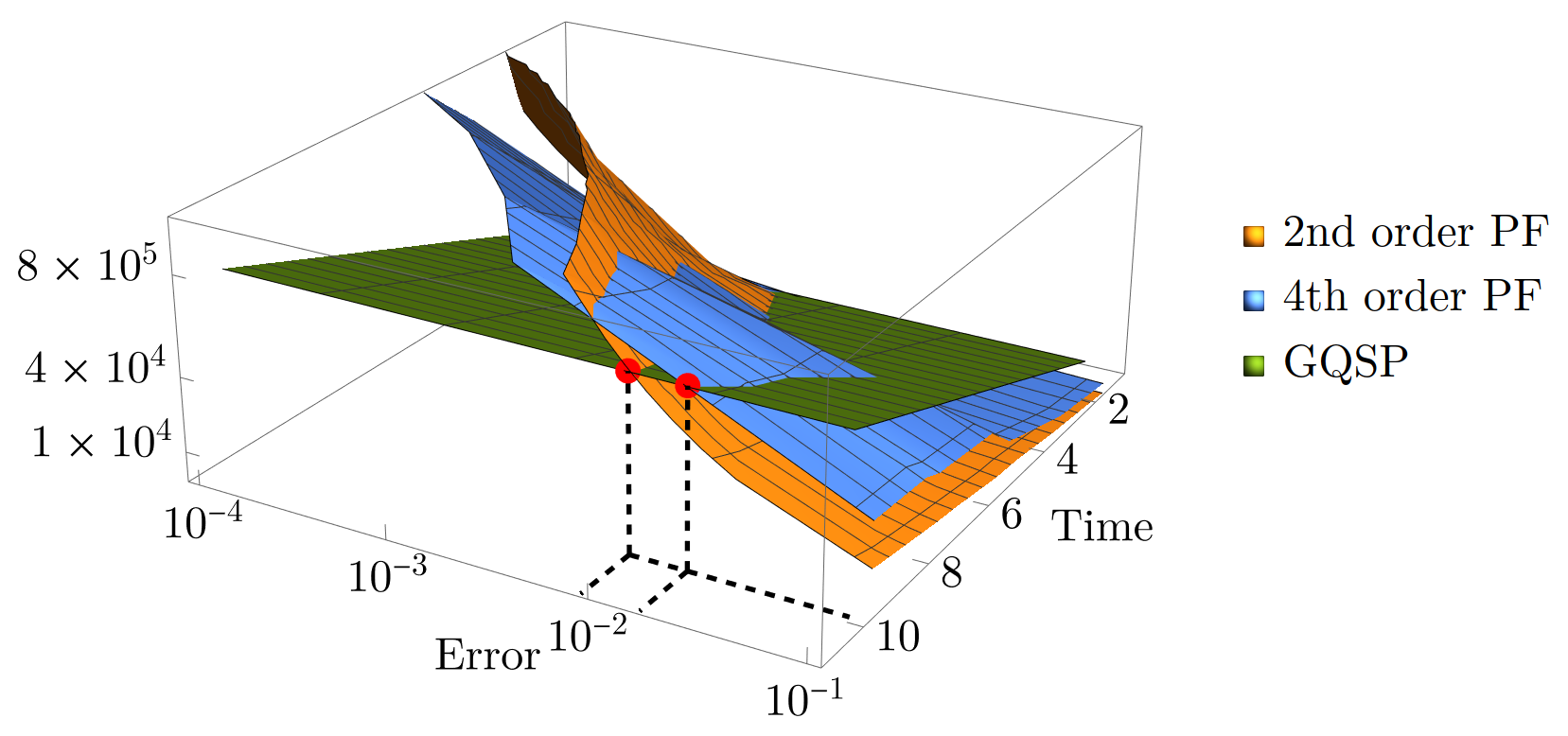}
    \caption{Rotation gate count as a function of error $\epsilon$ and simulation time $t$ for simulating time-evolution of the single-site Hamiltonian in Eq.~\eqref{eq:H_single_site}.
    The blue, orange, and green surfaces show results calculated using a $2^{\rm nd}$ order PF, a $4^{\rm th}$ order PF, and GQSP where the BE was constructing using \LOVELCU.
    For $t=10$, GQSP outperforms $2^{\rm nd}$ and $4^{\rm th}$ order PFs for $\epsilon \lesssim 10^{-2}$ and $\epsilon \lesssim 2\times 10^{-2}$, respectively.}
    \label{fig:single_site_rot_vs_eps_t}
\end{figure*}

\begin{figure*}
    \centering
    \includegraphics[width=0.6\textwidth]{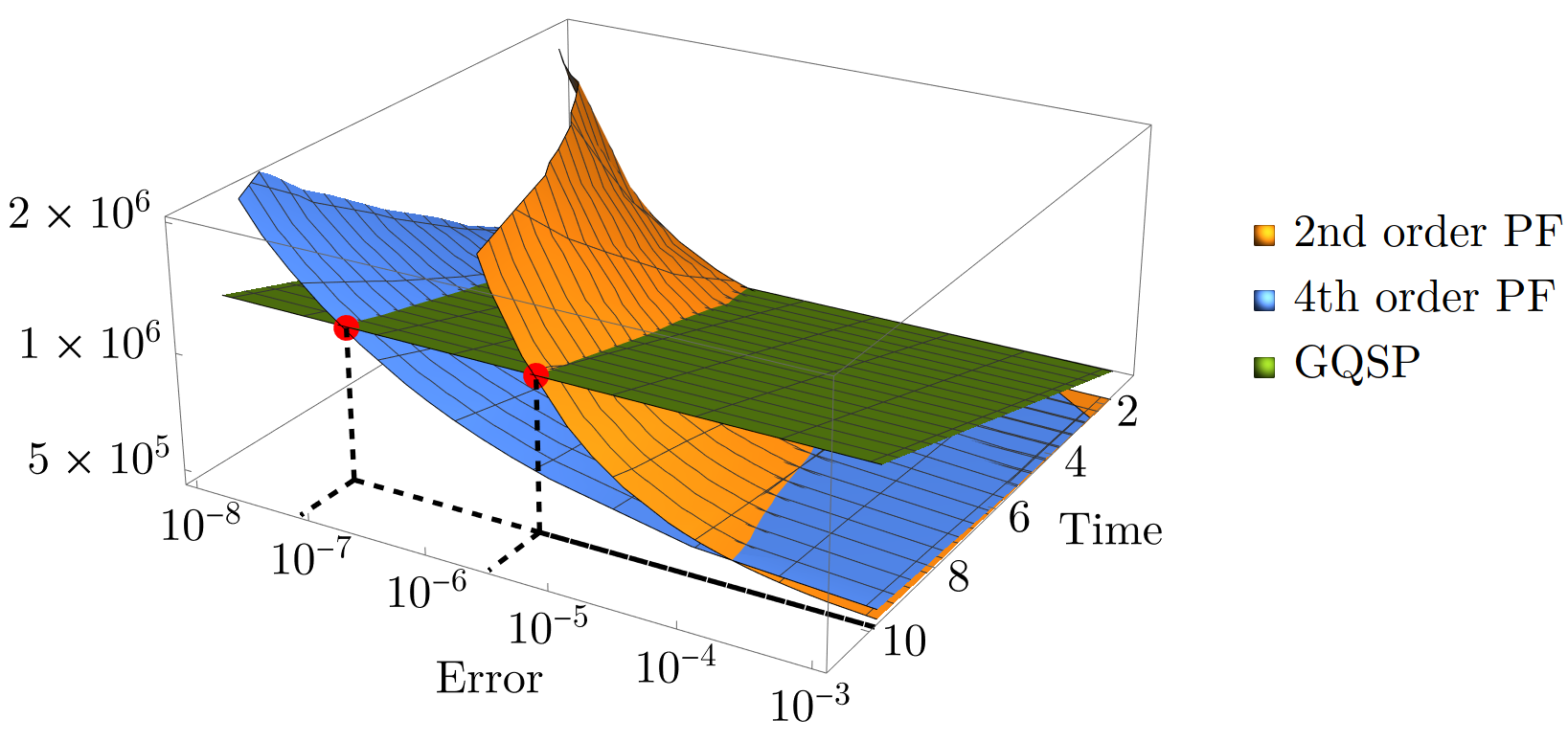}
    \caption{Rotation gate count as a function of error $\epsilon$ and simulation time $t$ for simulating time-evolution of the two-site in Eq.~\eqref{eq:H_two_site}.
    The blue, orange, and green surfaces show results calculated using a $2^{\rm nd}$ order PF, a $4^{\rm th}$ order PF, and GQSP where the BE was constructing using \LOVELCU.
    For $t=10$, GQSP outperforms $2^{\rm nd}$ and $4^{\rm th}$ order PFs for $\epsilon \lesssim 5\times 10^{-6}$ and $\epsilon \lesssim 1\times 10^{-7}$.}
    \label{fig:two_site_rot_vs_eps_t}
\end{figure*}

\begin{table}
\centering
\begin{tabular}{ccc}
Operator &\makecell{Complexity\\from Ref.~\cite{Hardy:2024ric}} & \makecell{Complexity\\from QSVT}\\
\hline
$\hat{\varphi}_i\hat{\varphi}_j$ & $\OO(\nq)$ & $\OO(\nq\log \nq)$\\
\hline
$\hat{\varphi}^2$ & $\OO(\nq^2)$ & $\OO\left(\nq\log \nq\right)$\\
\hline
$\hat{\pi}^2$ & $\OO(\nq^2)$ & $\OO\left(\nq\log \nq\right)$\\
\hline
$\hat{\varphi}^4$ & $\OO(\nq^2)$ & $\OO\left(\nq\log \nq\right)$
\end{tabular}
\caption{Comparison of complexities to block-encode terms for $\NLat = \OO(1)$ sites in scalar field theory Hamiltonian. The second column are complexities based on \cite{Hardy:2024ric} which uses a ``comparator'' operator to reduce the complexity of LCU operations; The third column is the relevant QSVT method described in~\cref{sec:be} (taking polynomials of degree $d \le 4$).}\label{table:hardy_result}
\end{table}

\section{Discussion and Conclusion\label{sec:discussion}}

In this work, we presented several new methods for preparing BEs for particular formulations of bosonic lattice field theories.
The methods presented in this work relied on two properties of such digitizations, namely the simplicity of the local operators $\hat\varphi, \hat\pi^{(D)}$, and the fact that only a small number of qubits per site $\nq$ are required to achieve the digitization errors necessary for realistic simulations.

Exploiting the simplicity of local operators $\hat\varphi$ and $\hat\pi^{(D)}$, we showed that QSVT can prepare BEs of degree $d$ functions of local terms in the Hamiltonian using $\OO(d \nq \log \nq)$ gates.
For polynomial functions with fixed degree $d$ (relevant for simulating $\hat{\varphi}^4$ scalar field theories), this QSVT based method scales asymptotically as $\OO(\nq \log \nq)$ and improves upon the methods in Ref.~\cite{Hardy:2024ric} (see~\cref{table:hardy_result} for a comparison), and is identical to the scaling of preparing BEs of operators of the form $\sim(\hat\varphi_i-\hat\varphi_j)^2$ in Ref.~\cite{Kharazi:2024fit}.

Next, we considered construction of BEs with the aid of the QETU algorithm.
While the simple operator $\me^{-i \tau \hat\varphi}$ is a natural building block for QETU circuits, because QETU generally requires approximating functions with discontinuous derivatives, we found that high degree polynomials are required to prepare BEs to a high precision.
This poor convergence was overcome by exploiting the fact that the spectrum of the local operators are known exactly, and one can use QETU to prepare exact BEs by reproducing the function only at those values, which requires $\OO(\nq 2^{\nq})$ gates.
Despite this technically inefficient scaling with $\nq$, through explicit circuit constructions, we found that, due to QSVT having a relatively large prefactor in the gate cost, QETU outperforms QSVT for small values of~$\nq$.
The comparison of different BEs considered in this work is summarized in~\cref{table:local_BEs}.

Motivated by the observation that algorithms with inefficient scaling in $\nq$ can require the fewest gates, we developed the conceptually simple \LOVELCU approach.
\LOVELCU can prepare BEs of arbitrary Hermitian operators $\hat{A}$ acting on $\nq$ qubits using only two controlled calls to $\me^{-i \arccos(\hat{A})}$, requiring a single ancillary qubit and $\OO(2^{\nq})$ ($\OO(4^{\nq})$) gates for diagonal (non-diagonal) operators.
Our numerical investigations demonstrated that \LOVELCU outperforms all other methods for preparing BEs of operators acting on $\lesssim 11$ qubits, at which point QSVT wins due to its relative exponentially improved gate scaling, albeit requiring more ancillary qubits than \LOVELCU.

Our findings indicate that, while understanding asymptotic gate complexities can serve as a useful guide, explicit numerical investigations are essential to compare different BE techniques for a specific application as their performance depends heavily on the considered Hamiltonian, problem size, and desired precision. We note that one can view the different approaches to BEs considered in this work as belonging to the same family of QSP-based methods but with different building blocks, which thus change the functions of interest and their implementations. We observe that methods utilizing asymptotically inefficient building blocks can potentially outperform the asymptotically optimal ones in practically relevant regimes.

In addition to developing new methods for constructing BEs using these techniques, we also developed methods for efficiently constructing the walk operator $W_{\hat{H}}$, required for using QSP-based methods for Hamiltonian simulation.
In particular, when BEs of local terms are constructed using any of the new methods presented in this work, and the full Hamiltonian is then constructed using a final LCU procedure to add these local BEs, the walk operator can be constructed as in~\cref{eq:whrsu} where $S$ is a single $\Zgate$-gate.
Using GQSP combined with \LOVELCU, we found that, for a single site anharmonic oscillator, GQSP outperforms PF methods for errors as small as $\epsilon \sim 10^{-2}$, which is $\sim 5$ orders of magnitude improved relative to previous work~\cite{Hariprakash:2023tla}.
Note that because PF methods have better asymptotic scaling with the number of lattice sites than GQSP~\cite{Hariprakash:2023tla}, the error threshold at which GQSP outperforms PFs is expected to be lower for larger lattice sizes.
To better understand how the relative cost changes with the number of lattice sites, we studied the two-site system and found that GQSP now outperforms PF methods for errors $\epsilon \sim 10^{-7}$.
This dramatic change, however, likely stems (at least in part) from using a compiler that is agnostic to the structure of the system studied, emphasizing the need for system specific circuit optimizations.

While we performed numerical gate count studies for a scalar field theory Hamiltonian, our methods can be directly applied to other bosonic theories formulated in terms of conjugate operators, including compact formulations of U(1) LGTs~\cite{Bender:2020ztu, Bauer:2021gek} and the mixed-basis formulation of SU(2) LGT in Ref.~\cite{DAndrea:2023qnr}. 
Our methods can also be applied to LGTs that are not formulated in terms of conjugate variables, including the standard Kogut-Susskind Hamiltonian and Loop-String-Hadron formulations~\cite{Raychowdhury:2018osk, Raychowdhury:2019iki, Kadam:2022ipf, Kadam:2024zkj}.
Notably, this includes situations in which the Hamiltonian operator is non-local, i.e., contains an exponential number of Pauli operators.
For example, the Hamiltonian operator of the gauge-fixed U(1) LGT~\cite{Haase:2020kaj, Bender:2020ztu, Bauer:2021gek, Grabowska:2022uos, Kane:2022ejm} includes a term of the form $\cos\bigl(\sum_j^N B_j\bigr)$, where the magnetic field operator at site $j$, $B_j$, is analogous to the field operator $\hat \varphi$.
While this operator is a sum of a number of Pauli strings exponential in the number of lattice sites, it can be readily encoded via the \LOVELCU construction in~\cref{fig:lcu_arccos} with a cost linear in the number of sites; this exponential reduction in cost allows one to avoid the usage of costly sparse oracle routines.
More generally, the methods discussed in this work could be useful for block encoding Hamiltonians involving operator functions of multiple variables.
One potential application is the first-quantized chemistry Hamiltonians, which can expressed in terms of position and derivative operators, such as $1/r$ and $\partial_x^2$, acting on fermionic wavefunctions.

Our work leads naturally to several interesting research directions.
Regarding using QETU with $\me^{-i \tau \hat\varphi_{\rm sh}}$ as a building block, it was found in Ref.~\cite{Kane:2023jdo} that limiting the spectrum of $\hat\varphi_{\rm sh}$ to be in the range $[\eta, \pi-\eta]$ and varying $\eta$ allowed one to approximate Gaussian functions $\sim e^{-\hat\varphi^2}$ to a precision $\epsilon$ using only $\log(\nq \log(1/\epsilon))$ gates.
The same is not true, however, for polynomial operators of the form $\hat\varphi^n$ considered in this work; as $\eta \to \pi/2$ the scale factor of the BE goes to zero, leading to a significant increase in the cost of Hamiltonian simulation.
It would be interesting to see if the method of varying $\eta$ can achieve a $\OO(\nq \log(1/\epsilon))$ scaling for other functions that do not diverge for large argument, some relevant physical examples being central potentials of the form $1/r$, or $1/r^6$ and $1/r^{12}$ which appear in the Lennard-Jones potential, see, e.g., Ref.~\cite{Kharazi:2024fit}.



With an eye towards fault tolerance, it is essential to compile the circuits down to metrics suitable for the error correction protocols (e.g., T-gate counts) to assess the relative cost of the methods in this work.
The QETU-based methods and \LOVELCU inherently require the use of rotation gates, which implies they likely come with a large prefactor in the T-gate count.
These methods, however, may serve useful for partial-fault tolerant implementations where only Clifford gates are implemented in a fault-tolerant manner, and rotation gates are implemented in a non-fault-tolerant way~\cite{PRXQuantum.5.010337}.
The QSVT-based methods are generally expected to require less T-gates, as one can use unary-LCU methods to reduce the cost of the $\SELECT$ oracle, and the methods in Ref.~\cite{Hardy:2024ric} to construct a BE of $\hat\varphi$ avoiding the need for T-gates in the $\PREPARE$ oracle. 
It would be interesting to do a dedicated T-gate study using a combination the QSVT method in this work and the efficient T-gate constructions in Refs.~\cite{Babbush:2018ywg, Hardy:2024ric}.

While the total T-gate count is usually considered for the total cost of a fault-tolerant simulation (see, e.g., Refs.~\cite{Davoudi:2022xmb,Rhodes:2024zbr}), there have been suggestions that another possible metric is instead T-gate \emph{depth}~\cite{10.1109/TCAD.2013.2244643}.
This is similar in spirit to Amdahl's Law in classical computing, which states that the maximum possible speedup from parallelizing one's code is limited by components of the code that must be executed sequentially.
In other words, Amdahl's Law describes the time a code takes to run in the limit of access to infinitely many classical computing nodes (barring communication time).
By asking the same question in the context of using several fault-tolerant quantum computing nodes, T-gate depth is the metric for how parallelizable a circuit is.
In scenarios where T-gate depth is the relevant cost, reductions in the depth of preparing BEs can be achieved using the modified LCU procedure in Ref.~\cite{Boyd:2023mph}.
Using this or similar procedures, it is possible to imagine that the depth of constructing a BE of geometrically local Hamiltonians can be reduced to constant in the number of lattice sites.
Such a method would result in the depth of QSP-based time evolution methods scaling linearly in the volume, which outperforms the volume scaling of the depth of PF based methods for geometrically local systems~\cite{Childs:2019hts}.

We conclude by pointing out that, while further improvements can likely be made within the paradigm of preparing BEs via QSP-based methods, dramatic reductions in the gate count will likely require considering new paradigms. Although our focus has been on constructing BEs using LCU for QSP-based time evolution, which typically requires long coherent circuits and full fault tolerance, our methods can be adapted for simulations relying solely on LCU with approaches like the Truncated Taylor Series~\cite{Berry:2014ivo}. To make these simulations more suitable for noisy intermediate-scale quantum (NISQ) devices, one promising direction is to explore recently developed randomized near-term implementations of LCU~\cite{Zeng:2022pim,Chakraborty:2023vtr}, which could be combined with our constructions to enable more practical quantum simulations in the near term. Another promising avenue is to combine Trotter-based methods with near-optimal time evolution techniques~\cite{Faehrmann:2021mqc,Zeng:2022pim,Chakraborty:2025sry}, potentially leveraging the strengths of both approaches for improved efficiency. Additionally, the developing field of multi-variable QSP methods~\cite{Rossi:2022sfw, Rossi:2023jgh, Mori:2023fat, Nemeth:2023enj, Laneve:2024kwq, Gomes:2024tup} (this general class of methods includes multi-variable QETU), which leverages the structure of lattice field theories written as sums of many commuting variables, could result in significant cost reductions.

\section*{Acknowledgements}

SH acknowledges support from the Berkeley Center for Theoretical Physics and the National Energy Research Scientific Computing Center (NERSC), a U.S. Department of Energy Office of Science User Facility located at Lawrence Berkeley National Laboratory, operated under Contract No. DE-AC02- 05CH11231.
CWB and MK were supported by the DOE, Office of Science under contract DE-AC02-05CH11231, partially through Quantum Information Science Enabled Discovery (QuantISED) for High Energy Physics (KA2401032). 
This material is based upon work supported by the U.S. Department of
Energy, Office of Science, Office of Advanced Scientific Computing Research, Department of Energy Computational Science Graduate Fellowship under Award Number DE-SC0020347.

\bibliographystyle{plainnat}
\bibliography{Quantum_Journal/main_quantum_journal_final}

\clearpage
\newpage
\onecolumngrid

\appendix

\section{Constructing the walk operator}
\label{app:WH}

In this appendix, we provide detailed proofs for the selection of the operator $S$ (see \cref{sec:be}) used in constructing the walk operator when the associated BE is implemented using QSVT (\cref{lemma:QSVT_walk_op}) or QETU (\cref{lemma:QETU_walk_op}) for even polynomials. Specifically, we demonstrate that for both QSVT and QETU, the operator $S$ can be chosen as a single $\Zgate$-gate acting on the signal and control qubit, respectively, to satisfy the conditions in~\cref{eq:S_cond}.
We then consider the scenario where the BE for the full Hamiltonian $U_{\hat{H}}$ is constructed using LCU to combine several local $U_{\hat{H}_j}$ BEs, with each $U_{\hat{H}_j}$ constructed using either QSVT, QETU, or \LOVELCU. We demonstrate that, as long as the circuit organization discussed in detail in~\cref{ssec:UH_WH_general} is used, the associated walk operator $W_{\hat{H}}$ can be constructed by choosing $S$ to be a single $\Zgate$-gate.

The following Lemma shows how to construct the walk operator given that that the BE was constructed using QSVT for even polynomials.

\begin{lemma}
    \label{lemma:QSVT_walk_op}
    Let $\hat{A}$ and $f(\hat{A})$ be Hermitian operators acting on the Hilbert space $\mathcal{H}_s$, and, without loss of generality, assume $\|\hat{A}\| \leq 1$. Let $U_A$ be a $(1, m)$ block encoding of $\hat{A}$, acting on the Hilbert space $\mathcal{H}_a \otimes \mathcal{H}_s$, where $\mathcal{H}_a$ is an $m$-qubit ancillary register. Let $U_{f(\hat{A})}$ be an $(\alpha, m+1)$ block encoding of $f(\hat{A})$, constructed using the even QSVT circuit (see~\cref{fig:qsvt_circ}) with the building block $U_A$, acting on the Hilbert space $\mathcal{H}_c \otimes \mathcal{H}_a \otimes \mathcal{H}_s$, where $\mathcal{H}_c$ is the single-qubit signal qubit. Then, the operator $S' = Z_c \otimes \unit_a \otimes \unit_s$ satisfies
    \begin{align}
        \left(\bra{0}_c \otimes \bra{0}_a \otimes \unit_s\right) S' U_{f(\hat{A})} \left(\ket{0}_c \otimes \ket{0}_a \otimes \unit_s\right) &= f(\hat{A})/\alpha\,,
        \\
        \label{eq:a3}
        \left(\bra{0}_c \otimes \bra{0}_a \otimes \unit_s\right) S' U_{f(\hat{A})} S' U_{f(\hat{A})} \left(\ket{0}_c \otimes \ket{0}_a \otimes \unit_s\right) &= \unit_s\,.
    \end{align}
\end{lemma}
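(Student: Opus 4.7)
The plan is to verify the two conditions stated in the lemma. The first is nearly immediate: since $S' = \Zgate_c \otimes \unit_a \otimes \unit_s$ and $\Zgate\ket{0} = \ket{0}$, the projector $\bra{0}_c \otimes \bra{0}_a \otimes \unit_s$ absorbs $S'$, and the equality reduces to the defining block encoding property $(\bra{0}_c\otimes\bra{0}_a\otimes\unit_s)\,U_{f(\hat A)}\,(\ket{0}_c\otimes\ket{0}_a\otimes\unit_s) = f(\hat A)/\alpha$, which holds by the hypothesis on $U_{f(\hat A)}$.

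The second condition~\eqref{eq:a3} is the substantive one, and my plan is to derive it from the stronger operator identity
\begin{equation*}
    \Zgate_c\,U_{f(\hat A)}\,\Zgate_c \;=\; U_{f(\hat A)}^{\dagger}.
\end{equation*}
Assuming this, the factor $\Zgate_c U_{f(\hat A)} \Zgate_c$ inside $\bra{0}_c\bra{0}_a\, S' U_{f(\hat A)} S' U_{f(\hat A)}\, \ket{0}_c\ket{0}_a$ collapses to $U_{f(\hat A)}^\dagger$, and unitarity of $U_{f(\hat A)}$ then yields $\bra{0}_c\bra{0}_a\, U_{f(\hat A)}^\dagger U_{f(\hat A)}\,\ket{0}_c\ket{0}_a = \unit_s$, which is exactly~\eqref{eq:a3}.

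To prove the boxed identity, I would strip off the outer Hadamards in~\cref{fig:qsvt_circ}, writing $U_{f(\hat A)} = (\Hgate_c\otimes\unit)\,V\,(\Hgate_c\otimes\unit)$, where $V$ is the inner alternating product of $\text{CR}_{\tilde\phi_j}$ blocks with $U_A$ and $U_A^\dagger$. Using $\Zgate = \Hgate\Xgate\Hgate$, it then suffices to show $\Xgate_c V \Xgate_c = V^\dagger$. This splits into two independent facts: (a) $\Xgate_c V \Xgate_c = V^{(-)}$, where $V^{(-)}$ denotes $V$ with every phase $\tilde\phi_j$ replaced by $-\tilde\phi_j$; this uses that $\Xgate_c$ commutes with $U_A$ and $U_A^\dagger$ (neither acts on the signal qubit), together with $\Xgate_c \text{CR}_{\tilde\phi_j}\Xgate_c = \text{CR}_{-\tilde\phi_j}$, which follows directly from~\cref{fig:crphi_circ} and $\Xgate e^{-i\tilde\phi\Zgate}\Xgate = e^{+i\tilde\phi\Zgate}$. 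And (b) $V^{(-)} = V^\dagger$: taking the Hermitian conjugate reverses the gate order and daggers each factor, swapping $U_A \leftrightarrow U_A^\dagger$ in pairs (which, for \emph{even} $d$, restores the original alternation pattern) and negating every phase, after which the symmetric choice $\tilde\phi_j = \tilde\phi_{d-j}$ matches the reversed $\text{CR}$ sequence back to the original.

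The main obstacle is the combinatorial check in step (b): one must verify that for even $d$ the reversed-and-daggered alternating sequence $U_A, U_A^\dagger, U_A, \ldots, U_A^\dagger$ really coincides with the original, and that the symmetric phase assignment handles the flipped $\text{CR}$ positions. This is precisely where the even-parity hypothesis enters crucially — for odd $d$ the alternation has unequal counts of $U_A$ and $U_A^\dagger$ and the adjoint does not reproduce the original pattern, so a different operator $S'$ (or an extra controlled call) would be needed. Once step (b) is checked, the chain $\Zgate_c U_{f(\hat A)} \Zgate_c = (\Hgate\otimes\unit)\,\Xgate_c V \Xgate_c\,(\Hgate\otimes\unit) = (\Hgate\otimes\unit)V^\dagger(\Hgate\otimes\unit) = U_{f(\hat A)}^\dagger$ is automatic, and the lemma follows.
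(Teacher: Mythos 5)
Your proof is correct and follows essentially the same route as the paper's: both reduce the first identity to the block-encoding definition via $\Zgate\ket{0}=\ket{0}$, and both establish the second by showing that conjugating the QSVT circuit by $\Zgate_c$ (equivalently, pushing $\Zgate$ through the Hadamards as $\Xgate$ and negating every $\text{CR}_{\tilde\phi}$ phase) reproduces $U_{f(\hat A)}^\dagger$ thanks to the symmetric phases and the alternating $U_A/U_A^\dagger$ pattern of the even circuit. Your step (b) merely spells out more explicitly the reversal-and-dagger bookkeeping that the paper states in one sentence.
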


\begin{proof}
Using the fact that $S' \ket{0}_c \otimes \unit_{as} = \ket{0}_c \otimes \unit_{as}$, we see that
\begin{equation}
    \left(\bra{0}_c \otimes \bra{0}_a \otimes \unit_s\right) S' U_{f(\hat{A})} \left(\ket{0}_c \otimes \ket{0}_a \otimes \unit_s\right) = \left(\bra{0}_c \otimes \bra{0}_a \otimes \unit_s\right) U_{f(\hat{A})} \left(\ket{0}_c \otimes \ket{0}_a \otimes \unit_s\right) = f(\hat{A})/\alpha,
\end{equation}
where the final equality in the above equation is the definition of the block encoding $U_{f(\hat{A})}$. The first relation is therefore satisfied.

To prove the second relation is satisfied, we show that $S' U_{f(\hat{A})} = U_{f(\hat{A})}^\dagger S'$.
To see this, we use two circuit identities: $Z H = H X$ and $X_c \otimes \unit_a \text{CR}_{\tilde{\phi}} = \text{CR}_{\tilde{\phi}}^\dagger X_c \otimes \unit_a$ (see Fig.~\ref{fig:crphi_circ} for the circuit for $\text{CR}_{\tilde\phi}$). The latter follows from the facts that $X$ gates commute with the target of multi-controlled NOT gate and $X R_z(\theta) = R_z(\theta)^\dagger X$.
Repeated use of these identities imply that commuting the $Z$ gate in $S'$ past the QSVT circuit flips the sign of the phases in each $\text{CR}_{\tilde{\phi}}$.
However, because the phases $\tilde{\phi}$ are symmetric, combined with the fact that one alternates calls to $U_A$ and $U_A^\dagger$, flipping sign of the phases in each $\text{CR}_{\tilde{\phi}}$ is equivalent to taking the Hermitian conjugate and thus $S' U_{f(\hat{A})} = U_{f(\hat{A})}^\dagger S'$. This implies that $S' U_{f(\hat{A})} S' U_{f(\hat{A})} = S' U_{f(\hat{A})} U_{f(\hat{A})}^\dagger S' = (S')^2 = \unit_{cas}$ and so the second relation is satisfied.

\end{proof}

The following Lemma shows how to construct the walk operator given that that the BE was constructed using QETU for even polynomials.

\begin{lemma}
    \label{lemma:QETU_walk_op}
    Let $\hat{A}$ and $f(\hat{A})$ be Hermitian operators acting on the Hilbert space $\mathcal{H}_s$. Let $U_{f(\hat{A})}$ be an $(\alpha, 1)$ block-encoding of $f(\hat{A})$, constructed using the QETU circuit for even polynomials (see Fig.~\ref{fig:qetu_circ}) with the building block $\me^{-i \tau \hat{A}}$, acting on the Hilbert space $\mathcal{H}_c \otimes \mathcal{H}_s$ where $\mathcal{H}_c$ is the Hilbert space of the control qubit in the QETU circuit.
    Then, the operator $S' = Z_c \otimes \unit_s$ satisfies
    \begin{align}
    \left(\bra{0}_c \otimes \unit_s\right) S' U_{f(\hat{A})} \left(\ket{0}_c \otimes \unit_s\right) &= f(\hat{A})\,,
    \\
    \left(\bra{0}_c \otimes \unit_s\right) S' U_{f(\hat{A})} S' U_{f(\hat{A})} \left(\ket{0}_c \otimes \unit_s\right) &= \unit_s\,.
    \end{align}
\end{lemma}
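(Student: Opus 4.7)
The plan is to mirror the strategy used in the proof of~\cref{lemma:QSVT_walk_op}. The first identity collapses immediately once we observe that $Z_c\ket{0}_c = \ket{0}_c$, which gives $(\bra{0}_c\otimes\unit_s)\,S' = \bra{0}_c\otimes\unit_s$. Substituting this into the left-hand side of the first relation reduces it to the defining matrix element of the BE, namely $(\bra{0}_c\otimes\unit_s)U_{f(\hat{A})}(\ket{0}_c\otimes\unit_s) = f(\hat{A})/\alpha$. The second identity will follow from a single commutation relation, $S'\,U_{f(\hat{A})} = U_{f(\hat{A})}^\dagger\,S'$, since then, using $(S')^2 = \unit_{cs}$, one has $S'U_{f(\hat{A})}S'U_{f(\hat{A})} = U_{f(\hat{A})}^\dagger(S')^2 U_{f(\hat{A})} = \unit_{cs}$, and projecting onto $\ket{0}_c$ on both sides yields $\unit_s$.

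To establish the commutation relation, I would push $Z_c$ through the QETU circuit in~\cref{fig:qetu_circ} gate by gate, using two elementary facts. First, $Z_c$ commutes with every controlled-$U$ and controlled-$U^\dagger$ block, because $Z$ commutes with both $\ket{0}\!\bra{0}$ and $\ket{1}\!\bra{1}$. Second, $Z_c\,e^{i\tilde{\varphi}_j X_c} = e^{-i\tilde{\varphi}_j X_c}\,Z_c$, since $ZXZ = -X$. Applying these two rules to every layer of the circuit produces $Z_c\,U_{f(\hat{A})} = \tilde{U}_{f(\hat{A})}\,Z_c$, where $\tilde{U}_{f(\hat{A})}$ is obtained from $U_{f(\hat{A})}$ by flipping the sign of every phase $\tilde{\varphi}_j$ while leaving the controlled calls to $U$ and $U^\dagger$ untouched.

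It then remains to identify $\tilde{U}_{f(\hat{A})}$ with $U_{f(\hat{A})}^\dagger$. Taking the Hermitian conjugate reverses the order of gates and daggers each one individually: rotations pick up exactly the sign flip required, while each controlled-$U$ turns into a controlled-$U^\dagger$ and vice versa. Because the phase sequence $(\tilde{\varphi}_0,\tilde{\varphi}_1,\dots,\tilde{\varphi}_1,\tilde{\varphi}_0)$ is symmetric under reversal, the rotation gates sit in the same positions after reversal. The remaining check is that the alternating pattern of controlled-$U$ and controlled-$U^\dagger$ gates (an even number of them in the even-polynomial case) is itself invariant under reversal combined with daggering; writing $C_k$ for the $k$-th such gate, this is the statement $C_{d+1-k}^\dagger = C_k$, which holds exactly for the alternating pattern dictated by the circuit.

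The main obstacle — really the only subtle point — is this last bookkeeping step: one must be careful that the specific alternating pattern of controlled-$U$ and controlled-$U^\dagger$ fixed by~\cref{fig:qetu_circ}, together with the even degree of the polynomial, indeed behaves correctly under reversal-with-daggering. The rest is a clean consequence of $Z$-to-$X$ anticommutation and the diagonal structure of the controls. Once the commutation identity is in hand, the second relation falls out in one line, and the lemma is proved.
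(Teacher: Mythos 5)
Your proposal is correct and follows essentially the same route as the paper's proof: reduce the first relation to the defining matrix element via $Z_c\ket{0}_c=\ket{0}_c$, and establish $S'U_{f(\hat{A})}=U_{f(\hat{A})}^\dagger S'$ by commuting $Z_c$ through the circuit using $Z e^{i\theta X}=e^{-i\theta X}Z$ and the fact that $Z$ on the control commutes with the controlled-$U^{\pm1}$ blocks, then identifying the phase-negated circuit with $U_{f(\hat{A})}^\dagger$ via the symmetric phases and the palindromic (under reversal-with-dagger) alternating pattern of controlled calls. The only cosmetic difference is that you write the first matrix element as $f(\hat{A})/\alpha$ where the lemma states $f(\hat{A})$; this merely reflects the paper's own normalization convention in stating the $(\alpha,1)$ block encoding and is not a gap in your argument.
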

\begin{proof}

Using the fact that $S' \ket{0}_c \otimes \unit_s =\ket{0}_c \otimes \unit_s$, we see that
\begin{equation}
    \left(\bra{0}_c \otimes \unit_s\right) S' U_{f(\hat{A})} \left(\ket{0}_c \otimes \unit_s\right) = \left(\bra{0}_c \otimes \unit_s\right) U_{f(\hat{A})} \left(\ket{0}_c \otimes \unit_s\right) = f(\hat{A})\,,
\end{equation}
where the final equality in the equation above is the definition of the block encoding $U_{\hat{H}}$. The first relation is therefore satisfied.

To show that the second relation is satisfied, we will demonstrate that $S' U_{f(\hat{A})} = U_{f(\hat{A})}^\dagger S'$. We denote by $C_U$ the controlled call to the building block $\me^{-i \tau \hat{A}}$. Note that, due to the structure of alternating calls to $C_U$ and $C_U^\dagger$, combined with the fact that the rotation angles $\{\tilde{\varphi}_j\}$ are symmetric, the circuit for $U_{f(\hat{A})}^\dagger$ is equivalent to negating the phases in the single-qubit $R_x$ gates in the original QETU circuit. We will now show that commuting $Z_c \otimes \unit_s$ past $U_{\hat{H}}$ has the effect of negating the signs of the phases in the $R_x$ gates. We note the following circuit identities: commuting a $Z$ gate past an $R_x(\theta)$ gate flips the sign of the rotation angle, \emph{i.e.}, $Z e^{i \theta X} = e^{-i \theta X} Z$, and $Z$ gates acting on the control qubit commute with controlled calls to $e^{\pm i \tau \hat{A}}$.
This can be seen by noting that diagonal matrices commute, which implies $[Z, \sket{0}\sbra{0}] = [Z, \sket{1}\sbra{1}] = 0$.
These identities imply that commuting the $Z$ gate past the QETU circuit flips the sign of all rotation angles, and therefore $S' U_{f(\hat{A})} = U_{f(\hat{A})}^\dagger S'$.
Using this result, we see that $S' U_{f(\hat{A})} S' U_{f(\hat{A})} = S' U_{f(\hat{A})} U_{f(\hat{A})}^\dagger S' = (S')^2 = \unit_{cs}$, and so the second relation is satisfied.
\end{proof}

The following lemma shows how to construct the walk operator $W_{\hat{H}}$ for the full Hamiltonian from $U_{\hat{H}}$ when $U_{\hat{H}}$ is constructed using LCU to combine local $U_{\hat{H}_j}$'s, where each $U_{\hat{H}_j}$ shares a common $S$ operator for constructing $W_{\hat{H}_j}$.

\begin{lemma}
\label{lemma:LCU_WH}
    Let $\hat{H} = \sum_{j=0}^{M-1} \beta_j \hat{H}_j$ be a Hamiltonian acting on a Hilbert space $\mathcal{H}_s$ with $\beta_j \in \mathbb{R}^+$.
    Without loss of generality, assume $\|\hat{H}_j\| \leq 1, \forall j$.
    Let $U_{\hat{H}_j}$ be a $(1, m_j)$ block-encoding of $\hat{H}_j$, acting on the joint Hilbert space $\mathcal{H}_a \otimes \mathcal{H}_s$, where $\mathcal{H}_a$ is the Hilbert space of ancillary qubit register containing $\max_j(m_j)$ qubits. 
    Furthermore, let $U_{\hat{H}} = (\PREPARE^\dagger \otimes \unit_{as}) \SELECT (\PREPARE \otimes \unit_{as})$ be a $\left(\|\vec{\beta}\|_1, \lceil \log_2 M \rceil + \max_j(m_j)\right)$ block-encoding of $\hat{H}$ acting on the joint Hilbert space $\mathcal{H}_{a_\Lambda} \otimes \mathcal{H}_a \otimes \mathcal{H}$, where $\mathcal{H}_{a_\Lambda}$ is the Hilbert space of the ancillary register containing $ \lceil \log_2 M \rceil$ qubits, $\|\vec{\beta}\|_1 \equiv \sum_{j=0}^{M-1}|\beta_j|$, $\SELECT = \sum_{j=0}^{M-1} (\sket{j} \sbra{j})_{a_\Lambda} \otimes U_{\hat{H}_j}$ and $\PREPARE\sket{0}_{a_\Lambda} = \frac{1}{\sqrt{\|\vec{\beta}\|_1}} \sum_{j=0}^{M-1} \sqrt{\beta_j} \sket{j}_{a_\Lambda}$.
    Define $S' = \unit_{a_\Lambda} \otimes S_a \otimes \unit_s$.
    If for each $U_{\hat{H}_j}$ the operator $S_a$ satisfies
    \begin{align}
        \left(\sbra{0}_a \otimes \unit_s \right) (S_a \otimes \unit_s) U_{\hat{H}_j} \left(\sket{0}_a \otimes \unit_s \right) &= \hat{H}_j\,,
        \\
        \left(\sbra{0}_a \otimes \unit_s \right) \left[S_a \otimes \unit_s) U_{\hat{H}_j}\right]^2 \left(\sket{0}_a \otimes \unit_s \right) &= \unit_s\,,
    \end{align}
    then
    \begin{align}
        \left(\sbra{0}_{a_\Lambda} \otimes \sbra{0}_a \otimes \unit_{s}\right) S' U_{\hat{H}} \left(\sket{0}_{a_\Lambda} \otimes \ket{0}_a \otimes \unit_{s}\right) &= \frac{\hat{H}}{\|\vec{\beta}\|_1}\,, 
        \\
        \left(\sbra{0}_{a_\Lambda} \otimes \bra{0}_a \otimes \unit_{s}\right) S' U_{\hat{H}} S' U_{\hat{H}} \left(\sket{0}_{a_\Lambda} \otimes \ket{0}_a \otimes \unit_{s}\right) &= \unit_s\,.
    \end{align}
\end{lemma}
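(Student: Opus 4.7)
The plan is to exploit the fact that $S' = \unit_{a_\Lambda} \otimes S_a \otimes \unit_s$ acts trivially on the $a_\Lambda$ and $s$ registers, and therefore commutes with both $\PREPARE \otimes \unit_{as}$ and $\PREPARE^\dagger \otimes \unit_{as}$, since those act as identity on the $a$ register. This commutation lets me push $S'$ through the $\PREPARE$ layers of $U_{\hat H}$ so it ends up acting directly on $\SELECT$, where the LCU structure isolates the per-term qubitization hypotheses on each $U_{\hat H_j}$.

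For the first identity, I would write
\begin{equation}
S'\, U_{\hat H} = (\PREPARE^\dagger \otimes \unit_{as})\,\bigl[(\unit_{a_\Lambda} \otimes S_a \otimes \unit_s)\,\SELECT\bigr]\,(\PREPARE \otimes \unit_{as}),
\end{equation}
and substitute the LCU form of $\SELECT$ to get
\begin{equation}
(\unit_{a_\Lambda} \otimes S_a \otimes \unit_s)\,\SELECT = \sum_{j=0}^{M-1} (\sket{j}\sbra{j})_{a_\Lambda} \otimes (S_a \otimes \unit_s)\,U_{\hat H_j}.
\end{equation}
Sandwiching between $\sket{0}_{a_\Lambda}\sket{0}_a \otimes \unit_s$, the $\PREPARE$ operators contribute $\sbra{0}_{a_\Lambda}\PREPARE^\dagger \sket{j}_{a_\Lambda}\sbra{j}_{a_\Lambda}\PREPARE\sket{0}_{a_\Lambda} = \beta_j/\|\vec\beta\|_1$, while the first hypothesis replaces $(\sbra{0}_a \otimes \unit_s)(S_a \otimes \unit_s)U_{\hat H_j}(\sket{0}_a \otimes \unit_s)$ by $\hat H_j$. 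Summing yields $\hat H/\|\vec\beta\|_1$.

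For the second identity, the key step is to collapse the middle $\PREPARE\,\PREPARE^\dagger = \unit_{a_\Lambda}$ after pushing one $S'$ through, giving
\begin{equation}
(S'\, U_{\hat H})^2 = (\PREPARE^\dagger \otimes \unit_{as})\,S'\,\SELECT\,S'\,\SELECT\,(\PREPARE \otimes \unit_{as}).
\end{equation}
The inner product $S'\,\SELECT\,S'\,\SELECT$ expands as a double sum whose $a_\Lambda$ factor is $(\sket{j}\sbra{j}\sket{k}\sbra{k})_{a_\Lambda}$; orthogonality $\sbra{j}\sket{k}_{a_\Lambda} = \delta_{jk}$ collapses it to a single sum
\begin{equation}
\sum_{j=0}^{M-1} (\sket{j}\sbra{j})_{a_\Lambda} \otimes \bigl[(S_a \otimes \unit_s)\,U_{\hat H_j}\bigr]^2.
\end{equation}
Projecting each squared factor onto $\sket{0}_a$ returns $\unit_s$ by the second hypothesis, and the $\PREPARE$ layers contribute coefficients $\beta_j/\|\vec\beta\|_1$ summing to one (positivity of $\beta_j$ gives $\sum_j \beta_j = \|\vec\beta\|_1$).

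There is no real obstacle beyond careful bookkeeping of tensor-factor placement; the only conceptual observation needed is that the orthogonality of the computational-basis states $\sket{j}_{a_\Lambda}$ automatically eliminates the cross terms in the expansion of $(S'\,\SELECT)^2$, which is what reduces the chained problem on the composite ancilla register to a collection of independent local problems that the per-term hypotheses handle directly.
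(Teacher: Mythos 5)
Your proposal is correct and follows essentially the same route as the paper's proof: commute $S'$ (which acts nontrivially only on the $a$ register) past the $\PREPARE$ layers, expand $\SELECT$ into its LCU sum, and use orthogonality of the $\sket{j}_{a_\Lambda}$ states to collapse the double sum in the squared expression to $\sum_j (\sket{j}\sbra{j})_{a_\Lambda} \otimes \bigl[(S_a \otimes \unit_s)U_{\hat H_j}\bigr]^2$, after which the per-term hypotheses and $\sum_j \beta_j = \|\vec\beta\|_1$ finish both identities. The only cosmetic difference is that the paper absorbs the $\PREPARE$ conjugation into each summand before squaring while you cancel the middle $\PREPARE\,\PREPARE^\dagger$ explicitly; these are the same computation.
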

\begin{proof}

First, note that 
\begin{equation}
\label{eq:suh11}
\begin{split}
    S' U_{\hat{H}} &= \left(\unit_{a_\Lambda} \otimes S_a \otimes \unit_s \right) (\PREPARE^\dagger \otimes \unit_{as}) \left(\sum_{j=0}^{M-1} (\sket{j} \sbra{j})_{a_\Lambda} \otimes U_{\hat{H}_j} \right) \left(\PREPARE \otimes \unit_{as} \right) 
    \\
    &= \sum_{j=0}^{M-1} \left[\PREPARE^\dagger(\sket{j} \sbra{j})_{a_\Lambda}\PREPARE \right] \otimes \left[(S_a \otimes \unit_s)U_{\hat{H}_j}\right].
\end{split}
\end{equation}
From~\cref{eq:suh11} it immediately follows that the first relation is satisfied:
\begin{equation}
\begin{split}
    (\sbra{0}_{a_\Lambda}& \otimes \sbra{0}_a \otimes \unit_{s}) S' U_{\hat{H}} \left(\sket{0}_{a_\Lambda} \otimes \ket{0}_a \otimes \unit_{s}\right)
    \\
    &= \left(\sbra{0}_{a_\Lambda} \otimes \sbra{0}_a \otimes \unit_{s}\right)\sum_{j=0}^{M-1} \left[\PREPARE^\dagger(\sket{j} \sbra{j})_{a_\Lambda}\PREPARE \right] \otimes \left[(S_a \otimes \unit_s)U_{\hat{H}_j}\right] \left(\sket{0}_{a_\Lambda} \otimes \ket{0}_a \otimes \unit_{s}\right)
    \\
    &= \sum_{j=0}^{M-1} \sbra{0}_{a_\Lambda}\left[\PREPARE^\dagger(\sket{j} \sbra{j})_{a_\Lambda}\PREPARE \right]\sket{0}_{a_\Lambda}  \otimes \left(\sbra{0}_a \otimes \unit_a\right) \left[(S_a \otimes \unit_s)U_{\hat{H}_j}\right]\left(\sket{0}_a \otimes \unit_a\right)
    \\
    &= \sum_{j=0}^{M-1} \sbra{0}_{a_\Lambda}\left[\PREPARE^\dagger(\sket{j} \sbra{j})_{a_\Lambda}\PREPARE \right]\sket{0}_{a_\Lambda}  \hat{H}_j 
    \\
    &= \frac{1}{\|\vec{\beta}\|_1}\sum_{j,k,l=0}^{M-1} \sqrt{\beta_k \beta_\ell} \langle k | j\rangle \langle j | l\rangle \hat{H}_j 
    \\
    &= \frac{1}{\|\vec{\beta}\|_1}\sum_{j=0}^{M-1}\beta_j \hat{H}_j 
    \\
    &= \frac{\hat{H}}{\|\vec{\beta}\|_1}\,.
\end{split}
\end{equation}

The second relation also follows in a straightforward way:
\begin{equation}
\begin{split}
    (&\sbra{0}_{a_\Lambda} \otimes \bra{0}_a  \otimes \unit_{s}) S' U_{\hat{H}} S' U_{\hat{H}} \left(\sket{0}_{a_\Lambda} \otimes \ket{0}_a \otimes \unit_{s}\right) 
    \\
    &= (\sbra{0}_{a_\Lambda} \otimes \bra{0}_a  \otimes \unit_{s}) \left[\sum_{j=0}^{M-1} \left[\PREPARE^\dagger(\sket{j} \sbra{j})_{a_\Lambda}\PREPARE \right] \otimes \left[(S_a \otimes \unit_s)U_{\hat{H}_j}\right]\right]^2\left(\sket{0}_{a_\Lambda} \otimes \ket{0}_a \otimes \unit_{s}\right)
    \\
    &= (\sbra{0}_{a_\Lambda} \otimes \bra{0}_a  \otimes \unit_{s}) \left[\sum_{j=0}^{M-1} \left[\PREPARE^\dagger(\sket{j} \sbra{j})_{a_\Lambda}\PREPARE \right] \otimes \left[(S_a \otimes \unit_s)U_{\hat{H}_j}\right]^2\right]\left(\sket{0}_{a_\Lambda} \otimes \ket{0}_a \otimes \unit_{s}\right)
    \\
    &= \sum_{j=0}^{M-1}\sbra{0}_{a_\Lambda}\PREPARE^\dagger(\sket{j} \sbra{j})_{a_\Lambda}\PREPARE\sket{0}_{a_\Lambda} \left(\bra{0}_a  \otimes \unit_{s}) \left[(S_a \otimes \unit_s )U_{\hat{H}_j}\right]^2\left( \ket{0}_a \otimes \unit_{s}\right)\right)
    \\
    &= \frac{1}{\|\vec{\beta}\|_1}\sum_{j,k,l=0}^{M-1} \sqrt{\beta_k \beta_\ell} \langle k | j\rangle \langle j | l\rangle
    \\
    &= \frac{1}{\|\vec{\beta}\|_1}\sum_{j=0}^{M-1}\beta_j 
    \\
    &=  \unit_s.
\end{split}
\end{equation}
\end{proof}

\section{CNOT gate counts for scalar field theory simulation \label{app:cnot_count}}

In this appendix, we present the associated CNOT gate counts from the numerical studies conducted in~\cref{sec:scalar_field_theory_numerics}. All results are qualitatively similar to the rotation gate counts discussed in the main text.

Figure~\ref{fig:be_cost_indiv_terms_cnot} shows the CNOT gate counts, associated with the rotation gate counts in Fig.~\ref{fig:be_cost_indiv_terms}, for BEs of the local terms $\frac{1}{2}\hat\pi^2$, $\frac{m}{2}\hat\varphi^2+\frac{\lambda}{4!}\hat\varphi^4$, $\frac{1}{2}(\hat\varphi_1-\hat\varphi_2)$, and $g\cos(\hat\varphi)$.

Figure~\ref{fig:cnot_be_Hphi_two_site} shows the CNOT gate counts, associated with the rotation gate counts in Fig.~\ref{fig:rot_anc_be_Hphi_two_site}, for BEs of the two-site $\hat\varphi$ Hamiltonian in Eq.~\eqref{eq:Hphi_two_site}.

Figure~\ref{fig:single_site_cnot_vs_eps_t} shows the CNOT gate counts, associated with the rotation gate counts in Fig.~\ref{fig:single_site_rot_vs_eps_t}, for simulating time evolution of the single site Hamiltonian in Eq.~\eqref{eq:H_single_site} using a 2nd order PF, a 4th order PF, and GQSP where $W_H$ was constructed using \LOVELCU.

Figure~\ref{fig:two_site_cnot_vs_eps_t_nq3} shows the CNOT gate counts, associated with the rotation gate counts in Fig.~\ref{fig:two_site_rot_vs_eps_t}, for simulating time evolution of the two-site Hamiltonian in Eq.~\eqref{eq:H_single_site} using a 2nd order PF, a 4th order PF, and GQSP where $W_H$ was constructed using \LOVELCU.

\begin{figure*}[h!]
    \centering
    \includegraphics[width=0.48\textwidth]{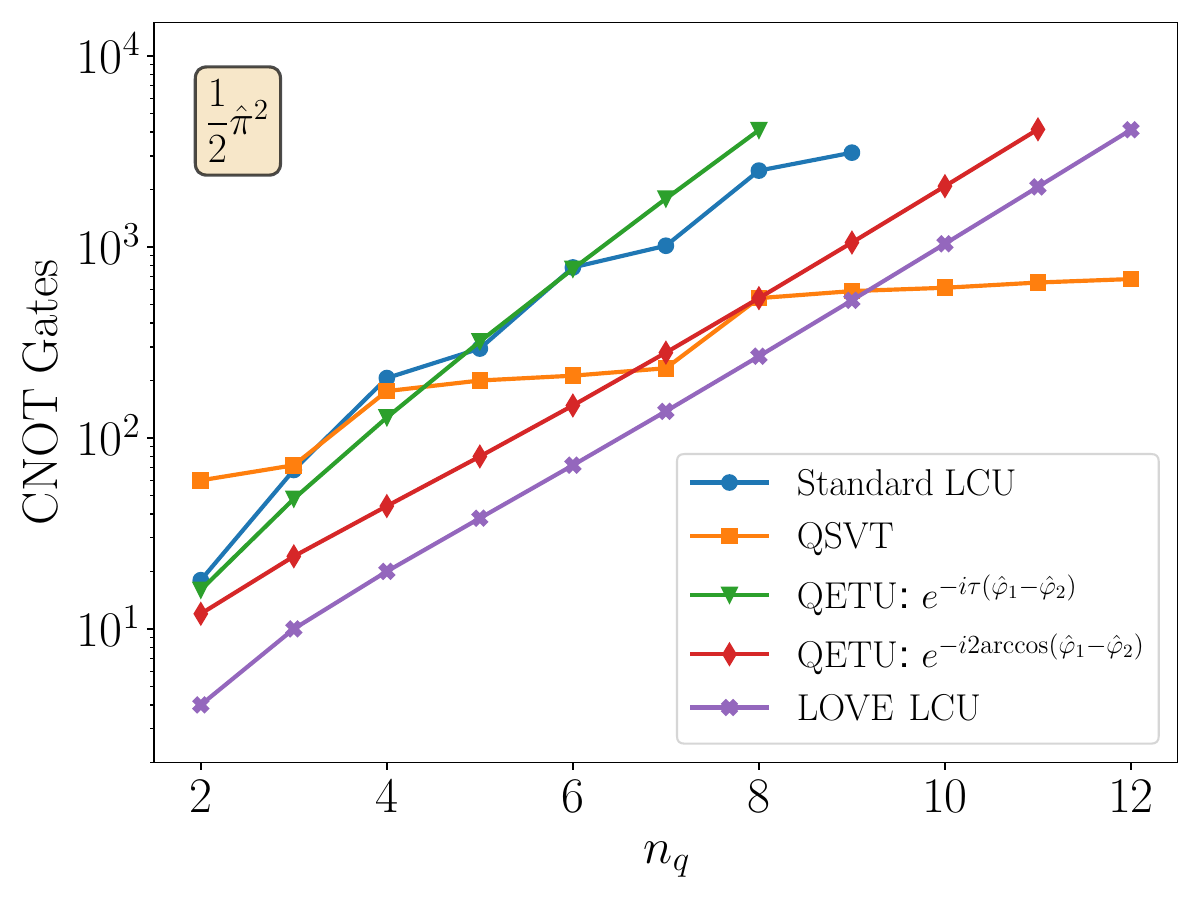}
    \includegraphics[width=0.48\textwidth]{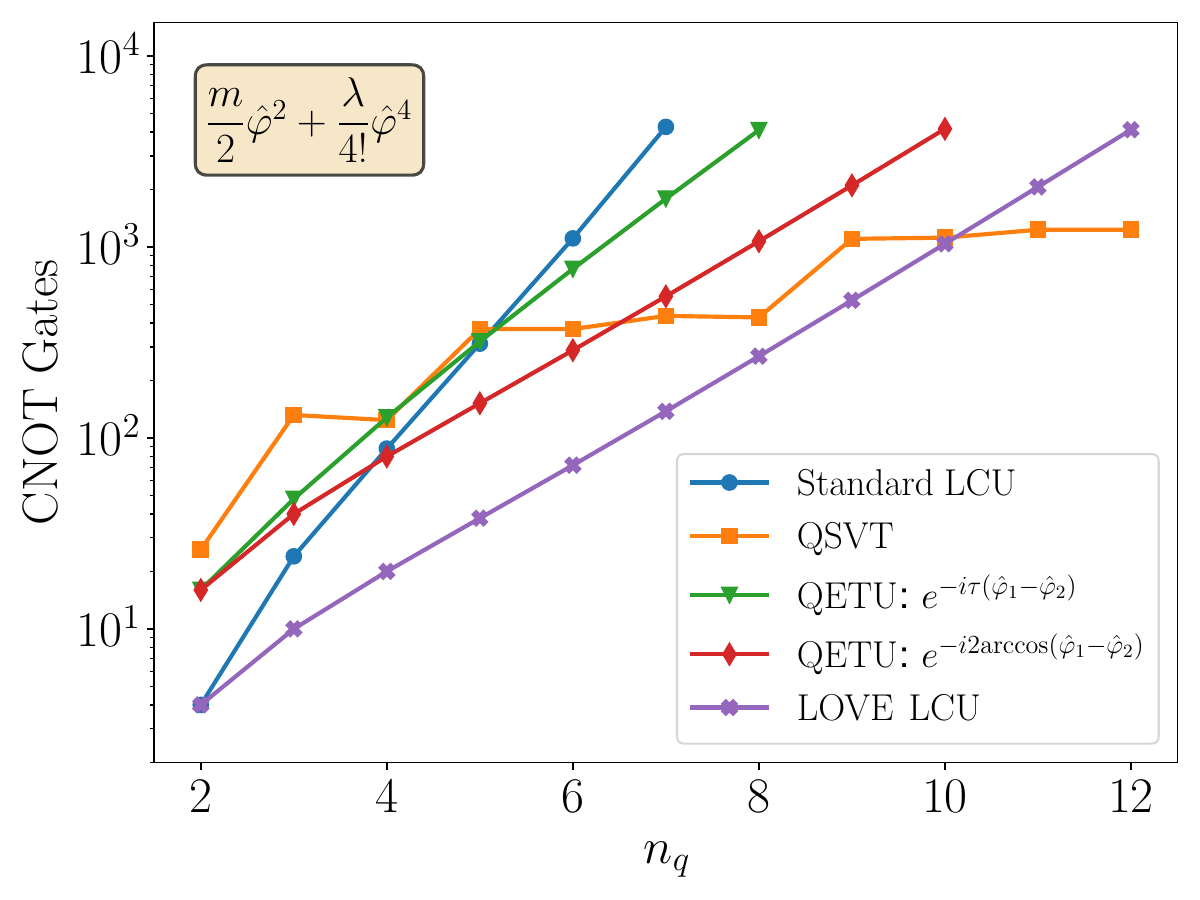}
    \includegraphics[width=0.48\textwidth]{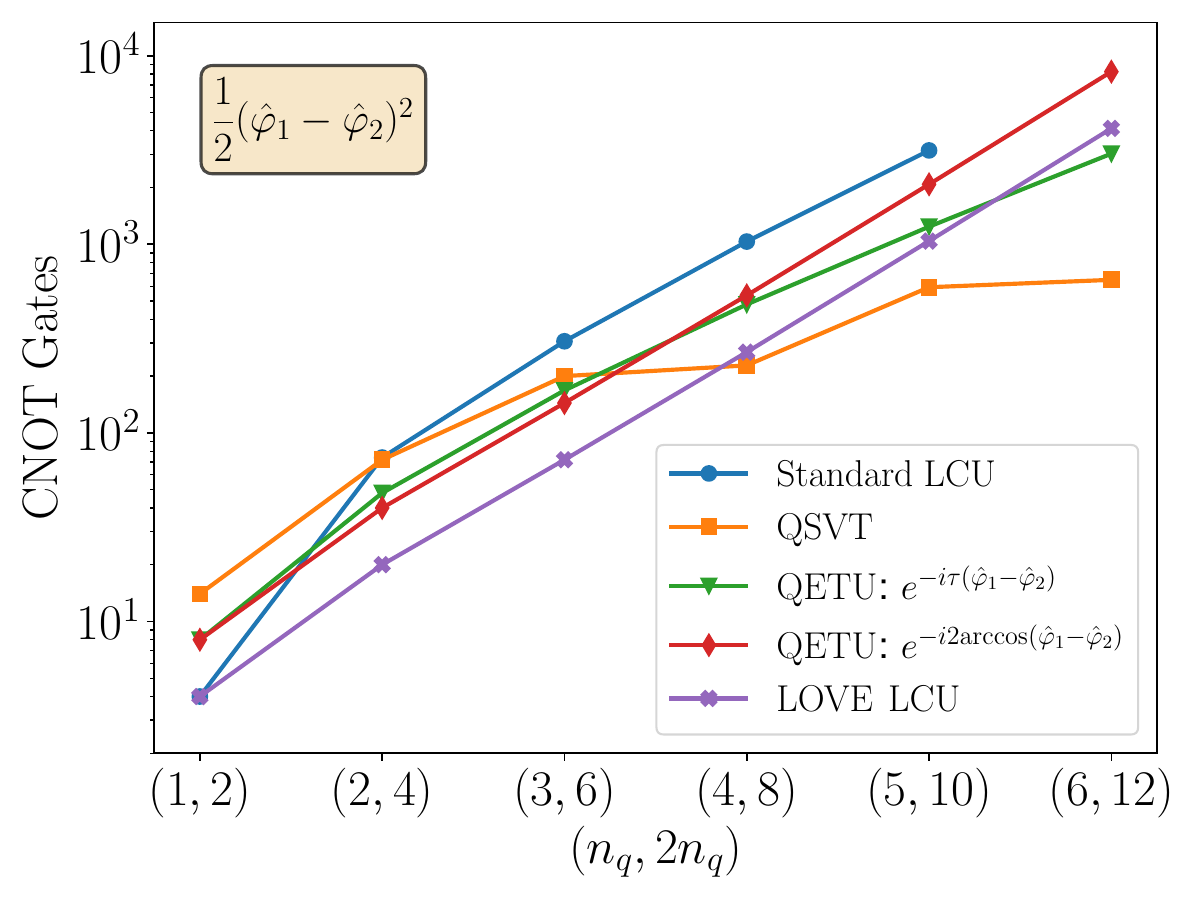}
    \includegraphics[width=0.48\textwidth]{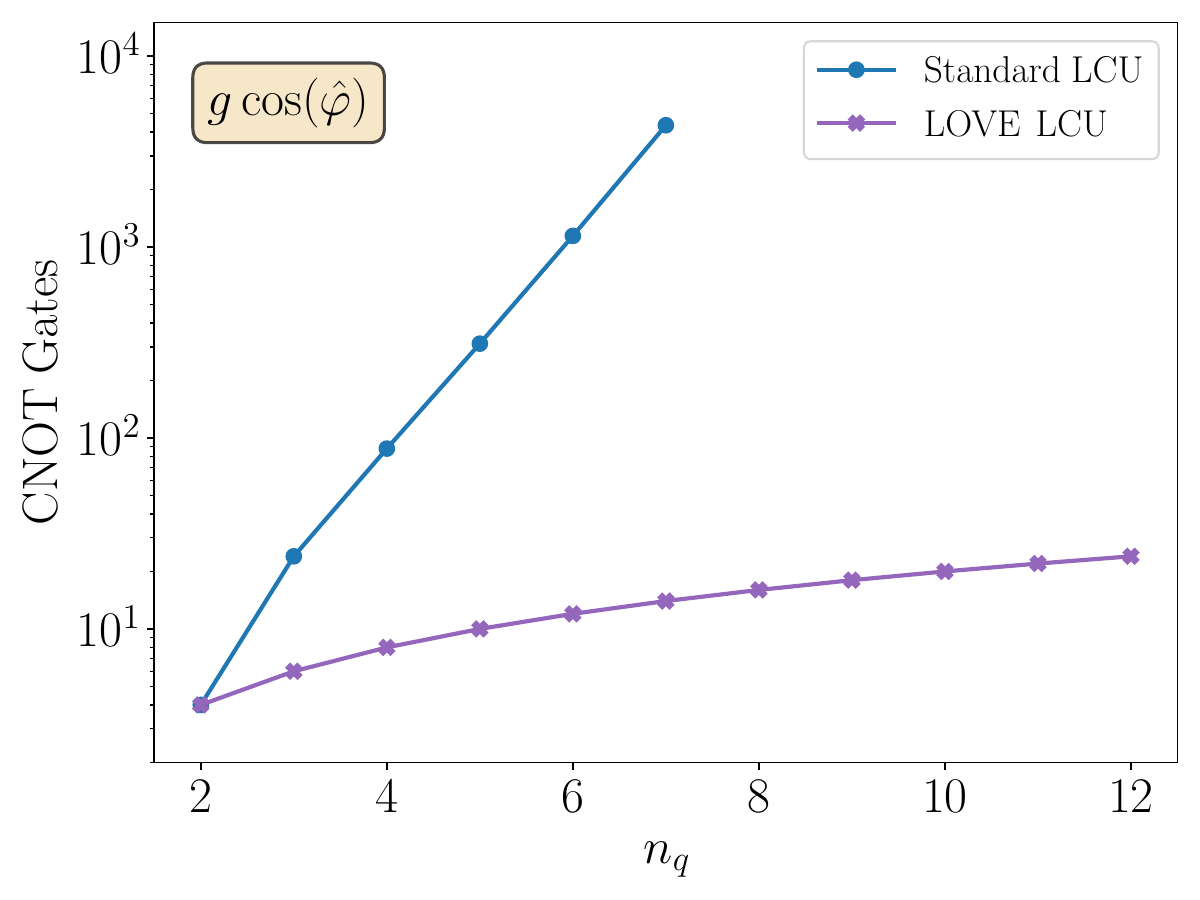}
    \caption{CNOT gate count and number of ancillary qubits required to block encode local bosonic operators. 
    The top left, top right, bottom left, and bottom right plots show resource requirements to block encode $\frac{1}{2}\hat\pi^2$, $\frac{m}{2}\hat\varphi^2+\frac{\lambda}{4!} \hat\varphi^4$, $\frac{1}{2}(\hat\varphi_1-\hat\varphi_2)^2, g\cos(\hat\varphi)$, respectively, for $m=1, \lambda=32, g=1$.
    Different colored and shaped data points correspond to different methods to prepare the BE.
    For the single-site operators $\hat\pi^2$ and $\frac{m}{2}\hat\varphi^2+\frac{\lambda}{4!} \hat\varphi^4$, \LOVELCU requires the fewest rotation gates for $\nq \lesssim 9$.
    For the two-site operator $\frac{1}{2}(\hat\varphi_1-\hat\varphi_2)^2$, \LOVELCU requires the fewest rotation gates for $\nq \lesssim 4$.
    \LOVELCU constructs a BE of $g\cos(\hat\varphi)$ using the fewest gates for all $\nq$.}
    \label{fig:be_cost_indiv_terms_cnot}
\end{figure*}

\begin{figure}[h!]
    \centering
    \includegraphics[width=0.5\textwidth]{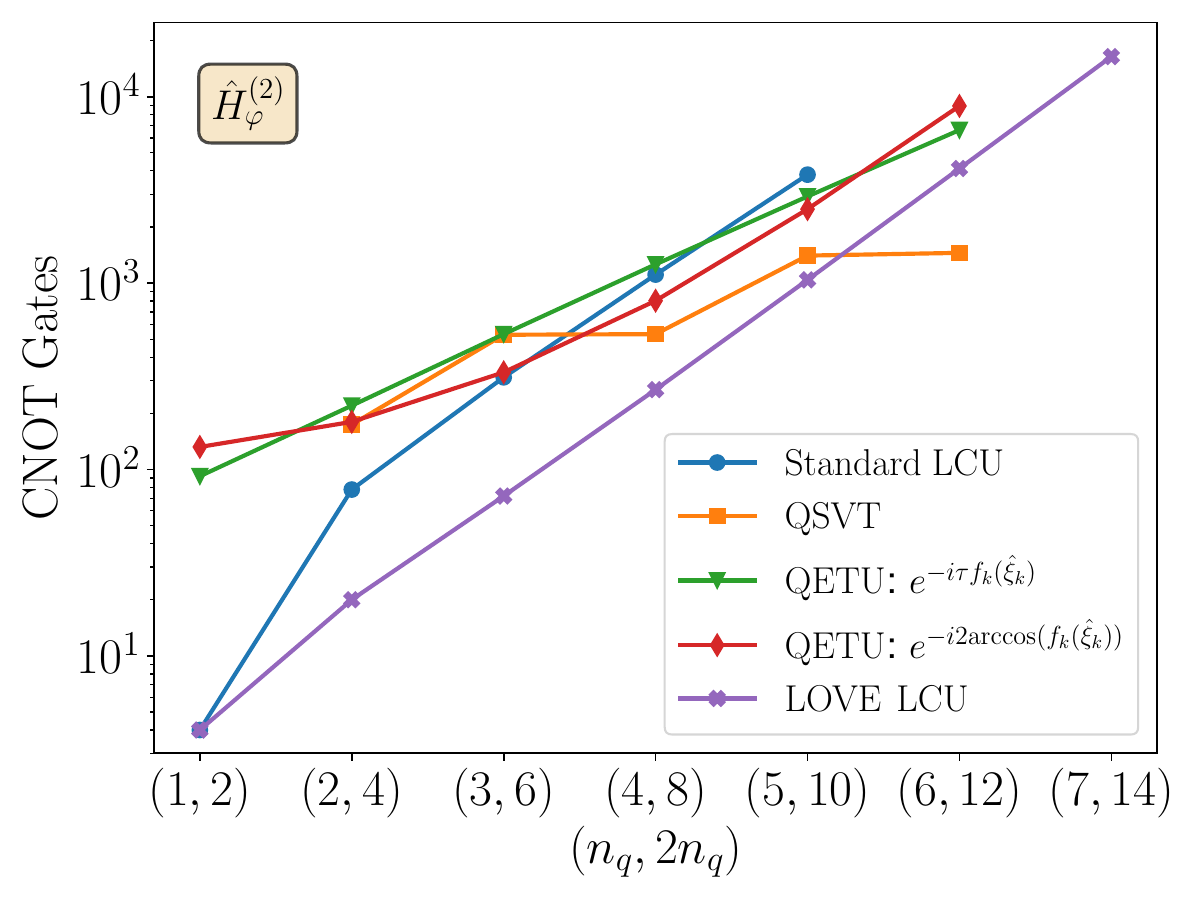}
    \caption{CNOT gate count required to block encode the two site Hamiltonian $\hat{H}^{(2)}_\varphi$ in Eq.~\eqref{eq:Hphi_two_site}.
    Different colored and shaped data points correspond to different methods to prepare the BE.
    Points with the same number of ancillary qubits have been shifted slightly for clarity.
    While QSVT- and QETU-based methods require using a final layer of LCU to add the local BEs $f_0^{(0)}(\hat\varphi_1), f_0^{(0)}(\hat\varphi_2), f_2(\hat\varphi_1-\hat\varphi_2)$, \LOVELCU does not.
    This advantage also leads to \LOVELCU outperforming all other methods for $\nq \lesssim 5$; for $\nq>5$, QSVT outperforms other methods due to the relative exponentially improved asymptotic scaling. }
    \label{fig:cnot_be_Hphi_two_site}
\end{figure}

\begin{figure*}[h!]
    \centering
    \includegraphics[width=0.6\textwidth]{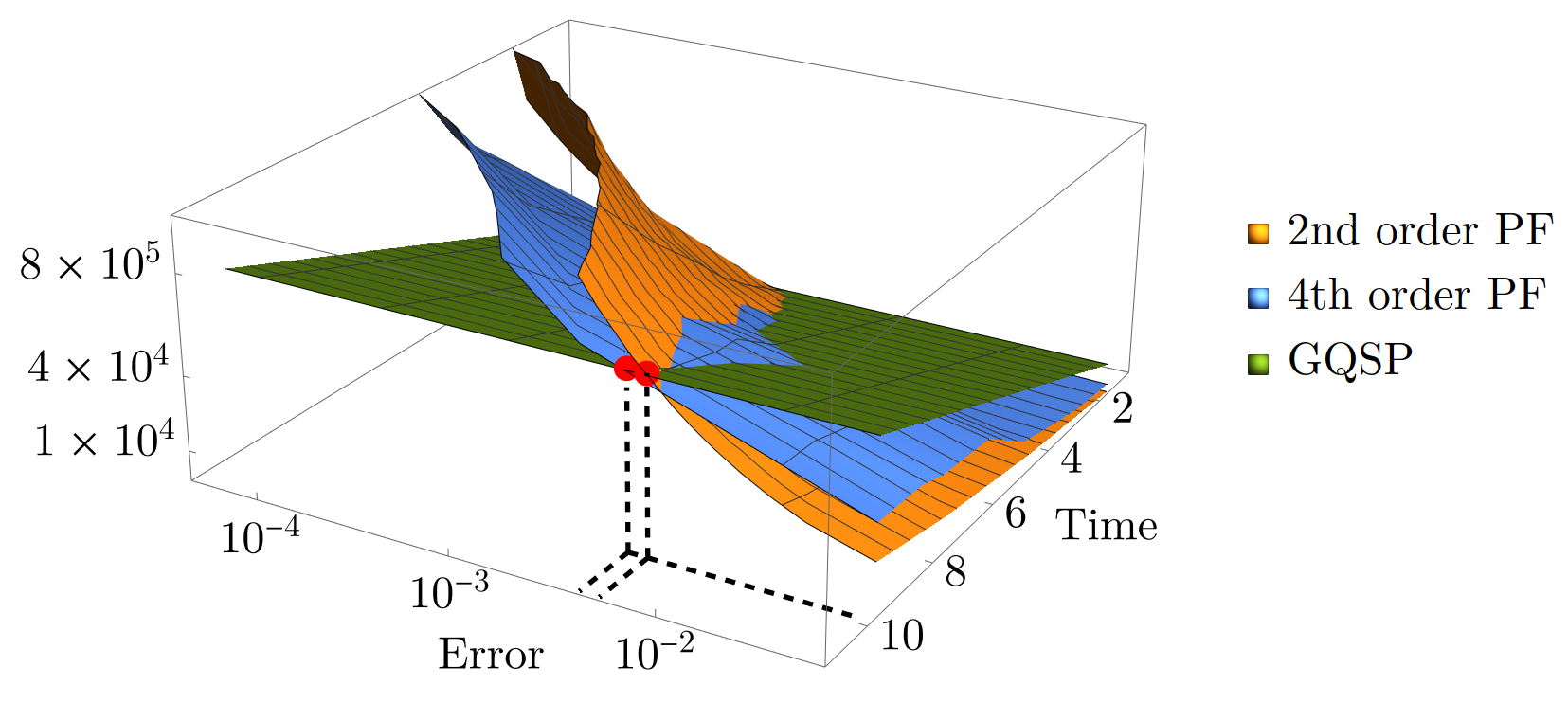}
    \caption{CNOT gate count as a function of error $\epsilon$ and simulation time $t$ for simulating time-evolution of the single-site and two-site Hamiltonian in Eq.~\eqref{eq:H_single_site}.
    The blue, orange, and green surfaces show results calculated using a $2^{\rm nd}$ order PF, a $4^{\rm th}$ order PF, and GQSP where the BE was constructing using \LOVELCU.
    For $t=10$, GQSP outperforms both $2^{\rm nd}$ and $4^{\rm th}$ order PFs for $\epsilon \lesssim 5\times 10^{-3}$.}
    \label{fig:single_site_cnot_vs_eps_t}
\end{figure*}

\begin{figure*}[h!]
    \centering
    \includegraphics[width=0.6\textwidth]{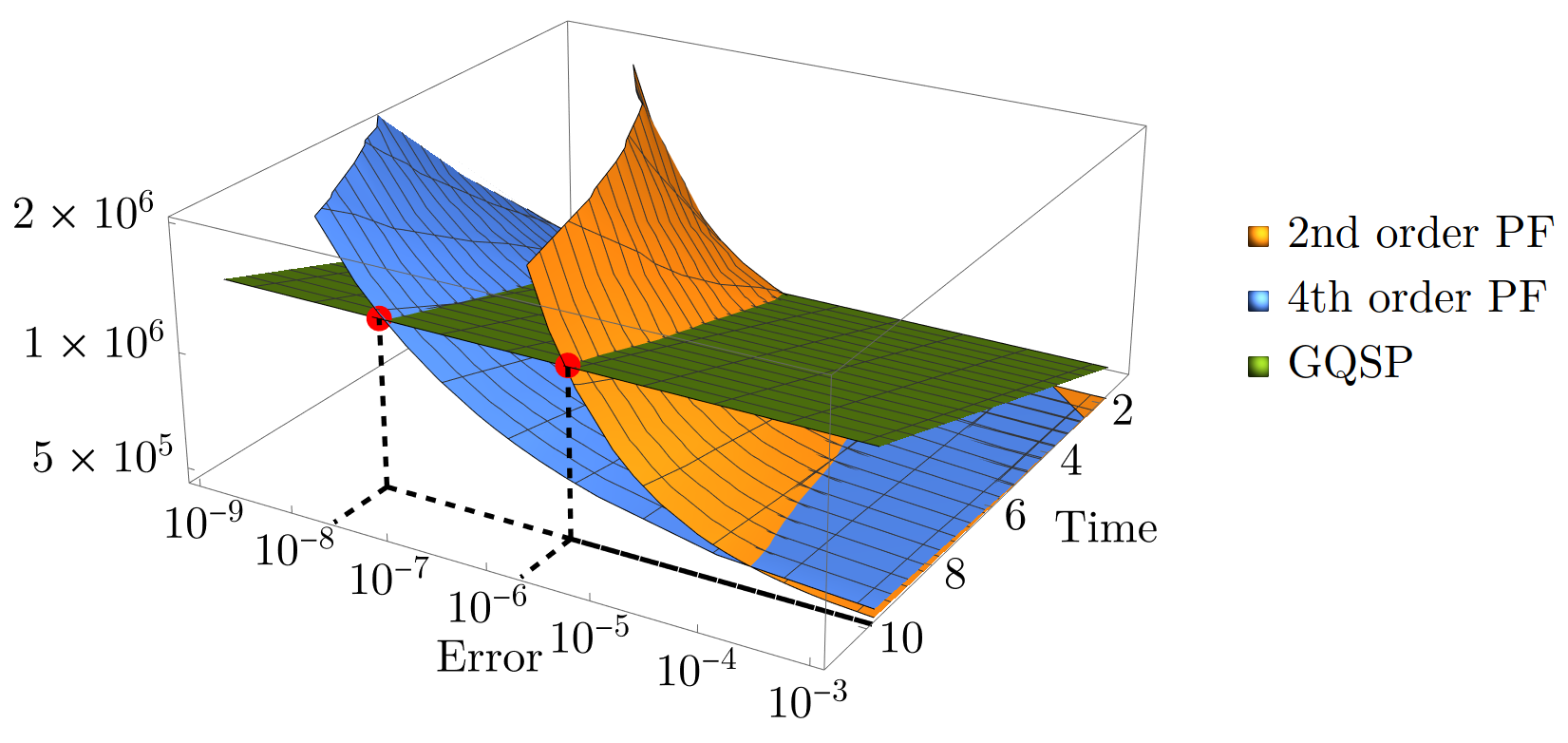}
    \caption{CNOT gate count as a function of error $\epsilon$ and simulation time $t$ for simulating time-evolution of the two-site Hamiltonian in Eq.~\eqref{eq:H_two_site}.
    The blue, orange, and green surfaces show results calculated using a $2^{\rm nd}$ order PF, a $4^{\rm th}$ order PF, and GQSP where the BE was constructing using \LOVELCU.
    For $t=10$, GQSP outperforms $2^{\rm nd}$ and $4^{\rm th}$ order PFs for $\epsilon \lesssim 1\times 10^{-6}$ and $\epsilon \lesssim 5\times 10^{-8}$.}
    \label{fig:two_site_cnot_vs_eps_t_nq3}
\end{figure*}

\FloatBarrier
\section{Scale factor for Block Encodings constructed using LCU}
\label{app:lcu_scale_factor}

In this appendix, we derive the scale factors for block encodings of $f_k(\hat{\xi}_k)$, given in \cref{eq:local_f_scalar_ft} in the main text, when prepared using standard LCU techniques.

To begin, we consider the scale factor for $\hat \varphi^d$; the result is analogous for $\bigl(\hat \pi^{(D)} \bigr)^d$.
Using \cref{eq:pauli_decomp_phi} we have
\begin{equation}
\begin{split}
    \hat \varphi^d &= (-\varphi_{\rm max})^d \left(\frac{2^{\nq-1}}{2^{\nq}-1}\right)^d \left(\sum_{m=0}^{\nq-1} 2^{-m} Z_m \right)^d
    \\
    &= (-\varphi_{\rm max})^d \left(\frac{2^{\nq-1}}{2^{\nq}-1}\right)^d \sum_{m_1, m_2, \dots, m_d=0}^{\nq-1} 2^{-(m_1+m_2+\dots + m_d)} Z_{m_1} Z_{m_2} \dots Z_{m_d}\,.
\end{split}
\end{equation}
Since $Z_{m_j}^2 = \unit$, the sum over $m_j$ includes multiple instances of the same Pauli string.
Constructing a BE using LCU requires grouping these repeated terms and determining the total coefficient for each unique Pauli string.
However, because the coefficient of each Pauli string in the sum has the same sign, no cancellations in the final values of the coefficients occur.
This fact implies that, even if there are repeated Pauli strings in the sum, the scale factor can be found by simply adding the coefficients of each Pauli string in the sum.
Using this fact, the scale factor is given by
\begin{equation}
\begin{split}
    \beta_{\hat\varphi^d} &=  \varphi_{\rm max}^d \left(\frac{2^{\nq-1}}{2^{\nq}-1}\right)^d \sum_{m_1, m_2, \dots, m_d=0}^{\nq-1} 2^{-(m_1+m_2+\dots + m_d)} 
    \\
    &= \varphi_{\rm max}^d \left(\frac{2^{\nq-1}}{2^{\nq}-1}\right)^d \left(\sum_{m=0}^{\nq-1} 2^{-m}\right)^d
    \\
    &= \varphi_{\rm max}^d \left(\frac{2^{\nq-1}}{2^{\nq}-1}\right)^d \left(\frac{2^{\nq}-1}{2^{\nq-1}}\right)^d
    \\
    &= \varphi_{\rm max}^d\,,
\end{split}
\end{equation}
where, going from the second to the third line, we used the identity $\sum_{m=0}^{\nq-1} 2^{-m} = (2^{\nq}-1)/2^{\nq-1}$.
From this result we can directly see the scale factor for $f_1(\hat \pi^{(D)}) = \frac{1}{2}\bigl(\hat \pi^{(D)}\bigr)^2$ is
\begin{equation}
    \beta_{f_1(\hat \pi^{(D)})} = \frac{1}{2} \pi_{\rm max}^2\,,
\end{equation}
which is the smallest possible scale factor to BE this operator.

Next, we turn to $f^{(1)}_0(\hat \varphi) = \frac{m^2}{2}\hat \varphi^2 + \frac{\lambda}{4!}\hat \varphi^4$, whose Pauli decomposition is
\begin{equation}
\begin{split}
    f^{(1)}_0(\hat \varphi) &= \frac{m^2}{2}\hat \varphi^2 + \frac{\lambda}{4!}\hat \varphi^4
    \\
    &= (-\varphi_{\rm max})^2 \left(\frac{2^{\nq-1}}{2^{\nq}-1}\right)^2\frac{m^2}{2} \sum_{m_1,m_2=0}^{\nq-1}2^{-(m_1+m_2)} Z_{m_1} Z_{m_2} + 
    \\
    &\hspace{0.5in} +(-\varphi_{\rm max})^4 \left(\frac{2^{\nq-1}}{2^{\nq}-1}\right)^4\frac{\lambda}{4!} \sum_{l_1,l_2,l_3,l_4=0}^{\nq-1}2^{-(l_1+l_2+l_3+l_4)} Z_{l_1} Z_{l_2} Z_{l_3} Z_{l_4}\,.
\end{split}
\end{equation}
Assuming $\lambda > 0$, we see that, similar to the previous case, the coefficient of each Pauli string in the decomposition of $f(\hat \varphi)$ has the same sign, and the scale factor is simply the sum of the coefficients
\begin{equation}
\begin{split}
    \beta_{f_0^{(1)}(\hat \varphi)} &= \frac{m^2}{2}\varphi_{\rm max}^2\left(\frac{2^{\nq-1}}{2^{\nq}-1}\right)^2 \left(\sum_{m=0}^{\nq-1}2^{-m}\right)^2 + \frac{\lambda}{4!}\varphi_{\rm max}^4\left(\frac{2^{\nq-1}}{2^{\nq}-1}\right)^4 \left(\sum_{m=0}^{\nq-1}2^{-m}\right)^4
    \\
    &= \frac{m^2}{2}\varphi_{\rm max}^2 + \frac{\lambda}{4!}\varphi_{\rm max}^4\,,
\end{split}
\end{equation}
which saturates the lower bound given by the operator norm.

The situation is more complicated for $f^{(2)}_0(\hat \varphi) = g \cos(\hat \varphi)$ as its Pauli decomposition consists of terms with opposite signs, requiring careful accounting for cancellations.
For this reason, we study the scale factor numerically.
Figure~\ref{fig:scale_factor_cos} presents the scale factor of the BE of $\cos(\hat \varphi)$ constructed using both standard LCU and \LOVELCU.
For this comparison, we set $\varphi_{\rm max} = \pi$.
With \LOVELCU, the scale factor approaches the optimal value of 1, whereas with standard LCU, it remains generally larger and asymptotes to $\sim 1.3$.
\begin{figure}
    \centering
    \includegraphics[width=0.5\linewidth]{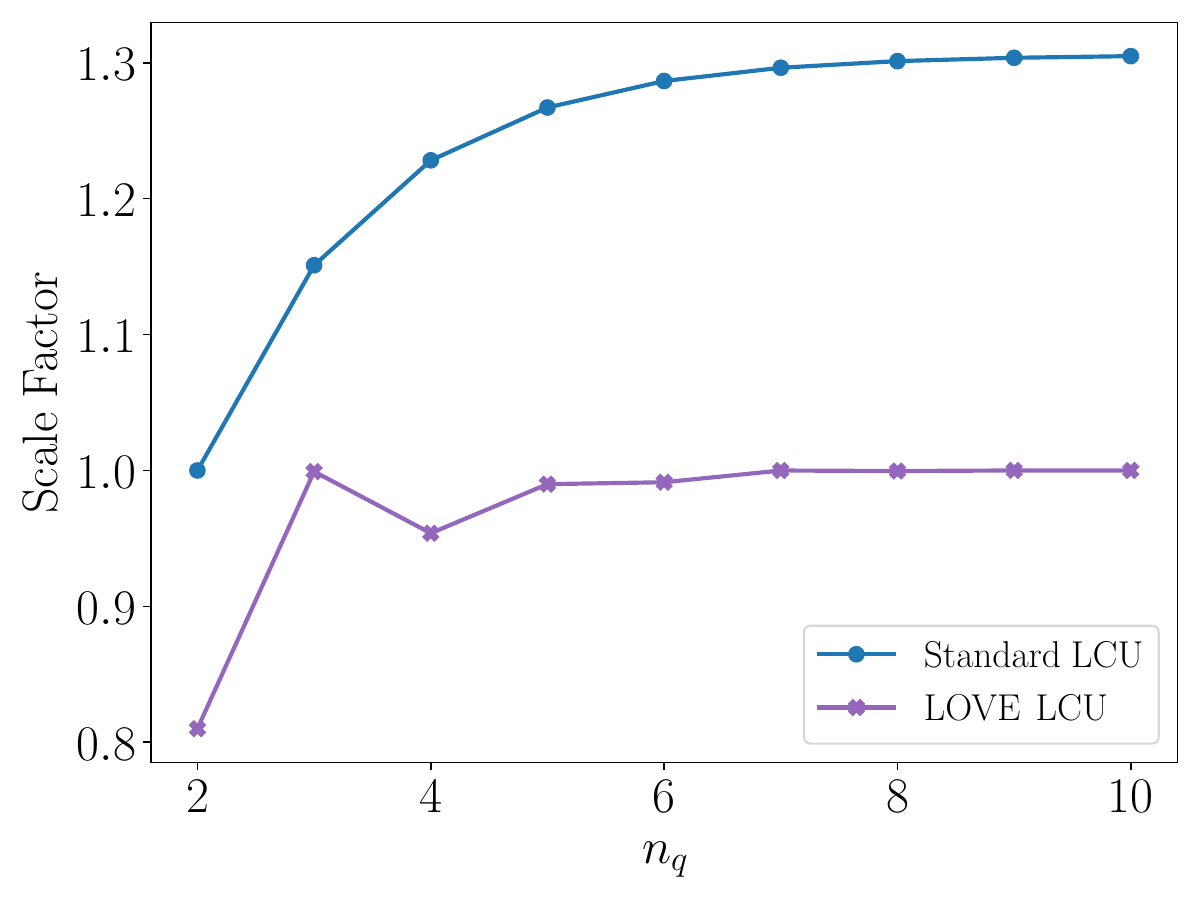}
    \caption{Scale factor of the BE of $\cos(\hat \varphi)$ as a function of $\nq$.
    The blue circles and purple crosses correspond to using standard LCU and \LOVELCU to construct the BE, respectively.
    The value of $\varphi_{\rm max}$ was set to $\varphi_{\rm max}=\pi$ for all values of $\nq$.}
    \label{fig:scale_factor_cos}
\end{figure}

Lastly, we consider $f_2(\hat{\varphi}_1 - \hat{\varphi}_2) = \frac{1}{2} (\hat{\varphi}_1 - \hat{\varphi}_2)^2$.  
To study the scale factor, we first rewrite it as $f_2(\hat{\varphi}_1 - \hat{\varphi}_2) = \frac{1}{2} (\hat{\varphi}_1^2 + \hat{\varphi}_2^2 - 2 \hat{\varphi}_1 \hat{\varphi}_2)$.
While this operator is a sum of terms with different signs, this does not affect the determination of the scale factor since the terms with differing signs have no Pauli strings in common.  
This can be verified by directly examining the decomposition of $f_2(\hat{\varphi}_1 - \hat{\varphi}_2)$.  
If we assign $\hat{\varphi}_1$ to act on qubits $0, 1, \dots, \nq - 1$ and $\hat{\varphi}_2$ to act on qubits $\nq, \nq + 1, \dots, 2\nq - 1$, the decomposition is
\begin{align}
    f_2(\hat \varphi_1-\hat\varphi_2) &= \frac{1}{2}(\hat \varphi_1^2 +\hat\varphi_2^2 - 2 \hat\varphi_1 \hat \varphi_2)
    \\
    &= \frac{1}{2}\varphi_{\rm max}^2\left(\frac{2^{\nq-1}}{2^{\nq}-1}\right)^2 \nonumber
    \\
    &\hspace{0.2in} \times \Big(\sum_{m_1,m_2=0}^{\nq-1} 2^{-(m_1+m_2)} Z_{m_1}Z_{m_2} + \sum_{l_1,l_2=0}^{\nq-1} 2^{-(l_1+l_2)} Z_{\nq+l_1}Z_{\nq+l_2} - 2\sum_{m,l=0}^{\nq-1} 2^{-(m+l)} Z_m Z_{\nq+l}\Big) \nonumber\,.
\end{align}
From this expression, we see that all Pauli strings with opposite signs are distinct. 
Thus, the scale factor can be determined by summing the coefficients.
Doing so gives
\begin{equation}
\begin{split}
    \beta_{f_2(\hat \varphi_1-\hat\varphi_2)} &= \frac{1}{2}\varphi_{\rm max}^2\left(\frac{2^{\nq-1}}{2^{\nq}-1}\right)^2 \Big(\sum_{m_1,m_2=0}^{\nq-1} 2^{-(m_1+m_2)} + \sum_{l_1,l_2=0}^{\nq-1} 2^{-(l_1+l_2)} + 2\sum_{m,l=0}^{\nq-1} 2^{-(m+l)} \Big)
    \\
    &= \frac{1}{2}\varphi_{\rm max}^2 \left(\frac{2^{\nq-1}}{2^{\nq}-1}\right)^2 4 \Big(\sum_{m=0}^{\nq-1} 2^{-m}\Big)^2
    \\
    &= 2\varphi_{\rm max}^2.
\end{split}
\end{equation}
Since the operator $\hat\varphi$ is sampled symmetrically from $-\varphi_{\rm max}$ to $\varphi_{\rm max}$, this scale factor attains its smallest possible value.
\end{document}